\theoremstyle{plain}
\newtheorem{theorem}{Theorem}
\newtheorem{claim}[theorem]{Claim}
\newtheorem*{theorem*}{Theorem}
\newtheorem{lemma}[theorem]{Lemma}
\newtheorem*{lemma*}{Lemma}
\newtheorem{corollary}[theorem]{Corollary}
\newtheorem{observation}[theorem]{Observation}
\newtheorem*{observation*}{Observation}
\newtheorem*{conjecture*}{Conjecture}
\newtheorem{reduction}{Reduction Rule}
\theoremstyle{definition}
\newclass{\Hard}{hard}
\newclass{\Hness}{hardness}
\newcommand{\NPH}{\NP\text{-}\Hard}
\newcommand{\NPHness}{\NP\text{-}\Hness}
\newclass{\para}{para}
\newclass{\Complete}{complete}
\newclass{\Cness}{completeness}
\newfunc{\dist}{dist}
\newfunc{\girth}{girth}
\newfunc{\nd}{nd}
\newfunc{\YES}{YES}
\newfunc{\NOi}{NO}
\newfunc{\ff}{ff}
\newfunc{\dc}{dc}
\newfunc{\dcc}{d\overline{c}}
\newfunc{\ml}{ml}
\newfunc{\cf}{cf}
\newfunc{\tw}{tw}
\newfunc{\ext}{ext}
\newfunc{\pat}{pat}
\newfunc{\parts}{parts}
\newfunc{\rev}{rev}
\newfunc{\ttwins}{t-twins}
\newfunc{\ftwins}{f-twins}
\newcommand{\probl}[3]{
  \begin{flushleft}
    \fbox{
      \begin{minipage}{0.95\linewidth}
        \noindent {\pname{#1}}\\
        {\bf Instance:} #2\\
        {\bf Question:} #3
      \end{minipage}}
    \medskip
  \end{flushleft}
}
\newcommand{\pname}[1]{\textsc{#1}}
\newcommand{\bigO}[1]{{\mathcal{O}\!\left(#1\right)}}
\newcommand{\floor}[1]{\left\lfloor#1\right\rfloor}
\newcommand{\inners}{1.2pt}
\newcommand{\outers}{1pt}
\newcommand{\tp}{{t(P)}}
\newenvironment{cproof}[1]{\par \noindent \textit{Proof of Claim~#1}.}{\hfill $\blacksquare$\newline}
\title{Parameterized algorithms for locating-dominating sets}
\date{}
\author[1]{M\'arcia R. Cappelle}
\author[2]{Guilherme C. M. Gomes}
\author[2]{Vinicius F. dos Santos}
\affil[1]{Instituto de Inform\'atica, Universidade Federal de Goi\'as -- Goi\^ania, Brazil}
\affil[2]{Departamento de Ci\^encia da Computa\c{c}\~{a}o, Universidade Federal de Minas Gerais -- Belo Horizonte, Brazil}
\begin{document}

\maketitle

\begin{abstract}
    A locating-dominating set $D$ of a graph $G$ is a dominating set of $G$ where each vertex not in $D$ has a unique neighborhood in $D$, and the \pname{Locating-Dominating Set} problem asks if $G$ contains such a dominating set of bounded size.
    This problem is known to be \NPH\ even on restricted graph classes, such as interval graphs, split graphs, and planar bipartite subcubic graphs.
    On the other hand, it is known to be solvable in polynomial time for some graph classes, such as trees and, more generally, graphs of bounded cliquewidth.
    While these results have numerous implications on the parameterized complexity of the problem, little is known in terms of kernelization under structural parameterizations.
    In this work, we begin filling this gap in the literature.
    Our first result shows that \pname{Locating-Dominating Set}, when parameterized by the solution size $d$, admits no $2^{o(d \log d)}$ time algorithm unless the Exponential Time Hypothesis fails; as a corollary, we also show that no $n^{o(d)}$ time algorithm exists under ETH, implying that the naive \XP\ algorithm is essentially optimal.
    We present an exponential kernel for the distance to cluster parameterization and show that, unless $\NP \subseteq \coNP/\poly$, no polynomial kernel exists for \pname{Locating-Dominating Set} when parameterized by vertex cover nor when parameterized by distance to clique.
    We then turn our attention to parameters not bounded by neither of the previous two, and exhibit a linear kernel when parameterizing by the max leaf number; in this context, we leave the parameterization by feedback edge set as the primary open problem in our study.
\end{abstract}

%------------------------------------------------------------------

\section{Introduction}
\label{sec:int}
%------------------------------------------------------------------

A dominating set in a graph $G$ is a set $D$ of vertices of $G$ such that every vertex outside $D$ is adjacent to a vertex in $D$. A locating–dominating  set is a dominating set $D$
that locates/distinguishes all the vertices in the sense that every vertex not in $D$ has a unique neighborhood in $D$.
Formally, we define the problem as follows:

\probl{Locating-Dominating Set}{A graph $G$ and an integer $d$.}{Does $G$ admit a locating-dominating set of size at most $d$?}

This problem was introduced and first studied by Slater~\cite{slater1987domination}, and since then has been extensively studied in the literature~\cite{foucaud2015,foucaud2017-i,hernando2019,junnila2019,rall1984}.
It was proven to be NP-complete for
bipartite graphs by Charon et al.~\cite{charon2003} and 
interval graphs and permutation graphs by Foucaud et al.~\cite{foucaud2017-i}. 
These results were later extended for 
interval graphs of diameter 2 and permutation graphs of diameter 2 in~\cite{foucaud2017-ii}, and to subcubic planar bipartite graphs in~\cite{foucaud2015}.
On the other hand, it is linear-time solvable for trees~\cite{auger2010minimal,slater1987domination} and, more generally, for graphs of bounded clique-width; the latter result is obtained through a direct application of Courcelle’s theorem \cite{courcelle1990}.
In terms of approximation, \pname{Locating–Dominating Set} can be approximated to a logarithmic factor, but no sub-logarithmic approximation algorithm exists under standard complexity hypotheses.
These inapproximability results were extended to bipartite graphs, split graphs and co-bipartite graphs \cite{foucaud2015}.
We refer to \cite{lobstein2012watching} for an on-line bibliography on this topic and related notions.
In terms of parameterized complexity, these hardness results imply that the problem is \para\NPH\ when parameterized by diameter, distance to interval, distance to split, maximum degree, genus, mim-width~\cite{mimwidth_interval}, twin-width~\cite{twinwidth1}, and minimum clique cover; moreover, parameterizing by the natural parameter $d$ (the size of the solution) would be trivially \FPT\ since $d \in \Omega(\log n)$.
On the other hand, the linear-time algorithm for graphs of bounded cliquewidth implies that the problem is fixed-parameter tractable under several parameterizations such as cliquewidth, distance to cograph, treewidth, feedback vertex set, max leaf number, and vertex cover.
However, to the best of our knowledge, no kernelization results for structural parameterizations of \pname{Locating-Dominating Set} have been presented so far.

\medskip

\noindent {\it Our contributions.}
In this work we seek to partially fill the space currently present in the parameterized complexity literature.
We begin by showing that, unless the Exponential Time Hypothesis~\cite{eth} fails, \pname{Locating-Dominating Set} does not admit algorithms that run in $2^{o(d \log d)}$ nor in $n^{o(d)}$, which implies that the naive \XP\ algorithm for the natural parameter is essentially optimal.
We then proceed in full to kernelization, by first showing an exponential kernel for the distance to cluster parameter, and then proving that, unless $\NP \subseteq \coNP/\poly$, no polynomial kernel exists when jointly parameterizing by vertex cover and maximum size of the solution, nor when parameterizing by distance to clique and maximum size of the solution.
On the positive side, we present a linear kernel for the max leaf number parameterization.
In our concluding remarks we propose the further study of the feedback edge set parameter, among other open problems.
\medskip

\noindent {\it Notation and terminology.} 
We denote by $[n]$ the set $\{1,\ldots,n\}$.
All considered graphs are finite and simple.
The \emph{open neighborhood} of a vertex $v\in V(G)$ is $N_G(v) = \{u \in V(G) \mid vu \in E(G)\}$, and its \emph{closed neighborhood} is $N_G[v] = N_G(v) \cup \{v\}$. If the context is clear, we simply write $N(v)$ or $N[v]$. The open neighborhood of $S\subseteq V(G)$ is $N_G(S)=\bigcup_{v \in S}N_G(v)$ and its  closed neighborhood is $N_G[S]=N_G(S)\cup S$.
In an abuse of notation, given $v \in V(G)$ and $S \subseteq V(G)$, we define $N_S(v) = N(v) \cap S$ and $N_S[v] = N[v] \cap S$.
Two distinct vertices $u$ and $v$ of a graph $G$ are {\it false twins} if $N_G(u) = N_G(v)$ and {\it true twins} if  $N_G[u] = N_G[v]$.
Being a false (true) twin is an equivalence relation $\mathcal{R}$, i.e. if $u$ is a false (true) twin of $v$ and $v$ is a false (true) twin of $w$, $u$ is also a false (true) twin of $w$; if $u,v,w$ belong to the same equivalence class of $\mathcal{R}$, we say that they are \textit{mutually twins}.
A graph is \textit{twin-free} if each equivalence class of $\mathcal{R}$ is a singleton.
A graph is said to be a \textit{cluster graph} if each of its connected components is a clique.
The \textit{max leaf number} of a connected graph $G$, denoted by $\ml(G)$, is equal to the maximum number of leaves that a spanning tree of $G$ may have.
As defined in~\cite{cai_split}, a set $U \subseteq V(G)$ is an $\mathcal{F}$-\textit{modulator} of $G$ if $G \setminus U$ belongs to the graph class $\mathcal{F}$, and the \textit{distance to $\mathcal{F}$} of a graph $G$ is the size of a minimum $U$ that is an $\mathcal{F}$-modulator.
A set $S$ is said to be \emph{dominating} if $N_G[S] = V(G)$, while a vertex $v \in V(G)$ is \emph{dominated} by $S$ if $N_S[v] \neq \emptyset$.
The set $S$ is a \emph{locating-dominating set} if it is a dominating set and for every pair of distinct vertices $u,v \in V(G) \setminus S$, $N_S(u) \neq N_S(v)$.
In this case, $u$ and $v$ are said to be \emph{distinguished} by $S$; if $S$ does not distinguish $u$ and $v$, we say that $u,v$ are \emph{confounded} by $S$.

We refer the reader to~\cite{cygan_parameterized} for basic background on parameterized complexity, and recall here only some basic definitions.
A \emph{parameterized problem} is a language $L \subseteq \Sigma^* \times \mathbb{N}$. 
For an instance $I=(x,q) \in \Sigma^* \times \mathbb{N}$, $q$ is called the \emph{parameter}. 
A parameterized problem is \emph{fixed-parameter tractable} (\FPT) if there exists an algorithm $\mathcal{A}$, a computable function $f$, and a constant $c$ such that given an instance $(x,q)$, $\mathcal{A}$ correctly decides whether $I \in L$ in time bounded by $f(q) \cdot |I|^c$; in this case, $\mathcal{A}$ is called an \emph{\FPT\ algorithm}.
A fundamental concept in parameterized complexity is that of \emph{kernelization}; see~\cite{book_kernels} for a recent book on the topic.
A kernelization	algorithm, or just \emph{kernel}, for a parameterized problem $\Pi$ takes an instance~$(x,q)$ of the problem and, in time polynomial in $|x| + q$, outputs an instance~$(x',q')$ such that $|x'|, q' \leqslant g(q)$ for some function~$g$, and $(x,q) \in \Pi$ if and only if $(x',q') \in \Pi$.
Function~$g$ is called the \emph{size} of the kernel and may be viewed as a measure of the ``compressibility'' of a problem using polynomial-time pre-processing rules.
A kernel is called \emph{polynomial} (resp. \emph{quadratic, linear}) if $g(q)$ is a polynomial (resp. quadratic, linear) function in $q$.
A breakthrough result of Bodlaender et al.~\cite{distillation} gave the first framework for proving that some parameterized problems do not admit polynomial kernels, by establishing so-called \emph{composition algorithms}.
Together with a result of Fortnow and Santhanam~\cite{fortnow_santh}, this allows to exclude polynomial kernels under the assumption that $\NP \nsubseteq \coNP/\poly$, otherwise implying a collapse of the polynomial hierarchy to its third level~\cite{uniform_non_uniform}.
\section{Lower bounds for the natural parameter}

While the $d \in \Omega(\log n)$ bound trivially implies that \pname{Locating-Dominating Set} is in \FPT\ when parameterized by $d$, it tells us nothing about the complexity of such an algorithm.
In fact, it could have been the case that a $2^\bigO{d}$ algorithm existed and the hard instances were those where $d \in \omega(\log n)$.
Together with the Sparsification Lemma~\cite{sparsification}, Charon et al.'s~\cite{charon2003} \NPHness\ reduction from \pname{3-SAT} already implied that no such algorithm exists unless the Exponential Time Hypothesis~\cite{eth} fails.
In this section, we go a step further and prove that, unless ETH fails, there is no $2^{o(d\log d)}$ time algorithm for \pname{Locating-Dominating Set}.
As a corollary of our proof, we also prove that no $n^o{d}$ algorithm exists under the same hypothesis.
Our reduction is from the following problem, which does not admit a $2^{o(k\log k)}$ time algorithm~\cite{perm_clique} under ETH:

\probl{$k \times k$ Permutation Clique}{A graph $G$ with vertices labeled $(i,j)$ for every $i,j \in [k]$.}{Is there a permutation $\pi$ of $[k]$ such that $\{(i, \pi(i)) \mid i \in [k]\}$ is a clique of $G$?}

Another way to visualize our source problem is as follows: the vertices of $G$ are arranged in a $k \times k$ grid, and our goal is to find a set of $k$ vertices that induces a clique of $G$ such that we only pick one vertex of each row and each column. W.l.o.g. we assume that each row and column of $G$ is an independent set of $G$.
Throughout this section, we assume that $G$ is the input graph to \pname{$k \times k$ Permutation Clique}, $k$ is the width of the board where $G$ is embedded, and $(H,d)$ is the \pname{Locating-Dominating Set} instance we build.

\smallskip
\noindent\textbf{Construction.} We begin by adding to $H$ a vertex $v_{i,j}$ for each $(i,j) \in V(G)$; let $R(i) = \{v_{i, j} \mid j \in [k]\}$ be the vertex set corresponding to a row of the board and $C(j) = \{v_{i, j} \mid i \in [k]\}$ the vertices corresponding to a column of the board. Now, for each $R(i)$ and $C(j)$ we add a copy of the gadgets $F_{R(i)}$ and $F_{C(j)}$ in Figure~\ref{fig:f_gadgets}; note that $v(i,j)$ is the unique vertex of $H$ that has both $a_{j,1}$ and $\alpha_{i,1}$ in its neighborhood.

\begin{figure}[!htb]
    \centering
        \begin{tikzpicture}[scale=1]
            %\draw[help lines] (-5,-5) grid (5,5);
            \GraphInit[unit=3,vstyle=Normal]
            \SetVertexNormal[Shape=circle, FillColor=black, MinSize=3pt]
            \tikzset{VertexStyle/.append style = {inner sep = \inners, outer sep = \outers}}
            \SetVertexLabelOut
            \begin{scope}
                \Vertex[x=0, y=0.637, Lpos=90,Math, L={a_{j,1}}]{a1}
                \Vertex[x=-1, y=0.637, Lpos=90,Math, L={a_{j,2}}]{a2}
                \Vertex[x=-2, y=0.637, Lpos=90,Math, L={a_{j,3}}]{a3}
                \Vertex[x=-2, y=0.137, Lpos=180,Math, L={a_{j,4}}]{a4}

                \Vertex[x=0, y=-0.637, Lpos=270,Math, L={b_{j,1}}]{b1}
                \Vertex[x=-1, y=-0.637, Lpos=270,Math, L={b_{j,2}}]{b2}
                \Vertex[x=-1, y=-0.137, Lpos=0,Math, L={b_{j,3}}]{b3}
                
                \SetVertexNoLabel
                \Vertex[x=1.25, y=0.95]{h1}
                \Vertex[x=1.25, y=0.32]{h2}
                \Vertex[x=1.25, y=-0.32]{h3}
                \Vertex[x=1.25, y=-0.95]{h4}
                
                \Edges(b3, b2, b1, h1, a1, h2, b1, h3, a1, h4, b1, h1, a1, a2, a3, a4)
                \draw (1, -1.2) rectangle (1.5, 1.2);
                \node at (1.25, -1.5) {$C(j)$};
                
                \draw (a1) circle (0.2cm);
                \draw (a3) circle (0.2cm);
                \draw (b2) circle (0.2cm);
            \end{scope}
            
            \begin{scope}[xshift=2.2cm,yshift=2.2cm,rotate=-90]
                \Vertex[x=0, y=0.637, Lpos=0,Math, L={\alpha_{i,1}}]{a1}
                \Vertex[x=-1, y=0.637, Lpos=90,Math, L={\alpha_{i,2}}]{a2}
                \Vertex[x=-1, y=1.274, Lpos=-90,Math, L={\alpha_{i,3}}]{a3}
                \Vertex[x=-1, y=1.911, Lpos=0,Math, L={\alpha_{i,4}}]{a4}

                \Vertex[x=0, y=-0.637, Lpos=180,Math, L={\beta_{i,1}}]{b1}
                \Vertex[x=-1, y=-0.637, Lpos=90,Math, L={\beta_{i,2}}]{b2}
                \Vertex[x=-1, y=-1.137, Lpos=180,Math, L={\beta_{i,3}}]{b3}
                
                \Vertex[x=0, y=0, Lpos=90,Math, L={p_i}]{p}
                
                \SetVertexNoLabel
                \Vertex[x=1.25, y=0.95]{h1}
                \Vertex[x=1.25, y=0.32]{h2}
                \Vertex[x=1.25, y=-0.32]{h3}
                \Vertex[x=1.25, y=-0.95]{h4}

                \Edges(h1,p,h2,p,h3,p,h4)
                
                \Edges(b3, b2, b1, h1, a1, h2, b1, h3, a1, h4, b1, h1, a1, a2, a3, a4)
                \draw (1, -1.2) rectangle (1.5, 1.2);
                \node at (1.25, 1.6) {$R(i)$};
                
                \draw (a1) circle (0.2cm);
                \draw (a3) circle (0.2cm);
                \draw (b2) circle (0.2cm);
            \end{scope}
        \end{tikzpicture}
\caption{Gadgets $F_{C(j)}$ and $F_{R(i)}$. Circled vertices are present in a canonical solution to $(H,d)$.\label{fig:f_gadgets}}
\end{figure}
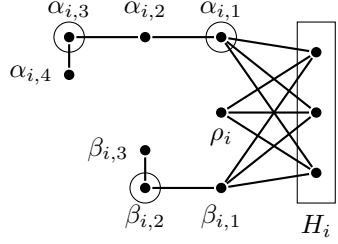

Afterwards, for each $i \in [k]$, we add a $P_4$ $\Sigma_i = \langle\sigma_{i, 1}, \sigma_{i, 2}, \sigma_{i, 3}, \sigma_{i, 4}\rangle$. For each $i_1,i_2 \in [k]$ with $i_1 < i_2$, we add a vertex $\gamma_{i_1, i_2}$ adjacent to $\sigma_{i_1,1}$, $\sigma_{i_2, 1}$, and to all vertices of $R(i_1) \cup R(i_2)$. Now, to encode the adjacencies of $G$, for each $i_1, i_2 \in [k]$ with $i_1 < i_2$, we add the vertex set $R(i_1, i_2) = \{u_{i_1, i_2, j} \mid j \in [k]\}$, add edges between $R(i_2)$ and $R(i_1, i_2)$ to form a perfect matching, make $R(i_1, i_2)$ adjacent to both $\sigma_{i_1, 1}$ and $\sigma_{i_2, 1}$, and add an edge between $v_{i_1, j_1}$ and $u_{i_1, i_2, j_2}$ if and only if $\{(i_1, j_1), (i_2, j_2)\} \notin E(G)$.
Intuitively, the $\gamma$ vertices are used to force that non-adjacent vertices of $H$ correspond to non-adjacent vertices of $G$ cannot be picked simultaneously.
We give an example of this construction in Figure~\ref{fig:edge_gadgets_lb}.
Finally, we add an edge between $v(i_1, j)$ and each $u_(i_2, i_3, j)$ for every $i_1, i_2, i_3 \in [k]$, $i_2 < i_3$, and set $d=9k$.

\begin{figure}[!htb]
    \centering
    
        \begin{subfigure}{0.49\textwidth}
            \centering
            \begin{tikzpicture}[scale=1]
                %\draw[help lines] (-5,-5) grid (5,5);
                \GraphInit[unit=3,vstyle=Normal]
                \SetVertexNormal[Shape=circle, FillColor=black, MinSize=3pt]
                \tikzset{VertexStyle/.append style = {inner sep = \inners, outer sep = \outers}}
                \SetVertexLabelOut            
                \begin{scope}                    
                    \SetVertexNoLabel
                    \Vertex[x=0.95, y=0]{a4}
                    \Vertex[x=0.32, y=0]{a3}
                    \Vertex[x=-0.32, y=0]{a2}
                    \Vertex[x=-0.95, y=0]{a1}
                    
                    \node at (-1.65, 0) {$i_1$};
                    
                    % \draw (a1) circle (0.2cm);
                    % \draw (a3) circle (0.2cm);
                    % \draw (b2) circle (0.2cm);
                \end{scope}
                
                \begin{scope}[yshift=-3cm]
                    \SetVertexNoLabel
                    \Vertex[x=0.95, y=0]{b4}
                    \Vertex[x=0.32, y=0]{b3}
                    \Vertex[x=-0.32, y=0]{b2}
                    \Vertex[x=-0.95, y=0]{b1}
                    
                    \node at (-1.65, 0) {$i_2$};
                    
                    % \draw (a1) circle (0.2cm);
                    % \draw (a3) circle (0.2cm);
                    % \draw (b2) circle (0.2cm);
                \end{scope}
                \Edges(a1,b2,a4,b3,a2,b1,a3,b4)
            \end{tikzpicture}
            \caption{\label{subfig:edge_gadgets_lb_1}}
        \end{subfigure}
        \begin{subfigure}{0.49\textwidth}
            \centering
            \begin{tikzpicture}[scale=1]
                %\draw[help lines] (-5,-5) grid (5,5);
                \GraphInit[unit=3,vstyle=Normal]
                \SetVertexNormal[Shape=circle, FillColor=black, MinSize=3pt]
                \tikzset{VertexStyle/.append style = {inner sep = \inners, outer sep = \outers}}
                \SetVertexLabelOut            
                \begin{scope}                    
                    \SetVertexNoLabel
                    \Vertex[x=0.95, y=0]{a4}
                    \Vertex[x=0.32, y=0]{a3}
                    \Vertex[x=-0.32, y=0]{a2}
                    \Vertex[x=-0.95, y=0]{a1}
                    
                    \draw (-1.2, -0.25) rectangle (1.2, 0.25);
                    \node at (1.65, 0) {$R(i_1)$};
                    
                    % \draw (a1) circle (0.2cm);
                    % \draw (a3) circle (0.2cm);
                    % \draw (b2) circle (0.2cm);
                \end{scope}
                
                \begin{scope}[yshift=-1.6cm]
                    \SetVertexNoLabel
                    \Vertex[x=0.95, y=0]{b4}
                    \Vertex[x=0.32, y=0]{b3}
                    \Vertex[x=-0.32, y=0]{b2}
                    \Vertex[x=-0.95, y=0]{b1}
                    
                    \draw (-1.2, -0.25) rectangle (1.2, 0.25);
                    \node at (1.85, 0) {$R(i_1, i_2)$};
                    
                    % \draw (a1) circle (0.2cm);
                    % \draw (a3) circle (0.2cm);
                    % \draw (b2) circle (0.2cm);
                \end{scope}
                
                \begin{scope}[yshift=-3cm]
                    \SetVertexNoLabel
                    \Vertex[x=0.95, y=0]{c4}
                    \Vertex[x=0.32, y=0]{c3}
                    \Vertex[x=-0.32, y=0]{c2}
                    \Vertex[x=-0.95, y=0]{c1}
                    
                    \draw (-1.2, -0.25) rectangle (1.2, 0.25);
                    \node at (1.65, 0) {$R(i_2)$};
                \end{scope}

                \Edges(a1, b1, c1)
                \Edges(a2, b2, c2)
                \Edges(a3, b3, c3)
                \Edges(a4, b4, c4)
                \Edges(b3,a1,b4)
                \Edges(b4,a2)
                \Edges(b2,a3)
                \Edges(a4,b1)

                \Vertex[x=1.6, y=-0.75, Lpos=0,Math, L={\sigma_{i_1,1}}]{s1}
                \Vertex[x=1.6, y=-2.35, Lpos=0,Math, L={\sigma_{i_2,1}}]{s2}
                \Edges[style={opacity=0.2}](s1,b1,s2,b2,s1,b3,s2,b4,s1)
                
                \Vertex[x=-1.8, y=-1.5, Lpos=180,Math, L={\gamma_{i_1,i_2}}]{g}
                \Edges[style={bend left, opacity=0.2}](g, a1)
                \Edges[style={bend left, opacity=0.2}](g, a2)
                \Edges[style={bend left, opacity=0.2}](g, a3)
                \Edges[style={opacity=0.2}](g, a4)
                \Edges[style={bend left, opacity=0.2}](g, s1)
                
                \Edges[style={bend right, opacity=0.2}](g, c1)
                \Edges[style={bend right, opacity=0.2}](g, c2)
                \Edges[style={bend right, opacity=0.2}](g, c3)
                \Edges[style={opacity=0.2}](g, c4)
                \Edges[style={bend right, opacity=0.2}](g, s2)

                \draw (s1) circle (0.2cm);
                \draw (s2) circle (0.2cm);
            \end{tikzpicture}
            \caption{\label{subfig:edge_gadgets_lb_2}}
        \end{subfigure}
\caption{(a) An example of two rows of the input graph $G$. (b) Corresponding edges between $R(i_1, i_2)$ and other vertices of $H$. Circled vertices are present in a canonical solution to $(H,d)$.\label{fig:edge_gadgets_lb}}
\end{figure}

\begin{lemma}\label{lem:canon_sol}
    In every solution $D$ of $(H,d)$, it holds that (i) $|D \cap F_{X(\ell)}| = 3$ for every $X \in \{R, C\}, \ell \in [k]$, (ii) $|D \cap \Sigma_i| = 2$ for every $i \in [k]$, and (iii) for every $i \in [k]$ $|D \cap R(i)| = 1$ and there is a unique $j \in [k]$ such that $D \cap R(i) = D \cap C(j)$. Furthermore, if $D$ exists, there is also a solution $D^*$, called the canonical solution, where: (iv) $D^* \cap F_{R(i)} = \{\alpha_{i,1}, \alpha_{i,3}, \beta_{i,2}\}$, (v) $D^* \cap F_{C(j)} = \{a_{j,1}, a_{j,3}, b_{j,2}\}$, and (vi) $D^* \cap \Sigma_i = \{\sigma_{i, 1}, \sigma_{i, 3}\}$.
\end{lemma}

\begin{proof}
    Let us show that these values are actually lower bounds for their respective intersections.
    Let us prove do so for gadget $F_{R(i)}$ in property (i); the case for $F_{C(j)}$ is symmetric.
    Note that at least one of $\beta_{i,2}, \beta_{i,3}$ must be in $D$, otherwise $\beta_{i,3}$ would not be dominated; for the $\alpha$ vertices, at least one of $\alpha_{i,3}, \alpha_{i,4}$ must be in $D$, otherwise the latter is not dominated and: (a) if both of them are in $D$, we are done, (b) if only $\alpha_{i,3} \in D$, then one of $\alpha_{i,1}, \alpha_{i,2}$ must also be in $D$, otherwise $\alpha_{i,4}$ would be confounded with $\alpha_{i,2}$, or (c) if only $\alpha_{i,4} \in D$, then one of $\alpha_{i,1}, \alpha_{i,2}$ must also be in $D$, otherwise $\alpha_{i,2}$ is not dominated.
    A similar analysis to the $\alpha$ vertices allows us to conclude 2 is a lower bound for (ii).

    For the first part of property (iii), towards a contradiction, suppose that there is one $\ell \in [k]$ where $D \cap R(\ell) = \emptyset$.
    This implies that $|D \cap R(\ell)| \geq 5$, otherwise $p_i$ would not be dominated (so $p_\ell \in D$) and either $\beta_{\ell,1}$ would not be dominated (if $\beta_{\ell,2} \notin D$) or it would be confounded with $\beta_{\ell,3}$ (if $\beta_{\ell,2} \in D$), so two $\beta_\ell$'s must be in $D$.
    However, note that we now have:
    
    \begin{align*}
        |D| &\geq \sum_{i \in [k]} \left(|D \cap F_R(i)| + |D \cap \Sigma_i| + |D \cap R(i)|\right) + \sum_{j \in [k]} |D \cap F_{C(j)}|\\
           &\geq (k-1)(3+2+1) + \left(|D \cap F_{R(\ell)}| + |D \cap \Sigma_\ell|\right) + 3k\\
           &\geq 6k - 6 + (5 + 2) + 3k\\
           &= 9k + 1 > d
    \end{align*}

    For the second part of property (iii), note that if there was some $q \in [k]$ such that $D \cap C(q) = \emptyset$, then $|\{b_{q,r} \mid r \in [k]\} \cap D| \geq 2$, implying $|D \cap F_{C(j)}| \geq 3$ and, by a similar argument as the above, we would have that $|D| > d$.
    Note that the above lower bounds for (i) - (iii) imply that any valid solution must have $|D| = d$; consequently all three properties hold.

    Let us now prove that (iv) is also true; the cases for (v) and (vi) are symmetric.
    Let $i \in [k]$; if $\beta_{i,3} \in D$, replacing it with $\beta_{i,2}$ does not confound any vertices, as $\beta_{i,1}$ was already dominated by a vertex outside of $F_{R(i)}$ and so $\beta_{i,3}$ is now the unique vertex whose neighborhood in $D$ is $\{\beta_{i,2}\}$.
    For the $\alpha$ vertices, we proceed as follows: (a) if $\alpha_{i,1}, \alpha_{i,3} \notin D$, then $\alpha_{i, 2}, \alpha_{i,4} \in D$ and we can safely replace the latter with the former since $\alpha_{i,2}$ would be the unique vertex dominated by $\{\alpha_{i,1}, \alpha_{i,3}\}$ and $\alpha_{i,4}$ the only one dominated only by $\alpha_{i,3}$; (b) if $\alpha_{i,1} \notin D$ then $\alpha_{i,3} \in D$ and one of $\{\alpha_{i,2}, \alpha_{i,4}\}$ is also in $D$, but then we can safely replace it with $\alpha_{i,1}$ by the same argument as before; (c) if $\alpha_{i, 1} \in D$, then $\alpha_{i,4} \in D$ and we can replace it with $\alpha_{i,3}$ and conclude the correctness of property (iv).
\end{proof}

\begin{lemma}\label{lem:lb_forward}
    If there is a solution $Q$ for $G$, then there is a solution $D$ for $(H,d)$. 
\end{lemma}

\begin{proof}
    Let $Q = \{(1, j_1), \dots, (k, j_k)\}$ be the solution to $G$ and initially $D = \emptyset$. We begin by adding all $8k$ canonical vertices described in Lemma~\ref{lem:canon_sol} to $D$. This way, each vertex $v_{i,j} \in R(i) \cap C(j)$ is uniquely dominated by $\{\alpha_{i,1}, a_{j,1}\}$.
    Moreover, vertices in $R(i_1, i_2) \cup \{\gamma_{i_1, i_2}\}$ are the only ones dominated solely by $\{\sigma_{i_1, 1}, \sigma_{i_2, 1}\}$, so they cannot be confounded with any other vertex of $H$.
    To conclude the construction of $D$, for each $(i,j) \in Q$, we add vertex $v_{i,j}$ to $D$, satisfying $|D| = d = 9k$.
    
    To see that $D$ is a solution to $(H,d)$, we proceed on a per gadget basis.
    First note that the two vertices of $F_{R(i)}$ not uniquely dominated by the canonical vertices are $\beta_{i,1}$ and $p_i$, which are now also dominated by $v_{i, j} \in D$, with $p_i$ being the unique vertex only dominated by $v_{i,j}$ (all other neighbors of $v_{i,j}$ have a canonical vertex in its neighborhood).
    A similar analysis holds for $F_{C(j)}$ and $b_{j,1}$.
    
    For each column $j \in [k]$, define $v_{i,j} \in D \cap C(j)$ as its \textit{dominant}. Recall that, for every $\ell_1, \ell_2 \in [k]$ ($\ell_1 < \ell_2$), $u_{\ell_1, \ell_2, j} \in R(\ell_1, \ell_2)$ is adjacent to $v_{i,j}$. As such, each vertex in $R{\ell_1, \ell_2}$ is dominated by $\sigma_{\ell_1, 1}, \sigma_{\ell_2, 1}$ and its corresponding dominant.
    Note that the only possible vertex that can be confounded with $u_{\ell_1, \ell_2, j}$ at this point is $\gamma_{\ell_1, \ell_2}$ and even then only if the dominant of $j$ is in $R(\ell_1) \cup R(\ell_2)$.
    If the dominant of $j$ is $v_{\ell_1, j} \in D$, then there is some $v_{\ell_2, q} \in D$, which is not adjacent to $u_{\ell_1, \ell_2, j}$, and so there is no confounding with $\gamma_{\ell_1, \ell_2}$.
    Otherwise, $v_{\ell_2, j} \in D$ and there is some $v_{\ell_1, q} \in D$ that is adjacent to $u_{\ell_1, \ell_2, j}$; this, however, cannot be the case, as this edge exists if and only if $\{(\ell_1, q), (\ell_2, j)\}$ is \textbf{not} an edge of $G$ i.e. $\{(\ell_1, q), (\ell_2, j)\}$ is not part of any clique of $G$, a contradiction.
    As such, every vertex in $\bigcup_{i, j, \ell \in [k]} R(i) \cup C(j) \cup F_{R(i)} \cup F_{C(j)} \cup \Sigma_i \cup R(i, \ell) \cup \{\gamma_{i, \ell}\} = H$ is either in $D$ or uniquely dominated by a subset of $D$.
\end{proof}

\begin{lemma}\label{lem:lb_backward}
    If there is a solution $D$ for $(H,d)$, then there is a solution $Q$ for $G$. 
\end{lemma}

\begin{proof}
    W.l.o.g. let us assume that $D$ is a canonical solution as laid out in Lemma~\ref{lem:canon_sol} and initially $Q = \emptyset$. 
    For each $j \in [k]$, we add the dominant $v_{i,j}$ of column $j$ to $Q$.
    Since $D$ is canonical, the dominants are unique and each $R(i)$ has exactly one element in $Q$, so $|Q| = k$, every row of $G$ has a unique vertex in $Q$, and so does each column.
    It remains to prove that $Q$ is a clique.
    Suppose that this is not the case, i.e. there are two elements $(i_1, j_1), (i_2, j_2) \in Q$ ($i_1 < i_2$) that are not adjacent.
    As such, it holds that $v_{i_1, j_1}, v_{i_2, j_2}\in D$.
    Moreover, since $D$ is canonical, $\gamma_{i_1, i_2} \notin D$ and $N_D(\gamma_{i_1, i_2}) = \{v_{i_1, j_1}, v_{i_2, j_2}, \sigma_{i_1, 1}, \sigma_{i_2, 1}\}$.
    However, by the construction of $H$, the vertex $u_{i_1, i_2, j_2} \in R(i_1, i_2) \setminus D$ is adjacent all vertices in $N_D(\gamma_{i_1, i_2})$ and since the dominant of $j_2$ is $v_{i_2, j_2}$ and no other vertex of $R(i_1)$ is in $D$, no other vertex of $H$ is in $D$, and consequently $u_{i_1, i_2, j_2}$ and $\gamma_{i_1, i_2}$ are confounded, a contradiction.
    As such, we conclude that $Q$ is a permutation clique of $G$.
\end{proof}

\begin{theorem}\label{thm:lower_bound_dlod}
    \pname{Locating-Dominating Set} does not admit an $2^{o(d\log d)}\poly(n)$ time algorithm unless the Exponential Time Hypothesis fails.
\end{theorem}

\begin{proof}
    Lemmas~\ref{lem:lb_forward} and~\ref{lem:lb_backward} immediately imply that $G$ and $(H,d)$ are equivalent. Together with the fact that $d = 9k$ in our construction, the result of Lokshtanov et al.~\cite{perm_clique} that \pname{$k \times k$ Permutation Clique} does not admit a $2^{o(k\log k)}$ time algorithm unless the Exponential Time Hypothesis fails immediately implies that \pname{Locating-Dominating Set} does not admit an $2^{o(d\log d)}\poly(n)$ time algorithm under the same hypothesis.
\end{proof}

\begin{corollary}
    Unless the Exponential Time Hypothesis fails, there is no $n^{o(d)}$ algorithm for \pname{Locating-Dominating Set} on $n$-vertex graphs.
\end{corollary}

\begin{proof}
    Suppose that there is such such an algorithm.
    By our construction, we have that $n = |V(H)| = \Theta(k^3)$ and $d = \Theta(k)$, so it holds that $n^{o(d)} = k^{3o(k)} = 2^{o(k) \log k} = 2^{o(k\log k)}$ and we would have an algorithm with the aforementioned running time for \pname{$k \times k$ Permutation Clique}.
\end{proof}
\section{An exponential kernel for distance to cluster}

Throughout this section, let $(G, d)$ be the input to \pname{Locating-Dominating Set}, $U$ be a cluster modulator of $G$, and define $V(G) \setminus U = \mathcal{C}$.
A clique $Q \in \mathcal{C}$ is \textit{trivial} if no two vertices of $Q$ are true twins to one another.
We say that two maximal cliques $Q_1, Q_2 \in \mathcal{C}$ are \textit{twins} if there is a bijection $f : V(Q_1) \mapsto V(Q_2)$ with $f(u) = v$ if and only if $N_U(u) = N_U(v)$; a family of cliques $\mathcal{Q} \subseteq \mathcal{C}$ is a \textit{pattern} if all of its cliques are twins and no other clique of $\mathcal{C} \setminus \mathcal{Q}$ is a twin of a clique of $\mathcal{Q}$.
We say that a pattern $\mathcal{Q}$ is a pattern of \textit{trivial cliques} if, for every $Q \in \mathcal{Q}$, $Q$ is a trivial clique.
For simplicity, if $Q_i, Q_j \in \mathcal{Q}$, we denote $Q_i = \{v_1^i, \dots, v_{|Q_i|}^i\}$ and define that $N_U(v_\ell^i) = N_U(v_\ell^j)$ for every $\ell \in \left[|Q_i|\right]$; finally, we define $V(\mathcal{Q}) = \bigcup_{Q \in \mathcal{Q}} V(Q)$.

Our first reduction rule is a well known pre-processing step for \pname{Locating-Dominating Set}.

\begin{reduction}
    \label{rule:twinning}
    If $u,v,w \in V(G) \setminus U$ is a set of mutual twins of $G$, remove $w$ from $G$ and set $d \gets d - 1$.
\end{reduction}

Before our next reduction rule, we present a lemma that allows us to normalize solutions to more easily prove the safeness of the rule.

\begin{lemma}
    \label{lem:trivial_cliques}
    Let $D$ be a solution to $(G, d)$ and $\mathcal{Q}$ be a pattern of $r$ trivial cliques, each of size $s \geq 2$.
    If $r \geq 2s + |U| + 1$, then there is an index $\ell \in [s]$, three cliques $Q_a, Q_b, Q_c \in \mathcal{Q}$, and a solution $D^*$ of size at most $|D|$ where, for every $j \in \{a,b,c\}$, $D^* \cap Q_j = \{v_\ell^j\}$.
\end{lemma}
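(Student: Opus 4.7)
My plan is to combine a pigeonhole argument with a careful modification of the given solution $D$. For each clique $Q \in \mathcal{Q}$, I introduce the notation $\mathcal{S}_Q = \{\ell \in [s] : v_\ell^Q \in D\}$, the set of indices whose corresponding vertex of $Q$ lies in $D$.

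First I would show that at most one clique $Q \in \mathcal{Q}$ can have $\mathcal{S}_Q = \emptyset$. If $Q \ne Q'$ both had empty intersection with $D$, then for every index $\ell \in [s]$, the vertices $v_\ell^Q$ and $v_\ell^{Q'}$ would have $D$-neighborhood exactly $N_U(v_\ell^Q) \cap D = N_U(v_\ell^{Q'}) \cap D$ (the in-clique contributions are empty by assumption, and the $U$-neighborhoods agree by the pattern definition), which confounds them and contradicts $D$ being a locating-dominating set. Consequently at least $r - 1 \geq 2s + |U|$ cliques of $\mathcal{Q}$ have $\mathcal{S}_Q \neq \emptyset$.

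Next, I would pick a representative $\ell(Q) \in \mathcal{S}_Q$ for each such clique and apply pigeonhole over $[s]$. After additionally discarding at most $|U|$ cliques whose modification would be ``risky'' with respect to some vertex of $U$ (this is precisely where the $+|U|$ slack in the bound is used), enough cliques remain to guarantee three cliques $Q_a, Q_b, Q_c$ with a common representative $\ell^* \in \mathcal{S}_{Q_a} \cap \mathcal{S}_{Q_b} \cap \mathcal{S}_{Q_c}$. I would then define $D^*$ by removing from $D$ every $v_m^j$ with $j \in \{a,b,c\}$ and $m \in \mathcal{S}_j \setminus \{\ell^*\}$, so that $|D^*| \leq |D|$ and $D^* \cap Q_j = \{v_{\ell^*}^j\}$ for $j \in \{a,b,c\}$.

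The main obstacle is verifying that $D^*$ is still a locating-dominating set, which I would do by case analysis. Domination of the removed vertices $v_m^j$ is immediate via $v_{\ell^*}^j$. For other vertices, the uniformity of the pattern (any $w$ adjacent to $v_m^j$ is also adjacent to $v_m^Q$ for every $Q \in \mathcal{Q}$) ensures that $w$ still has $D^*$-neighbors and remains distinguished, provided the remaining untouched cliques of $\mathcal{Q}$ contribute sufficiently; pinning down the exact notion of ``risky clique for $u \in U$'' so that only $|U|$ cliques need to be excluded is the most delicate part of the argument. Within each modified clique $Q_j$, distinguishability of vertices in $Q_j \setminus \{v_{\ell^*}^j\}$ reduces to showing that $N_U \cap D$ is injective as a function of the index on $[s] \setminus \{\ell^*\}$, which one can extract from the LDS constraints already imposed on some other $Q \in \mathcal{Q}$ whose $\mathcal{S}_Q$ contains an index $\ell \neq \ell^*$. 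Cross-clique confounds among the three modified cliques are precluded because the sets $\{v_{\ell^*}^j\}_{j \in \{a,b,c\}}$ are nonempty and pairwise disjoint subsets of $V(\mathcal{C})$.
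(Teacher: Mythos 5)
Your proposal has a genuine gap, and it is located exactly where you flag uncertainty: the claim that only $|U|$ cliques need to be excluded as ``risky'' is false, and no purely local modification of the kind you describe can work. Your $D^*$ is always a \emph{subset} of $D$ (restricted to the three chosen cliques), so the within-clique distinguishing after stripping $Q_j$ down to $\{v_{\ell^*}^j\}$ must be done by $N_U(\cdot)\cap D$, and your argument for extracting injectivity of $m \mapsto N_U(v_m)\cap D$ from ``some other clique $Q$ with $\ell \in \mathcal{S}_Q$, $\ell \neq \ell^*$'' does not go through: to conclude $N_U(v_m)\cap D \neq N_U(v_{m'})\cap D$ you would need a clique in which $v_m$ and $v_{m'}$ are \emph{both} outside $D$, and such a clique need not exist for any pair. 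Concretely, take $s=3$, $U=\{u_1,u_2\}$ with $N_U(v_1)=\{u_1\}$, $N_U(v_2)=\{u_2\}$, $N_U(v_3)=\emptyset$, and $u_1,u_2 \notin D$. Then every clique of $\mathcal{Q}$ must place at least two of its three vertices in $D$ (any two clique vertices outside $D$ would be confounded, since their only potential distinguishers lie in $U\setminus D$), so \emph{every} clique is ``risky'': stripping any $Q_a$ with, say, $\mathcal{S}_a=\{1,2\}$ down to $\{v_1^a\}$ leaves $N_{D^*}(v_2^a)=N_{D^*}(v_3^a)=\{v_1^a\}$. The number of unusable cliques is $r$, not at most $|U|$, and your pigeonhole has nothing left to act on.

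The missing idea is that in this regime one must \emph{add} vertices of $U$ to the solution rather than only delete clique vertices. The paper splits on $|D\cap V(\mathcal{Q})|$: if it is at least $r+|U|$, then at least $|U|$ cliques carry a surplus, which pays for replacing $D\cap V(\mathcal{Q})$ by all of $U$ together with one vertex per clique; with $U$ entirely in the solution, triviality of the cliques (distinct $N_U$-traces within each clique) restores the locating property, and all $r$ cliques end up with a common singleton index. If instead $|D\cap V(\mathcal{Q})| \leq r+|U|-1$, a counting argument shows that more than $2s$ cliques meet $D$ in exactly one vertex, and pigeonhole yields three sharing an index with $D^*=D$ unchanged. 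Your first step (at most one clique avoids $D$) and your cross-clique disjointness observation are fine and coincide with the paper's second case, but without the ``enrich $D\cap U$ to all of $U$'' move the first case cannot be closed.
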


\begin{proof}
    If $|D \cap V(\mathcal{Q})| \geq r + |U|$, there are at least $|U|$ cliques that contain two vertices of $D$.
    As such, let $D^* = (D \cup U \setminus V(\mathcal{Q})) \cup \{v_1^i \in Q_i \mid Q_i \in \mathcal{Q}\}$; we claim that $D^*$ is a solution of size at most $|D|$.
    To see that this is the case, note that $D \cap U \subseteq D^* \cup U$ and, by adding vertices to $U \setminus D$, vertices already distinguished by $D \cap U$ remain distinguished by $D^* \cap U = U$; moreover no vertex may be confounded with a vertex of a clique $Q_i \in \mathcal{Q}$ since $Q_i$ has a vertex in $D^*$, which distinguishes $Q_i$ from every other clique of $\mathcal{C}$, which implies that all vertices of $G$ are distinguished by $D^*$.
    As to the size of $D^*$, note that we add at most $r + |U|$ vertices to $D \setminus V(\mathcal{Q})$, i.e. $|D^*| \leq r + |U| + |D \setminus V(\mathcal{Q})| = r + |U| + |D| - |D \cap V(\mathcal{Q})| \leq |D|$.
    Since $r > 2s + |U|$ and each clique of $\mathcal{Q}$ has one vertex in $D^*$ and $s$ vertices in total, by the pigeonhole principle there is some index $\ell \in [s]$ and three cliques $Q_a, Q_b, Q_c \in \mathcal{Q}$ such that $v_\ell^j \in D^*$ for every $j \in \{a,b,c\}$.
    
    On the other hand, if $|D \cap V(\mathcal{Q})| \leq r + |U| - 1$, we observe two properties: (i) at most one clique of $\mathcal{Q}$ has no vertex in $D$, and (ii) at most $|U| - 1$ cliques have more than one vertex in $D$.
    Putting these together, we have that $r - 1 - (|U| - 1) = r - |U| \geq 2s + 1$ cliques have exactly one vertex in $D$; by the pigeonhole principle, at least three cliques $Q_a, Q_b, Q_c \in \mathcal{Q}$ have $v_\ell^j \in D$ for every $j \in \{a,b,c\}$.
    Set $D^* = D$.
\end{proof}

\begin{reduction}
    \label{rule:trivial_cliques}
    Let $\mathcal{Q}$ be a pattern of $r$ trivial cliques, each of size $s \geq 2$.
    If $r \geq 2s + |U| + 2$, remove a clique from $\mathcal{Q}$ and set $d \gets d - 1$.
\end{reduction}

\begin{proof}
    Let $D$ be a solution to $(G, d)$ and $D^*$ the solution obtained after applying Lemma~\ref{lem:trivial_cliques}.
    For the forward direction, we claim that $D' = D^* \setminus Q_a$ is a solution for $(G', d-1)$, where $G' = G \setminus Q_a$.
    Suppose to the contrary, that some pair of vertices $x,y \in V(G')$ is confounded by $D'$ in $G'$ but not by $D^*$ in $G$.
    Since $D' \cap Q_a = \{v_\ell^a\}$, w.l.o.g. it holds that $N_{D^*}(x) = N_{D'}(x) \cup \{v_\ell^a\}$; since $x \notin Q_a$, it must be the case that $x \in U \setminus D$, so we conclude that $v_\ell^b, v_\ell^c \in N_{D'}(x)$; moreover, since $D^*$ is a solution for $(G, d)$, we have that $v_\ell^a \notin N_{D^*}(y)$, otherwise $N_{D^*}(y) = N_{D^*}(x)$.
    Thus, for $N_{D'}(x)$ to be equal to $N_{D'}(y)$, it must be the case that $v_\ell^b, v_\ell^c \in N_{D'}(y)$, so $y \in U \setminus D'$; we point out that this is where we use the fact that we have three cliques with similar intersection with $D$.
    However, since $yv_\ell^a \notin E(G)$, we have $N_U(v_\ell^a) \neq N_U(v_\ell^b)$, a contradiction to the hypothesis that $Q_a$ and $Q_b$ belong to the same pattern $\mathcal{Q}$.
    
    For the converse, let $Q_i$ be the clique removed from $G$, $G' = G \setminus Q_i$, and note that $|\mathcal{Q} \setminus \{Q_i\}| \geq 2r + |U| + 1$; that is, given a solution $D'$ to $(G', d-1)$, there is a solution $D^*$ with three cliques $Q_a, Q_b, Q_c \in \mathcal{Q}$ and an index $\ell \in [s]$ where $v_\ell^j \in D^*$ for every $j \in \{a,b,c\}$.
    As such, we define $D = D^* \cup \{v_\ell^i\}$, where $v_\ell^i \in Q_i$, and claim that it is a solution to $(G, d)$.
    Since $Q_i \cap D \neq \emptyset$, no vertex of $Q_i$ may be confounded with a vertex of another clique of $\mathcal{C}$ and, since $D^* \subset D$ is a solution to $(G', d-1)$, no two vertices of $G'$ may now be confounded with each other.
    We are left with two cases:
    
    \begin{enumerate}
        \item There is a vertex $v_t^i \in Q_i$ confounded with a vertex $y \in U \setminus D$: if this is true, then it must be the case that $yv_\ell^i \in E(G)$, which implies $yv_\ell^a \in E(G)$, but $v_t^iv_\ell^a \notin E(G)$, so $N_{D}(v_t^i) \neq N_{D}(y)$.
        \item There are two vertices $v_t^i, v_p^i \in Q_i$ that are confounded: this cannot hold, since $N_{D}(v_t^i) \setminus \{v_\ell^i\} = N_{D}(v_t^a)~\setminus~\{v_\ell^a\} \neq N_{D}(v_p^a)~\setminus~\{v_\ell^a\} = N_{D}(v_p^i) \setminus \{v_\ell^i\}$, otherwise $D^*$ would not be a solution to $(G', d-1)$.
    \end{enumerate}
    
    Consequently, $D$ is a solution to $(G, d)$, and, finally, $|D| = |D^*| + 1 \leq d$.
\end{proof}

Once Rule~\ref{rule:trivial_cliques} can no longer be applied, there is a bounded number of trivial cliques in $\mathcal{C}$.
To bound the non-trivial cliques, we heavily rely on the fact that each of these cliques must have at least one vertex in the solution, which allows us to obtain significantly better bounds than in Rule~\ref{rule:trivial_cliques}; recall that, by Rule~\ref{rule:twinning}, each vertex of a non-trivial clique $Q$ has at most one twin in $Q$, so the size of $Q$ is also bounded by a function on $|U|$.
We denote by $\tau(Q)$ the set of vertices of $Q$ that contains precisely one vertex of each true twin class of size two of $Q$; in an abuse of notation, we define $\tau(\mathcal{Q}) = \tau(Q)$ for an arbitrary $Q \in \mathcal{Q}$.

\begin{lemma}
    \label{lem:other_cliques}
    Let $D$ be a solution to $(G,d)$ and $\mathcal{Q}$ be a pattern of $r$ non-trivial cliques, each of size $s \geq 2$.
    If $r \geq |U| + 1$, there is a solution $D^*$ of size at most $|D|$ and a clique $Q_i \in \mathcal{Q}$ where $D^* \cap Q_i = \tau(Q_i)$.
\end{lemma}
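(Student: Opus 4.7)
The plan is to case-split on the size of $D \cap V(\mathcal{Q})$. The starting observation is a lower bound: in any solution $D$, each size-2 true twin class of a non-trivial clique $Q \in \mathcal{Q}$ must contain at least one vertex of $D$, for otherwise the two twins would be confounded (their closed neighborhoods coincide, so their $D$-neighborhoods would too). Hence $|D \cap Q| \geq |\tau(Q)|$, and since the cliques of the pattern $\mathcal{Q}$ are mutual twins the integer $t := |\tau(Q)|$ does not depend on the choice of $Q \in \mathcal{Q}$, giving $|D \cap V(\mathcal{Q})| \geq rt$; non-triviality also yields $t \geq 1$.

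In the first case, suppose $|D \cap V(\mathcal{Q})| < rt + |U|$. Setting $k_Q = |D \cap Q| - t \geq 0$, we have $\sum_{Q \in \mathcal{Q}} k_Q \leq |U| - 1$, so at most $|U|-1$ cliques admit $k_Q \geq 1$; since $r \geq |U|+1$, at least one $Q_i \in \mathcal{Q}$ satisfies $|D \cap Q_i| = t$. This forces $D \cap Q_i$ to consist of exactly one vertex from each size-2 twin class and nothing else. Because swapping a vertex of $D$ for its true twin preserves the $D$-neighborhood of every vertex, a finite sequence of such swaps within $Q_i$ produces the desired $D^*$ with $D^* \cap Q_i = \tau(Q_i)$ and $|D^*| = |D|$.

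In the second case, $|D \cap V(\mathcal{Q})| \geq rt + |U|$, I would define $D^* = (D \setminus V(\mathcal{Q})) \cup U \cup \bigcup_{Q \in \mathcal{Q}} \tau(Q)$. The bound $|D^*| \leq |D \setminus V(\mathcal{Q})| + |U| + rt \leq |D|$ is immediate from the hypothesis. Verifying that $D^*$ remains locating-dominating is the bulk of the work: vertices of $\mathcal{C} \setminus V(\mathcal{Q})$ retain all their original $D$-distinguishers since $G \setminus U$ has no edges between distinct components, and adding vertices of $U$ to the solution can only refine distinguishing; two vertices of $Q_i \setminus D^*$ come from distinct twin classes of $Q_i$ and so are separated via $N_U \subseteq D^*$; vertices of $Q_i \setminus D^*$ and $Q_j \setminus D^*$ for distinct $Q_i, Q_j \in \mathcal{Q}$ are separated by the disjoint fingerprints $\tau(Q_i)$ and $\tau(Q_j)$ in their $D^*$-neighborhoods; and a vertex of $V(\mathcal{Q}) \setminus D^*$ is separated from every vertex of $\mathcal{C} \setminus V(\mathcal{Q})$ because only the former has a neighbor in $\tau(Q_i)$.

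I expect the main obstacle to lie in Case 2, specifically in verifying that simultaneously collapsing each pattern clique down to $\tau(Q)$ while inflating all of $U$ into the solution does not create confounds between vertices of two distinct cliques of $\mathcal{Q}$ — these pairs are the most dangerous since by definition of a pattern they share $N_U$. The resolution is that the $\tau(Q_i)$'s, being pairwise disjoint clique-specific vertex sets lying inside $D^*$, serve as unique identifiers for each pattern clique and cleanly distinguish across $\mathcal{Q}$.
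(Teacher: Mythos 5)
Your proof is correct and follows essentially the same route as the paper's: case-split on whether $|D \cap V(\mathcal{Q})|$ exceeds $r\,|\tau(\mathcal{Q})|$ by at least $|U|$, and in the expensive case trade the surplus for all of $U$ together with the sets $\tau(Q)$, verifying the locating property via the disjoint ``fingerprints'' $\tau(Q_i) \subseteq D^*$. If anything, your Case~1 is more careful than the paper's, which jumps directly from ``no clique has $D \cap Q_i = \tau(Q_i)$'' to ``every clique has at least $|\tau(\mathcal{Q})|+1$ solution vertices''; your explicit true-twin swap handles the overlooked possibility that some clique meets the lower bound with a transversal of the twin classes different from $\tau(Q_i)$.
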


\begin{proof}
    If $Q_i$ satisfies $D \cap Q_i = \tau(Q_i)$, we are done by setting $D^* = D$.
    Otherwise, it holds that $|D \cap V(\mathcal{Q})| \geq r(|\tau(\mathcal{Q})|+1)$ and we proceed similarly to Lemma~\ref{lem:trivial_cliques}: since we have at least $|U|$ cliques of $\mathcal{Q}$ with $|\tau(\mathcal{Q})| + 1$ vertices in the solution, we set $D^* = D \cup U \setminus \bigcup_{Q \in \mathcal{Q}} (Q \setminus \tau(Q))$, which, by our assumption, has at most $|D|$ vertices.
    Since $D$ is a solution and the $|\mathcal{Q}| \geq 2$ cliques of $\mathcal{Q}$ have all of their $|\tau(\mathcal{Q})|$ twins in $D^*$, the latter is also a solution to $(G,d)$.
\end{proof}

\begin{reduction}
    \label{rule:other_cliques}
    Let $\mathcal{Q}$ be a pattern of $r$ non-trivial cliques, each of size $s \geq 2$.
    If $r \geq |U| + 2$, remove a clique from $\mathcal{Q}$ and set $d \gets d - |\tau(\mathcal{Q})|$.
\end{reduction}

\begin{proof}
    Let $D$ be a solution to $(G, d)$, and $D^*$ the solution obtained using Lemma~\ref{lem:other_cliques}.
    If $Q_i$ is a clique with $D^* \cap Q_i = \tau(Q_i)$, $D' = D^* \setminus Q_i$, and $G' = G \setminus Q_i$, we claim that $D'$ is a solution to $(G', d - |\tau(\mathcal{Q})|$.
    Suppose to the contrary, let $x,y \in V(G')$ be two confounded vertices, i.e. $N_{D'}(x) = N_{D'}(y)$, and w.l.o.g assume there is some $u_\ell^i \in N_{D^*}(x) \cap Q_i \subseteq \tau(Q_i) $, that is $x \in U \setminus D^*$ and $X$ is adjacent to every $u_\ell^j \in \tau(Q_j)$, for $Q_j \in \mathcal{Q}$; this implies, however, that $y$ is adjacent to $u_\ell^j$ for every $Q_j \in \mathcal{Q} \setminus \{Q_i\}$, but not to $u_\ell^i$, a contradiction to the hypothesis that $Q_i,Q_j$ are twin cliques.
    
    For the converse, let $D'$ be a solution to $(G', d - |\tau(\mathcal{Q})|)$, with $G' = G \setminus Q_i$, $D^*$ the solution to $(G', d - |\tau(\mathcal{Q})|$ given by Lemma~\ref{lem:other_cliques}, and $Q_j \in \mathcal{Q}$ be such that $D^* \cap Q_j = \tau(Q_j)$.
    To see that $D = D^* \cup \tau(Q_i)$ is a solution to $(G,d)$, note that: (i) no two vertices in $V(G')$ be confounded, because $D^* \subseteq D$ is a solution to $(G', d - |\tau(\mathcal{Q})|)$, and (ii) no vertex of $v_\ell^i \in Q_i$ may be confounded with another vertex of $G$, otherwise $v_\ell^j$ would be confounded with the same vertex.
\end{proof}

\begin{theorem}
    When parameterized by the distance to cluster $k$, \pname{Locating-Dominating Set} admits a kernel with $\bigO{2^{8^k + 3k}}$ vertices and can be computed in $\bigO{m + n\log n}$ time on graphs with $n$ vertices and $m$ edges.
\end{theorem}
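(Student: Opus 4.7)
The plan is to apply Reduction Rules~\ref{rule:twinning}, \ref{rule:trivial_cliques}, and \ref{rule:other_cliques} exhaustively and then bound the vertex count of the reduced graph as the product of three quantities: the number of patterns, the maximum number of cliques per pattern, and the maximum size of a clique in $\mathcal{C}$. First, any two vertices of the same clique $Q \in \mathcal{C}$ sharing the same $U$-neighborhood are true twins, so after Rule~\ref{rule:twinning} each of the $2^k$ possible $U$-neighborhoods occurs at most twice in $Q$, yielding $|Q| \leq 2^{k+1}$ (and $|Q| \leq 2^k$ when $Q$ is trivial). Second, each pattern is characterized by a multiset of subsets of $U$ with multiplicities in $\{0,1,2\}$, so the number of patterns is at most $3^{2^k}$, which is bounded by $2^{8^k}$ for $k \geq 1$ using $2^k \log_2 3 \leq 8^k$. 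Third, Rules~\ref{rule:trivial_cliques} and~\ref{rule:other_cliques} applied exhaustively leave at most $\max\{2s + k + 1,\ k+1\} \leq 2^{k+1} + k + 1 = \bigO{2^k}$ cliques per pattern, even with the worst-case $s = 2^k$ for trivial patterns.

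Multiplying these three bounds and adding $|U|$, the reduced graph has at most $k + 2^{8^k} \cdot \bigO{2^k} \cdot 2^{k+1} = \bigO{2^{8^k + 3k}}$ vertices, matching the claim. For the running time, assuming the modulator $U$ is given, I would encode each $N_U(v)$ as a length-$k$ bitstring in $\bigO{m}$ total time, partition vertices by $N_U$-encoding via sorting in $\bigO{n \log n}$ (this simultaneously reveals true-twin classes and clique types), extract the cliques of $\mathcal{C}$ from the connected components of $G \setminus U$ in $\bigO{n+m}$, and group cliques into patterns with a second sort; each rule then executes in time linear in the size of the pattern it acts upon, fitting the whole pipeline within $\bigO{m + n\log n}$.

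The main obstacle is the accounting for trivial patterns, whose Rule~\ref{rule:trivial_cliques} threshold $2s + |U| + 2$ depends on the clique size $s$: one must verify that even plugging in the maximal $s = 2^k$ still leaves the total bound of $2^{O(2^k)}$ comfortably inside $2^{8^k + 3k}$, which follows from the inequality $k \cdot 2^k \leq 8^k$ for $k \geq 1$. Beyond this the argument reduces to standard multiset counting and basic bitstring manipulation on vertices indexed by the modulator.
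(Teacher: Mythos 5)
Your overall strategy --- apply Rules~\ref{rule:twinning}, \ref{rule:trivial_cliques}, and~\ref{rule:other_cliques} exhaustively and bound the reduced graph by (number of patterns) $\times$ (cliques per pattern) $\times$ (clique size) --- is exactly the paper's, and your accounting of cliques per pattern and clique size matches. The genuine gap is in the sentence ``assuming the modulator $U$ is given.'' The parameter is the \emph{number} $k$, not a modulator, and a kernelization must run in polynomial time, while computing a minimum cluster modulator (\textsc{Cluster Vertex Deletion}) is \NP-hard. The paper resolves this by greedily deleting all three vertices of each induced $P_3$ (a graph is a cluster graph iff it is $P_3$-free), obtaining in $\bigO{n+m}$ time a modulator with $|U| \leq 3k$; every subsequent bound is then stated in terms of $|U| \leq 3k$. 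This is not cosmetic: the exponent $8^k = 2^{3k}$ in the theorem is precisely $2^{|U|}$ for $|U| = 3k$, whereas your bounds are computed with $|U| = k$ and your inequality ``$2^k \log_2 3 \leq 8^k$'' is an artifact of that mismatch rather than the intended reading.

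Moreover, the gap does not close by simply substituting the approximate modulator into your argument. Your (more careful) multiset count gives up to $3^{2^{|U|}}$ patterns; with $|U| = 3k$ this is $2^{(\log_2 3)\, 8^k}$, which is \emph{not} $\bigO{2^{8^k + 3k}}$, since $(\log_2 3 - 1)\,8^k - 3k \to \infty$. So with the only modulator you can actually compute in polynomial time, your pattern count overshoots the claimed kernel size. The paper counts only $\bigO{2^{2^{|U|}}}$ patterns (essentially identifying a pattern with the set of $U$-neighborhoods it realizes), which is what keeps the final bound at $\bigO{2^{2^{|U|}+|U|}}$. To repair your write-up you must (i) include the $P_3$-based 3-approximation step and carry $|U| \leq 3k$ through all bounds, and (ii) either adopt the coarser pattern count or accept a weaker kernel-size bound of the form $2^{\bigO{8^k}}$. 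The running-time discussion (bucketing by $N_U$-encodings, sorting, prefix-tree grouping) is consistent with the paper and fine as sketched.
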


\begin{proof}
    The first step is to obtain a distance to cluster modulator $U$ in $\bigO{n+m}$ time; this is done by noting that a graph is a cluster graph if and only if it is $P_3$-free.
    As such, we can remove, for every induced $P_3$ of $G$, all three vertices; since at least one of these is in a modulator of minimum size, the resulting modulator $U$ has at most $3k$ vertices.
    
    With $U$ in hand, if $n \leq 2^{2^{|U|}}$, we are done.
    So now, we begin by applying Rule~\ref{rule:twinning}, which can be done in $\bigO{n+m}$ time: we keep a table $T(S, i)$ of integers, where $S$ is one of the $2^{|U|} < n$ subsets of $U$ and $i \in [|V(\mathcal{C})|]$ is an index of a clique of $\mathcal{C}$, that stores how many vertices of clique $Q_i \in \mathcal{C}$ have $S$ as its neighborhood in $U$; this table may be kept as a prefix tree, and each entry is initialized to zero.
    To fill $T$, for each $Q_i \in \mathcal{C}$ and $v \in Q_i$, we access $T(N_U(v), i)$ in $\bigO{|N_U(v)|}$ time and check: if $T(N_U(v), i) \geq 2$, delete $v$ from $G$, otherwise, increase the entry by one unit and keep $v$ in $G$; a straightforward analysis shows that this process takes $\bigO{\sum_{v \in V(\mathcal{C})} |N_U(v)|} = \bigO{m}$ time.
    
    For the remaining rules, the costly step is computing the $\bigO{2^{2^{|U|}}}$ patterns of $\mathcal{C}$.
    It is not hard to see that we can partition $\mathcal{C}$ in $\{\mathcal{Q}_1, \dots, \mathcal{Q}_t\}$ patterns in $\bigO{m + n\log n}$ time using the fact that $n > 2^{2^{|U|}}$ and some prefix trees: it suffices to sort the cliques by the neighborhood of their vertices in $U$ and insert these lists of sorted vertices in a prefix tree $F$ one at a time; cliques that fall on the same node of $F$ have the same pattern.
    Now, for each pattern $\mathcal{Q}_i$, where $i \in [t]$, we can check if it is a trivial or non-trivial pattern by picking any clique and check how many pairs of twin vertices it contains, which, when naively done, takes $\bigO{2^{2|U|}} = \bigO{\log^2 n}$ time.
    Since we know the size of $\mathcal{Q}_i$ and $|\tau(\mathcal{Q}_i)|$, and Rules~\ref{rule:trivial_cliques} and~\ref{rule:other_cliques} consist only of removing a clique, the applications of these rules to all patterns amount to $\bigO{n}$ time.
    
    As to the size of the reduced graph $G'$, note that we have at most $2\cdot2^{2^{|U|}}$ patterns in $\mathcal{C}$, each clique has at most $2^{|U|+1}$ vertices and, by Rule~\ref{rule:trivial_cliques}, each pattern has at most $\bigO{2^{|U|}}$ cliques.
    Putting these together, $G'$ has at most $\bigO{2^{2^{|U|} + |U|}} = \bigO{2^{8^k + 3k}}$ vertices, concluding the proof.
\end{proof}

Using the fact that a 2-approximation for the size of a minimum distance to clique modulator can be found in $\bigO{n + m}$ time, we obtain the following corollary by a simple application of Rule~\ref{rule:twinning}.

\begin{corollary}
    When parameterized by the distance to clique $k$, \pname{Locating-Dominating Set} admits a kernel with $\bigO{16^k}$ vertices and can be computed in $\bigO{m + n}$ time on graphs with $n$ vertices and $m$ edges.
\end{corollary}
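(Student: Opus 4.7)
The plan is to exploit the fact that the distance to clique is, in a structural sense, a more restrictive parameter than the distance to cluster: when $G \setminus U$ is a single clique rather than a disjoint union of cliques, no pattern analysis is required and Rule~\ref{rule:twinning} alone already yields a tight bound. Concretely, I would avoid invoking Rules~\ref{rule:trivial_cliques} and~\ref{rule:other_cliques} altogether and instead rely on a direct counting argument once twin reduction has been exhausted.

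First, I would apply the stated $\bigO{n+m}$-time $2$-approximation to obtain a modulator $U$ with $|U| \leq 2k$ such that $Q := G \setminus U$ induces a single clique. I would then apply Rule~\ref{rule:twinning} exhaustively to $V(Q)$. Since $Q$ is a clique, no two of its vertices can be false twins (each lies in the open neighborhood of the other), so the only mutual-twin relation inside $V(Q)$ is the true-twin relation, and two vertices $u,v \in V(Q)$ are true twins if and only if $N_U(u) = N_U(v)$. Hence the true-twin classes of $Q$ are exactly the fibers of the map $v \mapsto N_U(v)$, and there are at most $2^{|U|} \leq 2^{2k} = 4^k$ of them.

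After exhaustive application of Rule~\ref{rule:twinning}, every such class contains at most two vertices, giving $|V(Q)| \leq 2\cdot 4^k$ and hence $|V(G')| \leq |U| + 2\cdot 4^k \in \bigO{16^k}$. For the running time, the reduction can be implemented by bucketing the vertices of $Q$ according to their neighborhoods in $U$ via a prefix tree analogous to the table $T$ used in the proof of the distance-to-cluster theorem, with a counter per bucket; a vertex is discarded whenever its bucket already holds two entries. Since the combined cost of listing $N_U(v)$ over all $v \in V(Q)$ is $\bigO{\sum_v |N_U(v)|} = \bigO{m}$, and the approximate modulator is produced in $\bigO{n+m}$, the whole procedure runs in $\bigO{n+m}$.

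I do not foresee a genuine obstacle; the only point that deserves explicit verification is that the safeness of Rule~\ref{rule:twinning} carries over unchanged to this simplified setting, which is immediate since the rule was stated for arbitrary $u,v,w \in V(G) \setminus U$ and does not use that $G \setminus U$ is disconnected. The step most worth writing out carefully is the observation that false twins cannot occur inside a clique, as this is what reduces the bound from the double-exponential $\bigO{2^{8^k+3k}}$ of the general theorem to the single-exponential $\bigO{16^k}$ claimed here.
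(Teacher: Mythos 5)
Your proposal is correct and follows essentially the same route as the paper, which likewise obtains the corollary by taking a linear-time $2$-approximate clique modulator and then exhaustively applying Rule~\ref{rule:twinning}; your observation that false twins cannot occur inside a clique, so the twin classes of $G \setminus U$ are exactly the fibers of $v \mapsto N_U(v)$, is the right way to make the one-line argument precise. In fact your count gives the slightly sharper bound $2k + 2\cdot 4^{k} = \bigO{4^{k}}$, which is of course subsumed by the stated $\bigO{16^{k}}$.
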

\section{Lower bounds for kernelization}
\label{sec:nokernel}

\subsection{Vertex Cover}

We are going to show that the \NPH\ problem \textsc{3-Uniform Hypergraph Bicoloring}~\cite{lovasz_hypergraph} OR-cross-composes~\cite{cross_composition} into \textsc{Locating-Dominating Set} parameterized by vertex cover and size of the solution.
To this end, let $\mathcal{H} = \{H_0, \dots, H_t\}$ be the input instances to \textsc{3-Uniform Hypergraph Bicoloring}, with $H_i = (U, E_i)$, all of which have $n$ vertices, and $(G, d)$ the instance of  \textsc{Locating-Dominating Set} we wish to build.
Moreover, we assume w.l.o.g. that $|\mathcal{H}|$ is a power of two.
Finally, we define $\mathcal{E} = \bigcup_{0 \leq i \leq t} E_i$ and $m = |\mathcal{E}|$.

\smallskip
\noindent\textbf{Construction.}
For each $v \in U$, add a copy $G_v(\alpha, \beta)$ of the gadget in Figure~\ref{fig:vertex_gadget} to $G$, which was first used by Charon et al.~\cite{charon2003}.
Now, for each hyperedge $\varepsilon \in \mathcal{E}$, with $\varepsilon = \{u,v,w\}$, add a copy $G_\varepsilon$ of the gadget of Figure~\ref{fig:edge_gadget}; vertex $x_i$ belongs to the instance selector gadget, which we describe below.
Note also that the vertices adjacent to $A_\varepsilon$ and $B_\varepsilon$ are vertices of the corresponding vertex gadget.

\begin{figure}[!htb]
    \centering
        \begin{tikzpicture}[scale=1]
            %\draw[help lines] (-5,-5) grid (5,5);
            \GraphInit[unit=3,vstyle=Normal]
            \SetVertexNormal[Shape=circle, FillColor=black, MinSize=3pt]
            \tikzset{VertexStyle/.append style = {inner sep = \inners, outer sep = \outers}}
            \SetVertexLabelOut
            \Vertex[x=-3, y=0, Lpos=180,Math]{a_v}
            \Vertex[x=-2, y=0, Lpos=90,Math]{b_v}
            \Vertex[x=-1, y=0, Lpos=0,Math]{c_v}
            \Vertex[x=1, y=0, Lpos=180,Math]{d_v}
            \Vertex[x=2, y=0, Lpos=0,Math]{e_v}
            
            \Vertex[x=0, y=1, Lpos=90,Math, L={\alpha_v}]{al_v}
            \Vertex[x=0, y=-1, Lpos=270,Math, L={\beta_v}]{be_v}
            
            \Edges(b_v,c_v,al_v,d_v,be_v,c_v)
            \Edges(d_v, e_v)
            \Edge[style={bend left=40}](a_v)(al_v)
            \Edge[style={bend right=40}](a_v)(be_v)
        \end{tikzpicture}
\caption{Binary choice gadget $G_v(\alpha, \beta)$.\label{fig:vertex_gadget}}
\end{figure}
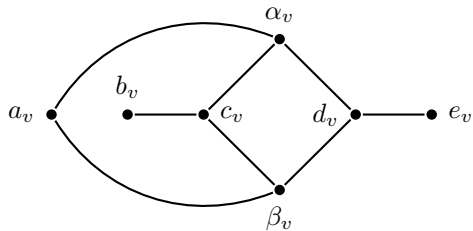

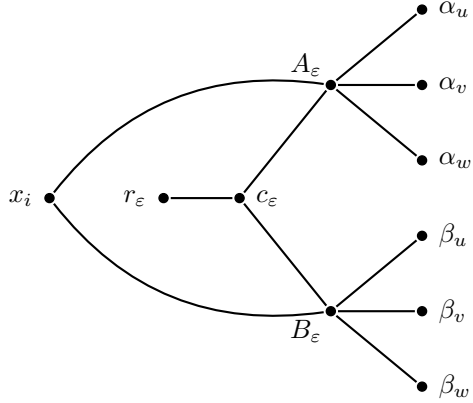
\begin{figure}[!htb]
    \centering
        \begin{tikzpicture}[scale=1]
            %\draw[help lines] (-5,-5) grid (5,5);
            \GraphInit[unit=3,vstyle=Normal]
            \SetVertexNormal[Shape=circle, FillColor=black, MinSize=3pt]
            \tikzset{VertexStyle/.append style = {inner sep = \inners, outer sep = \outers}}
            \SetVertexLabelOut
            \Vertex[x=-4.5, y=0, Lpos=180,Math]{x_i}
            \Vertex[x=-3, y=0, Lpos=180,Math, L={r_\varepsilon}]{r_e}
            \Vertex[x=-2, y=0, Lpos=0,Math, L={c_\varepsilon}]{c_e}
            \Vertex[x=-0.8, y=1.5, Lpos=135, Ldist=-3pt,Math, L={A_\varepsilon}]{A_e}
            \Vertex[x=-0.8, y=-1.5, Lpos=225, Ldist=-3pt,Math, L={B_\varepsilon}]{B_e}
            
            \Vertex[x=0.4, y=2.5, Lpos=0,Math, L={\alpha_u}]{al_u}
            \Vertex[x=0.4, y=1.5, Lpos=0,Math, L={\alpha_v}]{al_v}
            \Vertex[x=0.4, y=0.5, Lpos=0,Math, L={\alpha_w}]{al_w}
            
            \Vertex[x=0.4, y=-0.5, Lpos=0,Math, L={\beta_u}]{be_u}
            \Vertex[x=0.4, y=-1.5, Lpos=0,Math, L={\beta_v}]{be_v}
            \Vertex[x=0.4, y=-2.5, Lpos=0,Math, L={\beta_w}]{be_w}
            
            \Edges(r_e,c_e,A_e,al_u)
            \Edges(c_e,B_e,be_u)
            \Edges(be_v,B_e,be_w)
            \Edges(al_v,A_e,al_w)
            \Edges[style={bend right}](A_e, x_i, B_e)
            
        \end{tikzpicture}
\caption{Hyperedge gadget for some $\varepsilon \in \mathcal{E}$ with $\varepsilon = \{u,v,w\}$.\label{fig:edge_gadget}}
\end{figure}

The more involved piece of our construction is our instance selector gadget $T$.
We first add $h = \log_2 |\mathcal{H}|$ copies $\mathcal{T} = \{T_0(0,1), \dots, T_{h-1}(0,1)\}$ of the binary choice gadget to $T$, then add $|\mathcal{H}|$ vertices $X = \{x_0, \dots, x_t\}$, and five auxiliary vertices $Y = \{y_0, y_1, y_2, z, z'\}$ to the gadget; the auxiliary vertices $Y \setminus \{y_2\}$ form a claw with $z$ as its center.
Then, for each $x_i \in X$ and $T_j \in \mathcal{T}$, if the $j$-th bit of $i$ is equal to zero, add the edge $x_i1_j$, otherwise add edge $x_i0_j$.
Now, add all edges between vertices in $Y \setminus \{z'\}$ and $X$, and, for each $T_j \in \mathcal{T}$, add the edges $y_00_j$ and $y_01_j$.
Finally, for each hyperedge $\varepsilon \in \mathcal{E}$ and $H_i \in \mathcal{H}$, if $\varepsilon \notin E_i$, add the edges $x_iA_\varepsilon$ and $x_iB_\varepsilon$.
We present an example for $h=2$ in Figure~\ref{fig:sel_gadget}.
Intuitively, vertices $0_j$ and $1_j$ describe a bitmask that uniquely distinguishes the vertices of $X$ from each other, but are not enough to: (i) dominate all elements of $X$ and (ii) distinguish between one vertex of $X$ and $y_0$.
To conclude the construction of our instance, we set $d = 3(n + h) + m + 2$.

\begin{figure}[!htb]
    \centering
        \begin{tikzpicture}[scale=1]
            %\draw[help lines] (-5,-5) grid (5,5);
            \GraphInit[unit=3,vstyle=Normal]
            \SetVertexNormal[Shape=circle, FillColor=black, MinSize=3pt]
            \tikzset{VertexStyle/.append style = {inner sep = \inners, outer sep = \outers}}
            \SetVertexLabelOut
            \Vertex[x=-1, y=1.5, Lpos=90,Math]{x_0}
            \Vertex[x=1, y=1.5, Lpos=90,Math]{x_3}
            \Vertex[x=-1, y=-1.5, Lpos=270,Math]{x_1}
            \Vertex[x=1, y=-1.5, Lpos=270,Math]{x_2}
            
            \Vertex[x=-3, y=0, Lpos=180,Math]{y_1}
            \Vertex[x=0, y=0, Lpos=270, Ldist=1pt,Math, L={y_0}]{y}
            \Vertex[x=3, y=0, Lpos=0,Math]{y_2}
            
            \Vertex[x=0, y=3, Lpos=90, Math]{z}
            \Vertex[x=2, y=3, Lpos=0, Math, L={z'}]{zp}
            
            \Edge(z)(y)
            \Edge[style={bend left=1}](z)(zp)
            \Edge[style={bend right=15}](z)(y_1)
            
            \draw[line width=0.8pt] plot [smooth] coordinates {(0, 3) (-2, 2) (-4, 0) (-1, -1.5)};
            \draw[line width=0.8pt] plot [smooth] coordinates {(0, 3) (2, 2) (4, 0) (1, -1.5)};
            
            \Edges[style={bend right}](x_0,z,x_3)
            
            \Edges[style={opacity=0.2}](y,x_0,y_1, x_1, y, x_2, y_1, x_3, y)
            \Edges[style={opacity=0.2}](x_0,y_2,x_1)
            \Edges[style={opacity=0.2}](x_2,y_2,x_3)
            
            \SetVertexNoLabel
            \Vertex[x=-1.725, y=0, Lpos=0,Math]{al_0}
            \Vertex[x=-1.275, y=0, Lpos=0,Math]{be_0}
            \draw (-1.10, -0.25) rectangle (-1.9, 0.25);
            
            \Vertex[x=1.275, y=0, Lpos=0,Math]{al_1}
            \Vertex[x=1.725, y=0, Lpos=0,Math]{be_1}
            \draw (1.10, -0.25) rectangle (1.9, 0.25);
           
            \Edges(be_0, y)
            \Edge[style={bend left}](y)(al_0)
            \Edge[style={bend right}](y)(be_1)
            \Edge(y)(al_1)
            \Edges[style={opacity=0.2}](be_0, x_0, be_1)
            \Edges[style={opacity=0.2}](be_0, x_1, al_1)
            \Edges[style={opacity=0.2}](al_0, x_2, be_1)
            \Edges[style={opacity=0.2}](al_0, x_3, al_1)
            \begin{scope}
                \tikzset{VertexStyle/.append style = {shape = rectangle, inner sep = 2.2pt}}
                \AddVertexColor{black}{be_0, be_1}
                \AddVertexColor{white}{al_0, al_1}
            \end{scope}
            
        \end{tikzpicture}
\caption{Instance selector gadget for $h = 2$. Black rectangle are vertices of the form $1_j$, while white rectangles the $0_j$ vertices.\label{fig:sel_gadget}}
\end{figure}
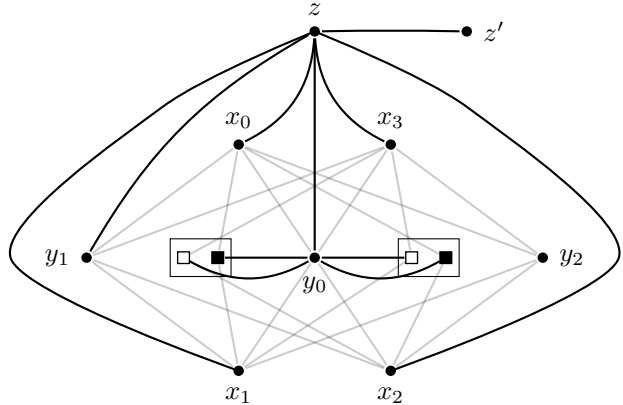

%Our vertex gadgets are the same ones used by Charon et al.~\textbf{(CITACAO PARA "MINIMIZING...")} in their proof that \textsc{Locating-Dominating Set} is NP-complete, while the hyperedge gadgets are inspired by the gadget of Gomes and Sau in~\textbf{(CITAR D-CUT)}.

To prove the correctness of our solution, we employ Lemma~\ref{lem:fw_no_kernel} to construct a solution to $(G,d)$ from a solution to some $H_i \in \mathcal{H}$.
The converse, however, requires a bit more of work: we first show that every solution to $(G,d)$ must have a specific form; this canonical form allows us to easily identify which instance $H_i \in \mathcal{H}$ has a solution and to properly reconstruct it in polynomial time.

\begin{lemma}
    \label{lem:fw_no_kernel}
    If some $H_i \in \mathcal{H}$ has a solution, then $(G,d)$ admits a locating code with at most $d$ vertices.
\end{lemma}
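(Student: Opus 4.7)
The plan is to build a locating-dominating set $D$ of size exactly $d = 3(n+h) + m + 2$ from a proper bicoloring $\chi : U \to \{\alpha, \beta\}$ of $H_i$ (which exists by hypothesis) together with the binary representation $(b_0, \ldots, b_{h-1})$ of the index $i$. Each gadget contributes a fixed number of vertices to $D$, and the correctness of the construction ultimately rests on the non-monochromaticity of $\chi$ on $E_i$.

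Concretely, for each $v \in U$ I include in $D$ a canonical locating-dominating set of $G_v(\alpha,\beta)$ of size $3$ that contains $\chi(v)_v$ and excludes the opposite central vertex, contributing $3n$ vertices; for each $T_j$, the analogous $3$-vertex canonical set that selects $b_j \in \{0_j, 1_j\}$, contributing $3h$; for each $\varepsilon \in \mathcal{E}$, the single vertex $c_\varepsilon$, which alone dominates $r_\varepsilon$, $A_\varepsilon$, and $B_\varepsilon$, contributing $m$; and the two extras $\{x_i, z\}$, contributing $2$. The total is exactly $d$.

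Verifying that $D$ is locating-dominating proceeds by a case analysis on vertex type. Each binary-choice gadget is locally locating-dominated by its canonical $3$-set; the leaf $r_\varepsilon$ is uniquely identified via $N_D(r_\varepsilon) = \{c_\varepsilon\}$; and the internal vertices of the vertex/selector binary gadgets are separated by their distinct singleton or pair $D$-neighborhoods. The two substantive cross-gadget checks are as follows. First, for each hyperedge $\varepsilon = \{u,v,w\}$, both $N_D(A_\varepsilon)$ and $N_D(B_\varepsilon)$ contain $c_\varepsilon$ (plus $x_i$ when $\varepsilon \notin E_i$) together with disjoint color tails drawn from $\{\alpha_u, \alpha_v, \alpha_w\}$ and $\{\beta_u, \beta_v, \beta_w\}$; non-monochromaticity of $\chi$ on $E_i$ guarantees that for $\varepsilon \in E_i$ both tails are non-empty and proper, which separates $A_\varepsilon, B_\varepsilon, r_\varepsilon$ pairwise and prevents either of $A_\varepsilon, B_\varepsilon$ from being confounded with any singleton-neighborhood vertex outside the gadget. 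Second, the complementary bit-edge rule ensures that $x_i$ is the unique vertex of $X$ with no bit-vertex in $N_D$, whereas each $x_j$ with $j \neq i$ has in $N_D$ exactly the bit-vertices at coordinates where $j$ and $i$ disagree; this gives every $x_j$ a distinct signature.

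The main obstacle I expect is showing that just two extras suffice. The role of $z$ is to dominate $z'$, $y_0$, and $y_1$ (otherwise $z'$ is isolated from $D$ and $y_1$ loses its only $D$-neighbor), while the role of $x_i$ is to dominate $y_2$ and to separate $y_1$ from $z'$ (the only two vertices that would otherwise share $N_D = \{z\}$), as well as to separate $y_0, y_1, y_2$ from every $x_j$ with $j \neq i$. Replacing either extra breaks at least one of these tasks, which is precisely why the budget carries a $+2$ rather than a $+1$; verifying this delicate interaction between the selector auxiliary vertices and the extras is essentially the only non-routine part of the argument, and once settled it completes the proof.
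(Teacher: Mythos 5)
Your construction is exactly the paper's: the canonical $3$-set $\{c_v,d_v,\chi(v)_v\}$ per vertex gadget, $\{c_j,d_j,(b_j)_j\}$ per bit gadget (with the bit vertex \emph{not} adjacent to $x_i$), the single $c_\varepsilon$ per hyperedge gadget, plus $x_i$ and $z$, and your verification hits the same key points (non-monochromaticity separating $A_\varepsilon,B_\varepsilon,r_\varepsilon$; the complementary bit-edge rule giving each $x_j$ a distinct signature; $x_i$ separating $y_0$ from $x_{\neg i}$ and dominating $y_2$). The proposal is correct and takes essentially the same approach as the paper.
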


\begin{proof}
    Let $\varphi$ be a bicoloring of $U = V(H_i)$ with colors $\alpha$ and $\beta$.
    For every, $v \in U$, we add to the solution $S$ the set $\{c_v, d_v\}$ and, if $v \in \varphi_\alpha$, we add $\alpha_v$ to $S$, otherwise we add $\beta_v$ to $S$; at this point $|S| = 3n$.
    For every $\varepsilon \in \mathcal{E}$, we add $c_\varepsilon$ to $S$, setting $|S| = 3n + m$.
    
    \begin{claim}
        \label{clm:fw_sub}
        $S$ is a solution for the subgraph $G^*$ of $G$ induced by all the vertex gadgets and the hyperedge gadgets for hyperedges in $E(H_i)$.
    \end{claim}
    
    \begin{cproof}{\ref{clm:fw_sub}}
        For $v \in U$, let $\gamma_v = \{\alpha_v, \beta_v\} \setminus S$, which has $N_S(\gamma_v) = \{c_v,d_v\}$ and, since no other vertex in $V(G^*) \setminus S$ has both $c_v$ and $d_v$ in the neighborhood, $\gamma_v$ cannot be confounded with another vertex of $G^*$.
        Now, since $\varphi$ is a solution to $H_i$, it holds that, in each vertex gadget $G_\varepsilon$ for $\varepsilon \in E(H_i)$, both $A_\varepsilon$ and $B_\varepsilon$ have neighbors in $S \setminus V(G_\varepsilon)$, otherwise $\varepsilon$ would be a monochromatic hyperedge.
        Consequently, since $c_\varepsilon \in S$ and no vertex outside of $G_\varepsilon$ has $c_\varepsilon$ in its neighborhood, $A_\varepsilon$, $B_\varepsilon$, and $r_\varepsilon$ cannot be confounded with other vertices; this in turn implies that $|N_S(A_\varepsilon)|, |N_S(B_\varepsilon)|\geq 2$ for every $\varepsilon \in E(H_i)$, so it also holds that, for each $v \in U$, no other vertex of
        $G^*$ has only $\alpha_v$ or $\beta_v$ in its neighborhood and in $S$, so $a_v$ cannot be confounded with another vertex.
        Thus, every vertex of $G^*$ is uniquely identified by their neighborhood in $S$, proving the claim.
    \end{cproof}
    
    Now, for each $T_j \in \mathcal{T}$, we add to $S$ vertices $\{c_j, d_j\}$ and, if the $j$-th bit of $i$ is equal to 0, add $0_j$, otherwise add $1_j$, i.e we add to $S$ the vertex in $T_j$ that is \textbf{not} adjacent to $x_i$; this increases the size of $S$ to $3(n + h) + m$.
    Finally, add $x_i$ and $z$ to $S$, meeting the bound $d$.
    Before analyzing the situation in $T$, we point out that, for each $\varepsilon \notin E(H_i)$, vertices $A_\varepsilon$ and $B_\varepsilon$ are dominated by $x_i$ and $c_\varepsilon$, while $r_\varepsilon$ is only dominated by the latter, so no vertex in $G_\varepsilon$ can be confounded with another outside of the gadget, regardless of how the $\alpha$'s and $\beta$'s of $G_\varepsilon$ intersect $S$.
    
    To see that $S$ is a solution to $G$, first note that every vertex in $V(T) \setminus \{y_2\}$ has at least one neighbor in $S \cap T \setminus \{x_i\}$, so they cannot be confounded with a vertex outside of $T$.
    Note also that the bitmask described by the vertices of $\mathcal{T}$ is precisely the binary representation of $x_i$.
    By the way we defined the edges between vertices of $X$ and $\mathcal{T}$, each $x_\ell \neq x_i$ is dominated by the vertices associated with the common bits of $\neg \ell$ and $i$, which is unique since each number has a unique binary representation; moreover $x_\ell$ cannot be confounded with vertices within the bit gadgets since $z \in N_G(x_\ell) \cap S$.
    As to the vertices of $Y$, $z'$ is the only vertex dominated only by $z$, since $y_1$ and $y_2$ are also dominated by $x_i$; moreover, $y_0$ is dominated by the same vertices as $x_{\neg i}$ but it holds that $N_G(y_0) \setminus N_G(x_{\neg i}) = \{x_i\}$, so $y_0$ cannot be confounded with another vertex of $G$.
    Finally, $y_2$ is the only vertex of $G$ that has only $x_i$ in its neighborhood: all others dominated by $x_i$ are either in $T$, which have at least one more vertex non-adjacent to $y_2$, or are in an hyperedge gadget, where the same property holds.
    By Claim~\ref{clm:fw_sub}, the vertices of $G^*$ are satisfied only by vertices of $S \cap V(G^*)$, so increasing $S$ cannot create any confusion among them, which concludes the proof.
\end{proof}

\begin{observation}
    \label{obs:three_b_choice}
    For every solution $S$ of $(G, d)$ and binary choice gadget $G_\rho(\alpha, \beta)$ whose only vertices with neighbors outside of $G_\rho(\alpha, \beta)$ are $\alpha_\rho$ and $\beta_\rho$, it holds that $S \cap \{a_\rho, \alpha_\rho, \beta_\rho\} \neq \emptyset$, $S \cap \{b_\rho, c_\rho\} \neq \emptyset$, and $S \cap \{d_\rho, e_\rho\} \neq \emptyset$.
\end{observation}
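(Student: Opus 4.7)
The proof is essentially a direct consequence of the domination requirement applied to the three low-degree ``interior'' vertices $a_\rho$, $b_\rho$, and $e_\rho$. My plan is to read off the closed neighborhoods of these three vertices from the gadget's adjacency structure (Figure~\ref{fig:vertex_gadget}) and observe that, under the stated hypothesis, each closed neighborhood lies entirely inside the gadget and coincides with one of the three sets in the statement.

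More concretely, first I would enumerate the edges of $G_\rho(\alpha,\beta)$ to confirm that, inside the gadget, we have $N(a_\rho)=\{\alpha_\rho,\beta_\rho\}$, $N(b_\rho)=\{c_\rho\}$, and $N(e_\rho)=\{d_\rho\}$. The hypothesis tells us that only $\alpha_\rho$ and $\beta_\rho$ have any neighbors outside $G_\rho(\alpha,\beta)$; in particular, the three neighborhoods listed are the \emph{full} open neighborhoods of $a_\rho$, $b_\rho$, and $e_\rho$ in $G$. Hence their closed neighborhoods are exactly $\{a_\rho,\alpha_\rho,\beta_\rho\}$, $\{b_\rho,c_\rho\}$, and $\{d_\rho,e_\rho\}$, respectively.

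Next I would invoke the fact that $S$ is a locating-dominating set, hence in particular a dominating set, so every vertex of $G$ lies in $N[S]$. Applying this to $a_\rho$ gives $S\cap\{a_\rho,\alpha_\rho,\beta_\rho\}\neq\emptyset$; applying it to the pendant $b_\rho$ gives $S\cap\{b_\rho,c_\rho\}\neq\emptyset$; and applying it to the pendant $e_\rho$ gives $S\cap\{d_\rho,e_\rho\}\neq\emptyset$. This discharges the three required non-emptiness conditions.

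Since every step is a one-line consequence of the gadget's local structure together with the domination axiom, there is no real obstacle here; the only thing to be careful about is ensuring that the hypothesis ``only $\alpha_\rho$ and $\beta_\rho$ have external neighbors'' is actually used, which it is precisely to rule out the possibility that $a_\rho$, $b_\rho$, or $e_\rho$ could be dominated by some vertex outside the gadget. The distinguishing clauses of the locating-dominating definition are not needed for this observation; they will enter in later arguments that build on it.
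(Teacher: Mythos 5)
Your proof is correct and is exactly the argument the paper leaves implicit (the observation is stated without proof): the three sets are precisely the closed neighborhoods of $a_\rho$, $b_\rho$, and $e_\rho$, which by the hypothesis lie entirely inside the gadget, so the domination requirement alone forces each to meet $S$.
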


\begin{observation}
    \label{obs:one_hyperedge}
    For every solution $S$ and hyperedge gadget $G_\varepsilon$ whose only vertices with neighbors outside of $G_\varepsilon$ are $A_\varepsilon$ and $B_\varepsilon$, it holds that $S \cap \{c_\rho, r_\rho\} \neq \emptyset$.
\end{observation}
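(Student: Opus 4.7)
The plan is to invoke nothing more than the domination condition on the single vertex $r_\varepsilon$. Inspecting the hyperedge gadget in Figure~\ref{fig:edge_gadget}, $r_\varepsilon$ is a pendant: its only neighbor within $G_\varepsilon$ is $c_\varepsilon$. By the hypothesis of the observation, the only vertices of $G_\varepsilon$ with neighbors outside of $G_\varepsilon$ are $A_\varepsilon$ and $B_\varepsilon$; in particular, $r_\varepsilon$ has no neighbor outside $G_\varepsilon$, so $N_G[r_\varepsilon] = \{r_\varepsilon, c_\varepsilon\}$.

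Now, since $S$ is a locating-dominating set of $G$, every vertex of $G$ must be dominated by $S$, i.e.\ $N_S[v] \neq \emptyset$ for every $v \in V(G)$. Applying this to $v = r_\varepsilon$ yields $N_S[r_\varepsilon] \neq \emptyset$, which combined with $N_G[r_\varepsilon] = \{r_\varepsilon, c_\varepsilon\}$ gives $S \cap \{r_\varepsilon, c_\varepsilon\} \neq \emptyset$, matching the statement (modulo the subscript $\rho$ being a typo for $\varepsilon$).

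There is effectively no obstacle here: the argument is a direct consequence of the degree-one structure of $r_\varepsilon$ and the hypothesis on which vertices of $G_\varepsilon$ may have external neighbors. The only thing to be mindful of is to state clearly that we use the dominating (not the locating) property of $S$, since no distinguishing argument is required.
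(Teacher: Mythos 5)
Your argument is correct and is precisely the intended one: the paper states this as an \emph{observation} with no written proof, implicitly relying on exactly the fact you spell out, namely that $r_\varepsilon$ is a pendant vertex with $N_G[r_\varepsilon]=\{r_\varepsilon,c_\varepsilon\}$, so domination of $r_\varepsilon$ alone forces $S\cap\{r_\varepsilon,c_\varepsilon\}\neq\emptyset$. You are also right that the subscript $\rho$ in the statement is a typo for $\varepsilon$; nothing further is needed.
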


\begin{lemma}
    \label{lem:t_acc_choice}
    In any solution $S$ to $(G,d)$, $|S \cap V(T)| \geq 3h + 2$.
    Furthermore, if $|S \cap V(T)| = 3h + 2$ and $N_S(X) \setminus V(T) = \emptyset$, then $z \in S$, there exists an unique $x_i \in S$, and the picked vertices of $\mathcal{T}$ correspond to the binary representation of $i$.
\end{lemma}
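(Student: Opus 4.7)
The plan is to establish the $3h+2$ lower bound from local domination requirements, then pin down the structure of a tight solution using distinguishing constraints on $y_0, y_1, y_2, z'$ and the $x_\ell$'s.

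For the lower bound, Observation~\ref{obs:three_b_choice} applies to each of the $h$ binary choice gadgets comprising $\mathcal{T}$, contributing $3h$ to $|S \cap V(T)|$. For two further vertices, $N(z') = \{z\}$ forces $S \cap \{z, z'\} \neq \emptyset$, and $N(y_2) = X$ forces $S \cap (\{y_2\} \cup X) \neq \emptyset$; since $\{z, z'\}$, $\{y_2\} \cup X$, and $V(\mathcal{T})$ are pairwise disjoint, we obtain $|S \cap V(T)| \geq 3h+2$.

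Now assume equality and $N_S(X) \setminus V(T) = \emptyset$. The tight budget places exactly one vertex in $\{z, z'\}$ and exactly one in $\{y_2\} \cup X$, with no further surplus. I first show $z \in S$: if instead $z' \in S$ and $z \notin S$, then $y_1$'s only potential $S$-neighbors lie in $X \cap S$, so either $y_2 \in S$ (giving $N_S(y_1) = \emptyset$, violating domination) or some $x_\ell \in S$ (giving $N_S(y_1) = \{x_\ell\} = N_S(y_2)$), both contradictions. Given $z \in S$, if $y_2 \in S$ then $N_S(z') = \{z\} = N_S(y_1)$, again a contradiction; hence $y_2 \notin S$ and there is a unique $x_i \in S$, with $y_0, y_1, z' \notin S$.

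It remains to identify $S \cap \mathcal{T}$. By Observation~\ref{obs:three_b_choice}, each $T_j$ has exactly one vertex $s_j \in \{a_j, 0_j, 1_j\}$ in $S$; let $J = \{j : s_j \in \{0_j, 1_j\}\}$. Under $N_S(X) \setminus V(T) = \emptyset$, for $\ell \neq i$ we have $N_S(x_\ell) = \{z\} \cup \{s_j : j \in J, s_j \sim x_\ell\}$, so distinguishing the $2^h - 1$ vertices $\{x_\ell\}_{\ell \neq i}$ requires the coordinate-projection to $J$ to be injective on this set, forcing $2^{|J|} \geq 2^h - 1$ and hence $|J| = h$. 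Writing $t_j \in \{0, 1\}$ for the label of $s_j$ (so $s_j \sim x_\ell$ iff bit $j$ of $\ell$ equals $\neg t_j$), distinguishing each such $x_\ell$ from $z'$ (with $N_S(z') = \{z\}$) forces some bit $j$ of $\ell$ to differ from $t_j$; the only way every $\ell \neq i$ satisfies this is if the bitvector $t$ itself represents $i$ in binary, as claimed. The main obstacle is the case analysis required to corner $z \in S$ and rule out $y_2 \in S$, since $y_0, y_1, y_2, z'$ have very similar $S$-neighborhoods once $N_S(X) \setminus V(T) = \emptyset$; the bitwise argument is then a clean counting exercise once the surrounding structure is fixed.
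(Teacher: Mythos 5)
Your proof is correct and follows essentially the same route as the paper's: Observation~\ref{obs:three_b_choice} plus the domination requirements of $z'$ and $y_2$ give the $3h+2$ lower bound, the same elimination of the $z'$-/$y_2$-type configurations forces $z \in S$ and a unique $x_i \in X \cap S$, and confounding with $z'$ (whose $S$-neighborhood is $\{z\}$) pins the selected vertices of $\mathcal{T}$ to the binary representation of $i$. The only cosmetic difference is your intermediate counting step $2^{|J|} \geq 2^h - 1$, which is vacuous for $h=1$; this is harmless, since your final comparison of the $x_\ell$ with $z'$ already forces $|J| = h$ on its own (any $j \notin J$ would leave at least two vertices of $X$ with $S$-neighborhood $\{z\}$), which is exactly how the paper argues.
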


\begin{proof}
    For the first statement, suppose there is some solution $S$ with $|S \cap V(T)| = 3h + 1$.
    By Observation~\ref{obs:three_b_choice}, we know that $|S \cap V(\mathcal{T})| \geq 3h$, and cannot be larger than $3h$ otherwise $z'$ is not dominated.
    Similarly, if $x_i \in S$, $z'$ is not dominated, so we conclude that $X \cap S = \emptyset$.
    Moreover, picking only one vertex of $Y$ is not enough to cover all of $Y$: either $y_2$ or $z'$ are not dominated, so $S$ is not a solution.
    
    For the second part of the statement, first suppose that $z \notin S$, from which we conclude that $z' \in S$, otherwise $z'$ is not dominated, so we only have one more vertex to consider.
    Now, if $\{y_0, y_1\} \cap S \neq \emptyset$, $y_2$ is not dominated but, if $y_2 \in S$, $y_1$ is not dominated, so there must be some $x_i \in X \cap S$, but in this case, $y_1$ and $y_2$ are dominated only by $x_i$, a contradiction; thus, we may now assume that $z \in S$.
    Note that we cannot have more than one vertex in $S \cap X$, otherwise $|S \cap V(T)| > 3h +1$, so suppose that $X \cap S = \emptyset$, which actually implies that $y_2 \in S$, otherwise $y_2$ would not be dominated; however, $y_1$ and $z'$ now satisfy $N_S(y_1) = N_S(z') = \{z\}$, another contradiction.
    
    Assume now that $x_i \in S$ but there is some $T_j(0,1)$ where $\{0_j, 1_j\} \cap S = \emptyset$ and suppose w.l.o.g. that $0_j$ is adjacent to $x_i$.
    Since, by Observation~\ref{obs:three_b_choice} and our previous argumentation, $|S \cap V(\mathcal{T})| = 3h$ and $|\{0_j, 1_j\} \cap S| \leq 1$, there is necessarily some $x_\ell$ where $N_S(x_\ell) \cap V(\mathcal{T}) = \emptyset$, i.e $N_S(x_\ell) = \{z\} = N_S(z')$ since $N_S(x_\ell) \subseteq V(T)$, a contradiction; this implies that $|\{0_j, 1_j\} \cap S| = 1$ for every $j$.
    Suppose that there is non-empty set of indices $J  \subseteq \{0\} \cup [h-1]$ where, for every $j \in J$, $\gamma_j \in N_S(x_i)$, i.e the vertices of $S \cap V(\mathcal{T})$ do not correspond to the bits of $x_i$.
    Since $|\mathcal{H}|$ is a power of two, there is some $x_\ell$ that, for every $j \notin J$, has $\gamma_j \in N_G(x_\ell) \cap N_G(x_i)$, i.e $\gamma_j \notin S$, and for every $j \in J$, has $\gamma_j \notin N_G(x_\ell)$, which in turn implies that $N_S(x_\ell) = \{z\} = N_S(z')$, again because $N_S(x_\ell) \subseteq V(T)$, which is a contradiction that completes the proof.
\end{proof}

\begin{lemma}
    \label{lem:conv_no_kernel}
    If $(G,d)$ admits a solution, then at least one $H_i$ admits a valid bicoloring.
\end{lemma}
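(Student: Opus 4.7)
}
Given a solution $S$ to $(G,d)$, my plan is to first show, by a tight counting argument, that $S$ must assume a highly constrained shape, then read off from it both an index $i \in \{0,\dots,t\}$ and a candidate bicoloring of $H_i$, and finally verify that the resulting coloring contains no monochromatic hyperedge.

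The first step is the pigeonhole bookkeeping. The vertex gadgets $\{G_v\}_{v \in U}$, the hyperedge gadgets $\{G_\varepsilon\}_{\varepsilon \in \mathcal{E}}$ and the selector gadget $T$ are pairwise vertex-disjoint. By Observation~\ref{obs:three_b_choice} each $G_v$ contributes at least $3$ vertices to $S$; by Observation~\ref{obs:one_hyperedge} each $G_\varepsilon$ contributes at least $1$; and by Lemma~\ref{lem:t_acc_choice} $|S \cap V(T)| \geq 3h+2$. Summing these lower bounds yields exactly $3n + m + 3h + 2 = d$, so every inequality is tight: $|S\cap V(G_v)|=3$, $|S\cap V(G_\varepsilon)|=1$, $|S\cap V(T)|=3h+2$. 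The unique vertex of $S\cap V(G_\varepsilon)$ must lie in $\{c_\varepsilon, r_\varepsilon\}$ by Observation~\ref{obs:one_hyperedge}, so $A_\varepsilon, B_\varepsilon \notin S$ for every $\varepsilon \in \mathcal{E}$. Since the only neighbours of $X$ lying outside $V(T)$ are the $A_\varepsilon$'s and $B_\varepsilon$'s, this gives $N_S(X)\setminus V(T)=\emptyset$, so the second half of Lemma~\ref{lem:t_acc_choice} applies and yields a unique $x_i \in X\cap S$ together with the bitmask picked from $\mathcal{T}$.

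Next I extract the coloring. Inside $G_v$, the tight count of $3$ and Observation~\ref{obs:three_b_choice} force exactly one vertex of $\{a_v,\alpha_v,\beta_v\}$, one of $\{b_v,c_v\}$ and one of $\{d_v,e_v\}$ to be in $S$. I rule out the case $a_v\in S$, $\alpha_v,\beta_v\notin S$ as follows: the neighbours of $\alpha_v$ and $\beta_v$ inside $G_v$ coincide (both are $\{a_v,c_v,d_v\}$), and their only other neighbours are the $A_\varepsilon$ (resp.\ $B_\varepsilon$) for $\varepsilon \ni v$, none of which lie in $S$ by the previous paragraph; so $N_S(\alpha_v)=N_S(\beta_v)$, contradicting that $S$ locates $G$. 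Hence for every $v$ exactly one of $\alpha_v, \beta_v$ belongs to $S$, and I define the bicoloring $\varphi$ of $V(H_i)=U$ by $\varphi(v)=\alpha$ iff $\alpha_v\in S$.

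It remains to check that $\varphi$ has no monochromatic hyperedge in $E_i$. Fix $\varepsilon=\{u,v,w\}\in E_i$; because $\varepsilon\in E_i$, the edges $x_iA_\varepsilon$ and $x_iB_\varepsilon$ were \emph{not} added, so the neighbours of $A_\varepsilon$ outside $G_\varepsilon$ are exactly $\alpha_u,\alpha_v,\alpha_w$ and symmetrically for $B_\varepsilon$. I split on which vertex of $G_\varepsilon$ lies in $S$. If $c_\varepsilon\in S$ (so $r_\varepsilon\notin S$), then $N_S(r_\varepsilon)=\{c_\varepsilon\}$, and to avoid confusing $r_\varepsilon$ with $A_\varepsilon$ (resp.\ $B_\varepsilon$) at least one of $\alpha_u,\alpha_v,\alpha_w$ (resp.\ $\beta_u,\beta_v,\beta_w$) must lie in $S$. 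If instead $r_\varepsilon\in S$, the same conclusion follows simply from the requirement that $A_\varepsilon$ and $B_\varepsilon$ themselves be dominated, since their only remaining $S$-neighbours are the $\alpha$'s and $\beta$'s. In either case $\varepsilon$ receives both colours under $\varphi$, and $H_i$ is properly bicoloured, as required.

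The main obstacle I anticipate is the case analysis of step three, where I have to be careful that the ``either $c_\varepsilon$ or $r_\varepsilon$'' dichotomy really does produce the needed $\alpha$- and $\beta$-witnesses in every subcase; the ruling-out of $a_v\in S$ with $\alpha_v,\beta_v\notin S$ is the other delicate point, and it relies crucially on the fact, established already in the counting step, that no $A_\varepsilon$ or $B_\varepsilon$ can sit in the solution.
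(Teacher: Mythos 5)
Your proof is correct and follows essentially the same route as the paper's: a tight counting argument summing the lower bounds from Observations~\ref{obs:three_b_choice} and~\ref{obs:one_hyperedge} and Lemma~\ref{lem:t_acc_choice} to force $A_\varepsilon, B_\varepsilon \notin S$ and exactly one of $\alpha_v,\beta_v$ in $S$, then invoking the second part of Lemma~\ref{lem:t_acc_choice} to identify $x_i$ and checking that a monochromatic $\varepsilon \in E_i$ would confound $r_\varepsilon$ with $A_\varepsilon$ or $B_\varepsilon$ (or leave one of them undominated). Your handling of the subcase $r_\varepsilon \in S$, $c_\varepsilon \notin S$ via domination of $A_\varepsilon$ and $B_\varepsilon$ is in fact cleaner than the paper's corresponding step.
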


\begin{proof}
    Let $S$ be a solution to $(G,d)$.
    We claim that, for every $\varepsilon \in \mathcal{E}$, $\{A_\varepsilon, B_\varepsilon\} \cap S = \emptyset$.
    By Observation~\ref{obs:three_b_choice} and by the first part of Lemma~\ref{lem:t_acc_choice}, we need at least $3(n + h) + 2$ vertices to cover all vertices outside of hyperedge gadgets and, by Observation~\ref{obs:one_hyperedge}, we need one vertex of each of the $m$ hyperedge gadgets different from $A_\varepsilon$ and $B_\varepsilon$, so if there is some hyperedge gadget with more than one vertex in $S$, it must be the case that $S \geq 3(n + h) + 2 + m + 1 > d$, a contradiction.
    We also claim that, for every $v \in U$, $|\{\alpha_v, \beta_v\} \cap S| = 1$.
    To see that this is the case, again using Observation~\ref{obs:three_b_choice} and the previously shown fact that neither $A_\varepsilon$ nor $B_\varepsilon$ are in $S$, for any $\varepsilon \in \mathcal{E}$, if there is some $v$ where $\{\alpha_v, \beta_v\} \cap S = \emptyset$, then $a_v \in S$, but in this case $N_S(\alpha_v) = N_S(\beta_v)$, a contradiction.
    As such, we define a bicoloring $\varphi$ where $v \in \varphi_\alpha$ if and only if $\alpha_v \in S$.
    Since $N_S(X) \setminus V(T) = \emptyset$ and $|S \cap V(T)| = 3h + 2$, there is an unique $x_i \in X \cap S$.
    We claim that $\varphi$ is a solution to $H_i$.
    Suppose that $\varepsilon \in E(H_i)$ is a monochromatic hyperedge with $\varepsilon = \{u,v,w\}$ and $\varphi(v) = \alpha$.
    If $c_\varepsilon \in S$, we know that $r_\varepsilon \notin S$, otherwise $S > d$, and, since $x_iB_\varepsilon \notin E(G)$, it holds that $N_S(B_\varepsilon) = N_S(r_\varepsilon) = c_\varepsilon$, a contradiction; otherwise, if $c_\varepsilon \notin S$, $r_\varepsilon \in S$, but then $N_S(c_\varepsilon) = \emptyset$, another contradiction, so it must be the case that $\varepsilon$ does not exist, so $\varphi$ is a proper bicoloring of $H_i$.
\end{proof}

\begin{lemma}
    \label{lem:bound_vc}
    $G$ has a vertex cover with $\bigO{n^3 + \log |\mathcal{H}|}$ vertices.
\end{lemma}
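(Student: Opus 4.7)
The plan is to exhibit an explicit vertex cover $C$ of $G$ directly and argue that $|C| = \bigO{n^3 + \log|\mathcal{H}|}$. The central observation is that $X$ contains $|\mathcal{H}|$ vertices, so we cannot place $X \subseteq C$ within our budget; instead, every edge incident to $X$ must be covered by placing its neighbors in $C$. Fortunately, $N(X)$ is contained in $\{y_0, y_1, y_2, z\} \cup \bigcup_{j < h}\{0_j, 1_j\} \cup \bigcup_{\varepsilon \in \mathcal{E}}\{A_\varepsilon, B_\varepsilon\}$, a set of size $\bigO{1 + h + m}$, which is affordable.

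I would form $C$ as the disjoint union of four groups. First, for each $v \in U$, include $\{\alpha_v, \beta_v, c_v, d_v\}$ from the binary choice gadget $G_v(\alpha, \beta)$; a finite check on the eight edges of the gadget shows every internal edge is covered, and the only external edges leaving $G_v$ go through $\alpha_v$ and $\beta_v$. Second, for each $\varepsilon \in \mathcal{E}$, include $\{A_\varepsilon, B_\varepsilon, c_\varepsilon\}$; this covers the three internal edges of the gadget, the six edges going to the incident vertex gadgets via $\alpha_*$ and $\beta_*$, and all edges $x_iA_\varepsilon, x_iB_\varepsilon$ with $\varepsilon \notin E_i$. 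Third, for each $T_j \in \mathcal{T}$, include $\{0_j, 1_j, c_j, d_j\}$; the choice of $0_j$ and $1_j$ (rather than an internal vertex like $a_j$) is what covers the edges $y_0 0_j, y_0 1_j$ and all edges from $X$ to the bit gadget. Finally, include $\{y_0, y_1, y_2, z\} \subseteq Y$, which covers the three claw edges $zy_0, zy_1, zz'$ and all remaining edges from $X$ to $Y \setminus \{z'\}$.

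Verification then reduces to checking that the uncovered vertices --- namely $X \cup \{z'\}$ together with the internal ``leaves'' $\{a_v, b_v, e_v\}_{v \in U}$, $\{r_\varepsilon\}_{\varepsilon \in \mathcal{E}}$, and $\{a_j, b_j, e_j\}_{T_j \in \mathcal{T}}$ --- each have their neighborhood entirely in $C$, which follows by inspection of the gadgets. For the size, $|C| \leq 4n + 3m + 4h + 4$, and since $\mathcal{E}$ consists of $3$-subsets of the common $n$-vertex set $U$, we have $m \leq \binom{n}{3} = \bigO{n^3}$, while $h = \log_2|\mathcal{H}|$; combining these gives $|C| = \bigO{n^3 + \log|\mathcal{H}|}$, as required. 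The only obstacle I anticipate is the mechanical edge-by-edge verification over all four gadget types; the conceptual content is the single observation that $|N(X)| = \bigO{n^3 + \log|\mathcal{H}|}$, which allows us to keep $X$ itself outside the cover.
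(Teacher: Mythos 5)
Your proof is correct and rests on the same key observation as the paper's: the only part of $G$ whose size depends on $|\mathcal{H}|$ is $X$, and since $X$ is an independent set it can be left entirely outside the cover. The paper simply takes $V(G)\setminus X$ as the vertex cover and counts gadgets, whereas you exhibit a slightly smaller explicit subset of $V(G)\setminus X$; this extra care does not change the argument or the asymptotics.
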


\begin{proof}
    Since each $H_i$ is 3-uniform, there are $\bigO{n^3}$ hyperedge gadgets.
    Summing the $n$ vertex gadgets, the other $\log_2 |\mathcal{H}|$ binary selection gadgets of $\mathcal{T}$, and the $\bigO{1}$ auxiliary vertices $Y$, we have a set $K$ of size $\bigO{n^3 + \log |\mathcal{H}|}$.
    Since $V(G) \setminus K = X$ and $X$ is an independent set, $G$ has a vertex cover of size at most $|K|$.
\end{proof}

Bringing together Lemmas~\ref{lem:fw_no_kernel},~\ref{lem:conv_no_kernel}, and~\ref{lem:bound_vc}, Theorem~\ref{thm:no_kernel_vc} follows immediately.

\begin{theorem}
    \label{thm:no_kernel_vc}
    \textsc{Locating-Dominating Set} does not admit a polynomial kernel when parameterized by vertex cover and maximum size of the solution unless NP $\subseteq$ coNP/poly.
\end{theorem}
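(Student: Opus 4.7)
The plan is to apply the OR-cross-composition framework of Bodlaender et al.~\cite{distillation} combined with the result of Fortnow and Santhanam~\cite{fortnow_santh}, using the construction already described. First, I would define the polynomial equivalence relation on instances of \textsc{3-Uniform Hypergraph Bicoloring}: two instances are equivalent if they have the same number of vertices $n$ in their vertex set $U$. Since there are at most polynomially many instances with bit-length up to $N$ sharing a fixed $n$, this is a valid polynomial equivalence relation, and we may restrict attention to sequences $\mathcal{H} = \{H_0, \dots, H_t\}$ all defined on the same ground set $U$ of size $n$. The padding assumption that $|\mathcal{H}|$ is a power of two costs at most a factor of two and can be handled by repeating an arbitrary instance.

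Next, I would verify that the construction of $(G,d)$ runs in time polynomial in $\sum_i |H_i|$. This is straightforward: the vertex gadgets contribute $\bigO{n}$ vertices, the hyperedge gadgets contribute $\bigO{m} = \bigO{t \cdot n^3}$ vertices, and the instance selector gadget contributes $\bigO{\log t}$ vertices from $\mathcal{T}$ together with $\bigO{t}$ vertices in $X$ and a constant number in $Y$. Adjacencies between $X$ and the hyperedge gadgets and between $X$ and $\mathcal{T}$ can all be materialized in polynomial time. The OR-equivalence of instances is then exactly the content of Lemmas~\ref{lem:fw_no_kernel} and~\ref{lem:conv_no_kernel}: $(G,d)$ is a \textsc{yes}-instance if and only if at least one $H_i \in \mathcal{H}$ admits a valid bicoloring.

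The crucial quantitative step is bounding the composed parameter by a polynomial in $\max_i |H_i| + \log |\mathcal{H}|$. Lemma~\ref{lem:bound_vc} yields a vertex cover of size $\bigO{n^3 + \log|\mathcal{H}|}$, and the target solution size $d = 3(n+h) + m + 2$ satisfies $d = \bigO{n^3 + \log |\mathcal{H}|}$ as well since $m \le t \binom{n}{3}$ but, more importantly, once we fix $H_i$, the contribution of hyperedges is polynomial in $n$; even using the global $m$, the combined parameter $\mathrm{vc}(G)+d$ remains polynomial in $n + \log t$, which is polynomial in the largest input instance plus $\log |\mathcal{H}|$. This is precisely the bound required by the cross-composition framework.

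Putting these together, \textsc{3-Uniform Hypergraph Bicoloring} OR-cross-composes into \textsc{Locating-Dominating Set} parameterized by vertex cover plus solution size, and since the source problem is \NPH~\cite{lovasz_hypergraph}, the existence of a polynomial kernel for the target problem would imply $\NP \subseteq \coNP/\poly$. The main obstacle I expect is not in combining the lemmas — that part is essentially mechanical — but rather in making the parameter bound transparent: one must be careful that $d$ and the vertex cover size are both bounded by a polynomial in $n + \log |\mathcal{H}|$, not merely in the total input size, so the argument pivots on the observation that the hyperedge gadget contribution to $d$ scales with $m = \bigO{n^3}$ independently of $|\mathcal{H}|$, with the $\log|\mathcal{H}|$ overhead coming only from the instance selector $\mathcal{T}$.
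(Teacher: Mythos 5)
Your proposal is correct and follows essentially the same route as the paper, which simply invokes the OR-cross-composition framework by combining Lemmas~\ref{lem:fw_no_kernel}, \ref{lem:conv_no_kernel}, and~\ref{lem:bound_vc}; you merely make explicit the standard boilerplate (polynomial equivalence relation, polynomial running time, and the parameter bound) that the paper leaves implicit. Your closing observation that $m = |\mathcal{E}| = \bigO{n^3}$ independently of $|\mathcal{H}|$, because all hyperedges live on the shared ground set $U$, is exactly the point that makes both the vertex cover and $d$ polynomial in $n + \log|\mathcal{H}|$ (your intermediate bound $m \le t\binom{n}{3}$ would not suffice on its own, but you correctly discard it).
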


\subsection{Distance to Clique}

We now show how to replace the gadget of Figure~\ref{fig:sel_gadget} for a gadget that allows us to conclude that \textsc{Locating-Dominating Set} has no polynomial kernel when parameterized by distance to clique unless NP $\subseteq$ coNP/poly.
Let $T$ be our instance selector gadget.
Once again, we begin by adding $h = \log_2 |\mathcal{H}|$ copies $\mathcal{T} = \{T_0(0,1), \dots, T_{h-1}(0,1)\}$ of the gadget in Figure~\ref{fig:vertex_gadget} to $T$, then add $|\mathcal{H}|$ vertices $X = \{x_0, \dots, x_t\}$, and two auxiliary vertices $Y = \{y_0, z\}$, which are adjacent to each other, to the gadget.
We connect vertices in $X$ to vertices of $\mathcal{T}$ as before, i.e. if the $j$-th bit of $i$ is equal to zero, add the edge $x_i1_j$, otherwise add edge $x_i0_j$.
Now, add all edges between $z \in Y$ and $X$, and, for each $T_j \in \mathcal{T}$, add the edges $y_00_j$ and $y_01_j$.
For each hyperedge $\varepsilon \in \mathcal{E}$ and $H_i \in \mathcal{H}$, if $\varepsilon \notin E_i$, add the edges $x_iA_\varepsilon$ and $x_iB_\varepsilon$.
Finally, set $d = 3(n + h) + m + 1$.
We present an example for $h=2$ in Figure~\ref{fig:sel_gadget_clique}.

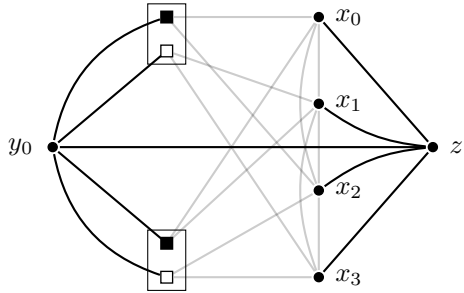
\begin{figure}[!htb]
    \centering
        \begin{tikzpicture}[scale=1]
            %\draw[help lines] (-5,-5) grid (5,5);
            \GraphInit[unit=3,vstyle=Normal]
            \SetVertexNormal[Shape=circle, FillColor=black, MinSize=3pt]
            \tikzset{VertexStyle/.append style = {inner sep = \inners, outer sep = \outers}}
            \SetVertexLabelOut
            \Vertex[x=1, y=1.725, Lpos=0,Math]{x_0}
            \Vertex[x=1, y=0.575, Lpos=0,Math]{x_1}
            \Vertex[x=1, y=-0.575, Lpos=0,Math]{x_2}
            \Vertex[x=1, y=-1.725, Lpos=0,Math]{x_3}
            
            \Vertex[x=-2.5cm, y=0, Lpos=180, Ldist=1pt,Math, L={y_0}]{y}
            
            \Vertex[x=2.5, y=0, Lpos=0, Math]{z}
            
            \Edges(x_3, z, x_0)
            \Edge(z)(y)
            \Edges[style={bend left=15}](x_2, z, x_1)
            \Edges[style={opacity=0.2}](x_0, x_1, x_2, x_3)
            \Edges[style={opacity=0.2, bend right=20}](x_0, x_2)
            \Edges[style={opacity=0.2, bend right=20}](x_1, x_3)

            \SetVertexNoLabel
            \begin{scope}[xshift=-1cm, rotate=90]
                \Vertex[x=-1.725, y=0, Lpos=0,Math]{al_0}
                \Vertex[x=-1.275, y=0, Lpos=0,Math]{be_0}
                \draw (-1.10, -0.25) rectangle (-1.9, 0.25);
                
                \Vertex[x=1.275, y=0, Lpos=0,Math]{al_1}
                \Vertex[x=1.725, y=0, Lpos=0,Math]{be_1}
                \draw (1.10, -0.25) rectangle (1.9, 0.25);
            \end{scope}
            \Edges(be_0, y)
            \Edge[style={bend right}](y)(al_0)
            \Edge[style={bend left}](y)(be_1)
            \Edge(y)(al_1)
            \Edges[style={opacity=0.2}](be_0, x_0, be_1)
            \Edges[style={opacity=0.2}](be_0, x_1, al_1)
            \Edges[style={opacity=0.2}](al_0, x_2, be_1)
            \Edges[style={opacity=0.2}](al_0, x_3, al_1)
            \begin{scope}
                \tikzset{VertexStyle/.append style = {shape = rectangle, inner sep = 2.2pt}}
                \AddVertexColor{black}{be_0, be_1}
                \AddVertexColor{white}{al_0, al_1}
            \end{scope}
            
        \end{tikzpicture}
\caption{Instance selector gadget for $h = 2$. Black rectangle are vertices of the form $1_j$, while white rectangles the $0_j$ vertices.\label{fig:sel_gadget_clique}}
\end{figure}

If we prove an analogue to Lemma~\ref{lem:t_acc_choice} we are done, since: (i) the remainder of the argument will be the same as in Lemmas~\ref{lem:fw_no_kernel} and~\ref{lem:conv_no_kernel}, (ii) $X$ is a clique, and (iii) $|V(G) \setminus X|$ and $d$ are polynomials on $(n + \log_2 |\mathcal{H}|)$.
We do so in the following lemma.

\begin{lemma}
    \label{lem:t_acc_choice_clique}
    In any solution $S$ to $(G,d)$, $|S \cap V(T)| \geq 3h + 1$.
    Furthermore, if $|S \cap V(T)| = 3h + 1$ and $N_S(X) \setminus V(T) = \emptyset$, then there exists an unique $x_i \in S$, and the picked vertices of $\mathcal{T}$ correspond to the binary representation of $i$.
\end{lemma}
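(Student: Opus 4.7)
The plan is to mirror the structure of Lemma~\ref{lem:t_acc_choice}, adapted to the simpler auxiliary set $Y = \{y_0, z\}$ and the fact that $X$ is a clique. Observation~\ref{obs:three_b_choice}, applied to each $T_j \in \mathcal{T}$, already yields $|S \cap V(\mathcal{T})| \geq 3h$. For the first inequality it remains to rule out $|S \cap V(T)| = 3h$: that equality forces $S \cap (X \cup Y) = \emptyset$, but $N(z) \subseteq X \cup \{y_0\}$, so $z$ is neither in $S$ nor dominated, a contradiction.

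For the ``furthermore'' part, assume $|S \cap V(T)| = 3h + 1$ and $N_S(X) \setminus V(T) = \emptyset$. Then $|S \cap V(\mathcal{T})| = 3h$, so each $T_j$ contributes exactly one of $\{a_j, 0_j, 1_j\}$ to $S$; write $s_j$ for that vertex and partition $[h]$ into $J_0, J_1, J_2$ according to whether $s_j$ is $a_j$, $0_j$, or $1_j$. For $j \in J_1 \cup J_2$, set $b_j = 0$ if $s_j = 0_j$ and $b_j = 1$ otherwise. By construction $x_\ell$ is adjacent to $s_j$ iff $j \in J_1 \cup J_2$ and bit $j$ of $\ell$ differs from $b_j$; together with the hypothesis and $y_0 \not\sim x_\ell$, this restricts $N_S(x_\ell)$ to a subset of $\{z\} \cup (X \cap S) \cup \{s_j : j \in J_1 \cup J_2,\ \text{bit } j \text{ of } \ell \neq b_j\}$. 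Since $|S \cap (X \cup Y)| = 1$, I would split into three subcases by which vertex of $X \cup Y$ lies in $S$.

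If $y_0 \in S$ (with $z \notin S$ and $X \cap S = \emptyset$), the assumption $|\mathcal{H}| = 2^h$ lets me choose $\ell$ whose bits agree with $b$ on $J_1 \cup J_2$, forcing $N_S(x_\ell) = \emptyset$ and leaving $x_\ell$ undominated. If $z \in S$ (with $y_0 \notin S$ and $X \cap S = \emptyset$), taking $\ell$ whose bits are complementary to $b$ on $J_1 \cup J_2$ yields $N_S(x_\ell) = \{z\} \cup \{s_j : j \in J_1 \cup J_2\} = N_S(y_0)$, a confoundation. Hence the only surviving case is $X \cap S = \{x_i\}$ for a unique $i$, with $y_0, z \notin S$, which already establishes uniqueness.

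It still remains to force $J_0 = \emptyset$ and $b$ equal to the binary expansion of $i$. The extreme $|J_0| = h$ leaves $y_0$ undominated; for $1 \leq |J_0| \leq h-1$, any bit-pattern on $J_1 \cup J_2$ different from that of $i$ parametrises a class of $2^{|J_0|} \geq 2$ indices all distinct from $i$, so one finds $\ell \neq m$ with $N_S(x_\ell) = \{x_i\} \cup \{s_j : \text{bit } j \text{ of } \ell \neq b_j\} = N_S(x_m)$, confounded. Once $J_0 = \emptyset$, if some bit of $i$ disagreed with $b$ then the index $\ell^*$ whose bit pattern equals $b$ satisfies $\ell^* \neq i$ and $N_S(x_{\ell^*}) = \{x_i\} = N_S(z)$, another confoundation. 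The main obstacle is the careful case split, together with exploiting $|\mathcal{H}| = 2^h$ to produce the offending index $\ell$ in each subcase; beyond that the argument is a direct specialisation of the proof of Lemma~\ref{lem:t_acc_choice} after the removal of the vertices $y_1, y_2, z'$.
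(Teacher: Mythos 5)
Your proof is correct and follows essentially the same route as the paper's: Observation~3 plus the undominated $z$ for the lower bound, then the power-of-two assumption to manufacture an undominated or confounded vertex of $X$ in each of the cases $y_0 \in S$, $z \in S$, some $a_j \in S$, and a mismatched bit (the paper organizes these as ``$X \cap S \neq \emptyset$'', ``$|\{0_j,1_j\}\cap S|=1$'', ``$x_i$ has no neighbor in $S \cap V(\mathcal{T})$'', but the witnesses are the same). No gaps.
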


\begin{proof}
    For the first statement, suppose there is some solution $S$ with $|S \cap V(T)| = 3h$.
    By Observation~\ref{obs:three_b_choice}, we know that $|S \cap V(\mathcal{T})| \geq 3h$, which implies $S \cap V(T) = S \cap V(\mathcal{T})$, but in this case $z$ is not dominated by any vertex.
    
    For the second statement, suppose that $X \cap S = \emptyset$.
    If $S \cap V(\mathcal{T}) = 3h + 1$, then $z$ is not dominated.
    Consequently, at most one vertex of $\{0_j, 1_j\}$ was picked for each $T_j \in \mathcal{T}$ and, since $|\mathcal{H}|$ is a power of two, there is some $x_i \in X$ that has no neighbor in $S \cap V(\mathcal{T})$ and some $x_\ell \in X$ where $N_S(x_\ell) \cap V(\mathcal{T}) = N_S(y_0) \cap V(\mathcal{T})$; the former implies that $z \in S$, otherwise $x_i$ is not dominated, while the latter implies that $y_0 \in S$, otherwise $x_\ell$ is confounded with $y_0$, which contradicts the hypothesis that $|S| = 3h + 1$, so it must hold that $X \cap S \neq \emptyset$.
    
    Suppose now that there is some $T_j \in \mathcal{T}$ where $\{0_j, 1_j\} \cap S = \emptyset$ and note that there are two $x_i, x_\ell \in X$ where $N_T[x_i] \setminus \{0_j\} = N_T[x_i] \setminus \{1_j\}$ and, since $N_S(X) \setminus V(T) = \emptyset$, if neither $x_i$ nor $x_\ell$ are in $S$, then $N_S(x_i) = N_S(x_\ell)$; if, on other hand, $x_i \in S$, $N_S(z) = N_S(x_\ell)$.
    Either way, we have a contradiction to the hypothesis that $S$ is a solution to $(G,d)$, so we must have that $|\{0_j, 1_j\} \cap S| = 1$.
    Consequently, let $x_i \in X \cap S$ and note that, if $x_i$ has a neighbor in $S$, then it is one of $\{0_j, 1_j\}$ for some $T_j \in \mathcal{T}$, and there is another $x_\ell \in X \setminus S$ where $N_S(x_\ell) = \{x_i\}$, but we also have that $N_S(z) = \{x_i\}$, a contradiction.
    So $x_i \in X \cap S$ has no neighbors in $S \cap V(\mathcal{T})$, which implies that, for every $T_j \in \mathcal{T}$, if the $j$-th bit of $i$ is $\gamma$, then $\gamma_j \in V(T_j) \cap S$, so the picked vertices of $\mathcal{T}$ correspond to the binary representation of $i$.
\end{proof}

\begin{theorem}
    \label{thm:no_kernel_dc}
    \textsc{Locating-Dominating Set} does not admit a polynomial kernel when parameterized by distance to clique and maximum size of the solution unless NP $\subseteq$ coNP/poly.
\end{theorem}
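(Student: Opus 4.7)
The plan is to show that the same OR-cross-composition of \textsc{3-Uniform Hypergraph Bicoloring} that was used in Theorem~\ref{thm:no_kernel_vc} goes through with the modified instance selector of Figure~\ref{fig:sel_gadget_clique} in place of the one in Figure~\ref{fig:sel_gadget}, but now with $X$ turned into a clique rather than an independent set. The two structural changes relative to the vertex-cover construction are: (a) the auxiliary set $Y$ has been trimmed to $\{y_0, z\}$ (so that $z'$, $y_1$, $y_2$ are gone), which allows the budget to be decreased from $3(n+h) + m + 2$ to $3(n+h) + m + 1$; and (b) $z$ is now adjacent to every $x_i \in X$ and the edges inside $X$ make it a clique. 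Since the rest of the instance (vertex gadgets, hyperedge gadgets, and the edges connecting $X$ to hyperedge gadgets) is copied verbatim from Section~\ref{sec:nokernel}, the forward and backward implications will be obtained by adapting Lemmas~\ref{lem:fw_no_kernel} and~\ref{lem:conv_no_kernel}.

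First I would verify the forward direction. Given a proper bicoloring $\varphi$ of some $H_i \in \mathcal{H}$, pick exactly the same canonical set inside each vertex gadget ($\{c_v, d_v\}$ together with $\alpha_v$ or $\beta_v$ according to $\varphi$) and each hyperedge gadget ($c_\varepsilon$); inside $T$ pick $\{c_j, d_j\}$ and the vertex of $T_j(0,1)$ \emph{not} adjacent to $x_i$ for every $j$, together with $x_i$ itself. This uses exactly $3(n+h) + m + 1 = d$ vertices. The proof that this set distinguishes every pair of vertices outside it is essentially Claim~\ref{clm:fw_sub} together with the $T$-analysis in Lemma~\ref{lem:fw_no_kernel}, but simplified: the absence of $z'$, $y_1$, $y_2$ removes several cases, and the presence of the $X$-clique only helps, since every $x_\ell \neq x_i$ now has $x_i$ as an extra neighbor in the solution and is still uniquely identified by the bitmask induced by the chosen vertices of $\mathcal{T}$.

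Next I would handle the converse. Starting from a solution $S$ of $(G,d)$, Observations~\ref{obs:three_b_choice} and~\ref{obs:one_hyperedge} plus Lemma~\ref{lem:t_acc_choice_clique} give $|S \cap V(T)| \geq 3h + 1$ and one vertex per hyperedge gadget, so the budget is tight and forces $\{A_\varepsilon, B_\varepsilon\} \cap S = \emptyset$ for every hyperedge together with $|\{\alpha_v,\beta_v\} \cap S| = 1$ for every $v \in U$, exactly as in Lemma~\ref{lem:conv_no_kernel}. Lemma~\ref{lem:t_acc_choice_clique} additionally gives a unique $x_i \in X \cap S$ whose index is encoded by the selected vertices of $\mathcal{T}$. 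Defining $\varphi$ by $v \in \varphi_\alpha \iff \alpha_v \in S$, the same monochromatic-hyperedge argument used in Lemma~\ref{lem:conv_no_kernel} (with $r_\varepsilon$ playing its role and using the fact that $x_i$ is not adjacent to $A_\varepsilon, B_\varepsilon$ when $\varepsilon \in E_i$) shows $\varphi$ is a valid bicoloring of $H_i$.

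Finally I would bound the distance to clique. The set $X$ induces a clique, and the modulator $V(G) \setminus X$ has size $\bigO{n^3 + \log|\mathcal{H}|}$ by the same counting as in Lemma~\ref{lem:bound_vc} (the hyperedge gadgets contribute $\bigO{n^3}$, the vertex gadgets $\bigO{n}$, the $\mathcal{T}$ gadgets $\bigO{\log|\mathcal{H}|}$, and $Y$ is constant). Together with the bound $d \leq 3(n+h) + m + 1 = \mathrm{poly}(n + \log |\mathcal{H}|)$, this gives a polynomial bound on the combined parameter \emph{distance to clique $+$ solution size}, making the reduction an OR-cross-composition. Invoking the framework of~\cite{distillation,fortnow_santh} then yields the claimed lower bound unless $\NP \subseteq \coNP/\poly$. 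The only genuinely new technical step is Lemma~\ref{lem:t_acc_choice_clique}, which is already established above; the main obstacle to anticipate is making sure that no solution can ``cheat'' by putting several $x_i$'s in $S$, a case that the tight budget $3h+1$ inside $T$ (combined with the need to dominate $z$ without $z'$) already rules out.
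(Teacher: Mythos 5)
Your proposal is correct and follows essentially the same route as the paper: it reuses the vertex/hyperedge gadgets and the forward/backward arguments of Lemmas~\ref{lem:fw_no_kernel} and~\ref{lem:conv_no_kernel} verbatim, isolates Lemma~\ref{lem:t_acc_choice_clique} as the only genuinely new ingredient, and observes that $X$ is a clique while $|V(G)\setminus X|$ and $d$ are polynomial in $n+\log_2|\mathcal{H}|$, exactly as the paper does. No gaps to report.
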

\section{A linear kernel for max leaf number}

We now present a linear kernel for \textsc{Locating-Dominating Set} for the max leaf number parameter, denoted by $\ml(\cdot)$.
We make heavy use of the fact that, if $G$ has max leaf number $k$, then $G$ is the subdivision of a graph $H$ on $4k$ vertices~\cite{max_subdivision}; $H$ is called the \textit{host graph} of $G$. 
Since computing such a graph is an \NPH\  problem~\cite{garey_johnson}, we assume that $H$ is part of the input of our kernelization algorithm.

Throughout this section, let $H$ be the graph that $G$ is a subdivision of, $d$ the maximum size of a locating-dominating set, $P(u,v)$ be the path that replaced $uv \in E(H)$ in order to obtain $G$ --- note that $u,v \notin P(u,v)$ --- and let $\mathcal{P}_2(G)$ be the connected components of $G - H$, i.e. the paths formed by vertices of $G - H$.
Before proceeding to the kernelization algorithm, we show a bound on the size of $\mathcal{P}_2(G)$ which was not known to the best of our knowledge.

\begin{lemma}
    \label{lem:ml_bound}
    If $G$ has max leaf number $k$ and $H$ is its host graph, then $|\mathcal{P}_2(G)| \leq 5k - 1 + \floor{\frac{k}{2}}$.
\end{lemma}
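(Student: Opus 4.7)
The plan is to bound $|\mathcal{P}_2(G)|$ by $|E(H)|$ and then bound $|E(H)|$ via a spanning-tree argument exploiting $\ml(G) = k$. Every element of $\mathcal{P}_2(G)$ is exactly the set of internal vertices of some subdivided edge of $H$, so $|\mathcal{P}_2(G)| \leq |E(H)|$; write $p = |\mathcal{P}_2(G)|$. Since $|V(H)| \leq 4k$ by the cited structural result, every spanning tree of $H$ has at most $4k - 1$ edges.

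The core step is to lift an arbitrary spanning tree $T_H$ of $H$ to a spanning tree $T_G$ of $G$ carrying many leaves. I would build $T_G$ as follows: include the full path $P(u,v)$ for every $uv \in T_H$; for every non-tree edge $uv \in E(H) \setminus T_H$ whose path $P(u,v)$ is subdivided, include all internal vertices of $P(u,v)$ together with all but one of its edges; and for every non-tree edge $uv$ with $P(u,v)$ unsubdivided, simply omit it. A short verification shows that $T_G$ is connected, acyclic, and spans $V(G)$. Furthermore, for every non-tree subdivided edge $uv$, an internal vertex of $P(u,v)$ incident to the removed edge has degree exactly one in $T_G$ and is therefore a leaf; since subdivision vertices belong to at most one $P(u,v)$ path, these leaves are pairwise distinct and disjoint from $V(H)$.

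At most $|V(H)| - 1 \leq 4k - 1$ subdivided edges of $H$ can lie in $T_H$, so the number of subdivided non-tree edges is at least $p - 4k + 1$. Combining,
\[
    k \;=\; \ml(G) \;\geq\; \text{leaves of } T_G \;\geq\; p - 4k + 1,
\]
which rearranges to $p \leq 5k - 1$, and since $\floor{k/2} \geq 0$ this implies $p \leq 5k - 1 + \floor{k/2}$, as claimed. The one point requiring care is the lifting itself: one must check that removing any single edge of a subdivided path $P(u,v)$ (including the edge case where $P(u,v)$ has only one internal vertex, where either possible removal still leaves the unique internal vertex with degree one) produces a genuine leaf of $T_G$, and that $T_G$ is simultaneously connected, acyclic, and spanning. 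These are straightforward checks but form the only place where the subdivision structure interacts nontrivially with the spanning-tree construction.
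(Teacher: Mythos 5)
Your proof is correct, and it takes a genuinely different route from the paper's. The paper starts from an optimal spanning tree $T(G)$ of $G$ with $k$ leaves, transfers the situation to an auxiliary graph $H'$ obtained from the host graph by inserting those leaves as extra subdivision vertices, and then shows by exchange arguments that every subdivided edge lying outside a carefully chosen spanning tree of $H'$ must join two leaves, each leaf being incident to at most one such edge; this yields the two terms $5k-1$ (edges of a spanning tree on the at most $5k$ vertices of $H'$) and $\floor{k/2}$ (non-tree subdivided edges, counted two leaves per edge). You go in the opposite direction: you lift an arbitrary spanning tree of $H$ to a spanning tree of $G$ in which every subdivided non-tree edge donates a private leaf, and then invoke $\ml(G)=k$ as an upper bound on the leaf count of that one tree. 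This is shorter, avoids the tree-modification arguments entirely, and in fact establishes the stronger bound $|\mathcal{P}_2(G)| \leq 5k-1$: at most $|V(H)|-1\leq 4k-1$ subdivided edges can lie in the spanning tree of $H$, and at most $\ml(G)=k$ can lie outside it, since each such edge yields a distinct leaf. This implies the stated inequality (and would even marginally improve the constant in the kernel bound of Theorem~\ref{thm:max_leaf}). The points requiring care --- that the lifted object is a spanning tree of $G$ and that the designated subdivision vertices genuinely have degree one, including when a path has a single internal vertex --- are exactly the ones you flag, and they check out because internal vertices of distinct paths are pairwise non-adjacent and have degree two in $G$.
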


\begin{proof}
    Let $T(G)$ be a spanning tree of $G$ with  $k$ leaves and $L$ be its leaf set.
    Note that, if exactly one leaf $a \in V(G)$ is contained in the path $P(u,v) \in \mathcal{P}_2(G)$, then we may assume that $a$ is adjacent to either $u$ or $v$, otherwise we would necessarily have a second leaf $b \in P(u,v)$.
    As such, we define graph $H'$ to be the graph where $V(H') = V(H) \cup L$ and its edge set is given as follows.
    If $uv \in E(H) \cap E(G)$ or $P(u,v) \cap L = \emptyset$, then $uv$ is in $H'$; otherwise, there is some $a \in P(u,v) \cap L$: if $a$ is the unique leaf in $P(u,v)$, then $H'$ contains both $ua$ and $av$, if $b$ is also a leaf in $P(u,v)$, closer to $v$ than $a$, then $H'$ contains $ua$, $ab$ and $bv$.
    Note that $\ml(H') = \ml(G)$, since $H'$ admits a spanning tree on the same leaf set as $G$.
    Now, let $w : E(H') \mapsto \mathbb{Z}$ be a weight function where, for every $uv \in E(H')$, $w(uv) = |P(u,v)|$, if $P(u,v) \in \mathcal{P}_2(G)$, or $w(uv) = 0$, otherwise.
    Also, let $T(H')$ be a spanning tree of $H'$ that has $L$ as its leaf set and minimizes $\max_{uv \in E(T(H'))}\{w(uv)\}$, and $T(G)$ the subtree of $G$ that contains $V(H')$ and replaces edges $uv \in T(H')$ with the corresponding paths of $G$. 
    Suppose that there is some edge $uv \in E(H') \setminus E(T(H'))$ with $w(uv) > 0$, and let $u', v'$ be the neighbors of $u$ and $v$ in $P(u,v)$, respectively.
    If $u$ is not a leaf of $T(H')$, we claim that $G$ has $\ml(G) \geq k + 1$: since $w(uv) > 0$, we can find another subtree $T'(G)$ of $G$ that includes all vertices and edges of $T(G)$, all vertices of $P(u,v)$, and all edges of $P(u,v)$ except $v'v$, so $T'(G)$ has all $k$ leaves of $T(H')$ and the additional leaf $v'$;  this contradicts the hypothesis that $\ml(G) = k$ since it is always possible to extend a subtree of $G$ on $\ell$ leaves to a spanning tree of $G$ on $\ell$ leaves~\cite{bonsma_max_leaf}.
    Now we are left with the case where both $u,v \in L$; our claim is that, if $w(uv) > 0$, then there is no other edge $uz$ with $w(uz) > 0$.
    Suppose to the contrary: let $u''$ be the neighbor of $u$ in $P(u,z)$ and $z''$ the neighbor of $z$ in $P(u,z)$.
    Then there is a subtree $T'(G)$ that has all vertices and edges of $T(G)$, all vertices of $P(u,v) \cup P(u,z)$, and all edges of $P(u,v) \cup P(u,z)$ except for $z''z$ and $v'v$, i.e. $u$ is not a leaf of $T'(G)$ but both $v'$ and $z''$ are, so $T'(G)$ has $k+1$ leaves, contradicting the hypothesis that $G$ has $\ml(G) = k$.
    
    By our analysis, each leaf of $T(H')$ may have at most one edge in $e \in E(H') \setminus E(T(H'))$ where $w(e) > 0$ and no internal vertex of $T(H')$ may be incident to such an edge.
    Since edges have two endpoints, $T(H')$ has $k$ leaves and at most $5k$ vertices, we have that $|\{e \in V(H) \mid w(e) > 0\}| \leq 5k - 1 + \floor{\frac{k}{2}}$.
\end{proof}

For each $P(u,v) \in \mathcal{P}_2(G)$, a \textit{5-section} of $P$ is a subpath of $P$ of length five.
A \textit{5-sectioning} of $P$ is a partition of $P$ into $t$ sets $\varphi(P) = \{\varphi_1, \dots, \varphi_\tp\}$ where: (i) $t$ is maximum, (ii) the vertices of each $\varphi_i$ are denoted by $\{x_j^i \mid j \in \left[t(P)\right]\}$, (iii) for every $i \in [t(P)-1] \setminus \{1\}$, each $\varphi_i$ is a 5-section and $x_5^i$ is adjacent to $x_1^{i+1}$, and (iv) both $\varphi_1$ and $\varphi_\tp$ have at least five vertices and $||\varphi_1| - |\varphi_\tp||$ is minimum.
We denote by $\varphi^*(P)$ the inner 5-sections of $P$, i.e. $\varphi^*(P) = \varphi(P) \setminus \{\varphi_1, \varphi_\tp\}$.
In an abuse of notation, we define $V(\varphi^*(P)) = \{x \in \varphi_i \mid \varphi_i \in \varphi^*(P)\}$.
Note that a path $P$ on at least 15 vertices has at most four different 5-sectionings; our arguments are indifferent to which of them was.
The algorithm works by identifying long paths obtained through the subdivision of $H$ and replacing them with shorter paths whose solutions are easily extended. 
In particular, the \textit{reverse} of a 5-sectioning $\varphi(P)$, denoted by $\rev(\varphi(P)) = \phi(P)$, satisfies $\phi_i = \varphi_{\tp - i + 1}$.
Our first goal is to normalize solutions in such a way that extending them becomes a trivial task.

\begin{lemma}
    \label{lem:normalize_ml}
    Let $(G,d)$ be an instance of \pname{Locating-Dominating Set}, $H$ be the graph that $G$ is a subdivision of, $P \in \mathcal{P}_2(G)$, and $S$ a solution to $(G,d)$.
    If $|P| \geq 15$, then there is a solution $S'$ with $|S'| \leq |S|$ where, for each $\varphi_i \in \varphi^*(P)$, $|\varphi_i \cap S'| \leq 2$ and, if $x_j^2, x_\ell^2 \in S'$, then for every $\varphi_i \in \varphi^*(P)$, $\varphi_i \cap S' = \{x_j^i, x_\ell^i\}$.
\end{lemma}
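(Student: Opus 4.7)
The plan is to start from an arbitrary solution $S$ and derive $S'$ through two rounds of local rearrangement, both confined to $V(\varphi^*(P))$. The structural fact that I will repeatedly use is that every vertex of $V(\varphi^*(P))$ has all of its neighbors in $V(P)$; since $\varphi^*(P)$ excludes both $\varphi_1$ and $\varphi_\tp$, no vertex of $V(\varphi^*(P))$ is adjacent to the endpoints at which $P$ attaches to $H$. Consequently, any modification restricted to $V(\varphi^*(P))$ cannot alter the neighborhood in $S$ of any vertex outside $P$, and locating-dominating feasibility only has to be re-verified against vertices of $V(P)$.

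In the first round I would enforce $|\varphi_i \cap S'| \leq 2$ for every inner section. The standard density lower bound for \pname{Locating-Dominating Set} on long paths ensures that any $5$ consecutive internal vertices contain at least $2$ vertices of $S$, and conversely each $5$-section admits one of the five periodic patterns $\{1,3\}$, $\{1,4\}$, $\{2,4\}$, $\{2,5\}$, $\{3,5\}$ as a valid local configuration. If some inner section $\varphi_i$ contains three or more vertices of $S$, I would replace $S \cap (\varphi_{i-1} \cup \varphi_i \cup \varphi_{i+1})$ by six vertices following one of these canonical patterns, chosen so that the neighborhoods in $S'$ of the boundary vertices $x_5^{i-2}$ and $x_1^{i+2}$ are preserved. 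A direct count shows the exchange does not increase $|S|$ and that the window is consistent; iterating terminates in a solution with no overfull inner section.

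In the second round I would assume $\varphi_2 \cap S' = \{x_j^2, x_\ell^2\}$ with $j < \ell$ and propagate this pattern by induction on $i \geq 3$. Given $\varphi_{i-1} \cap S' = \{x_j^{i-1}, x_\ell^{i-1}\}$, a case analysis on the $\binom{5}{2}$ possible $2$-element subsets of $\varphi_i$ shows that whenever $\varphi_i \cap S' \neq \{x_j^i, x_\ell^i\}$ there is a vertex of $\varphi_{i-1} \cup \varphi_i$ that is either undominated or confounded with another; swapping $\varphi_i \cap S'$ for $\{x_j^i, x_\ell^i\}$ resolves the violation without changing the cardinality, and by the same analysis remains consistent with $\varphi_{i-1}$. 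The boundary with $\varphi_\tp$ is treated symmetrically via $\rev(\varphi(P))$.

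The main obstacle will be the case analysis at the interface between consecutive inner sections: one must verify that every mismatch between two periodic patterns exhibits either a confusion or a domination failure. This is a routine but tedious enumeration, substantially simplified by the reversal symmetry of the admissible patterns and by the fact that no vertex of $V(\varphi^*(P))$ can be rescued by a neighbor outside $P$.
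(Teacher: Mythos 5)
Your overall framework (rearrangements confined to $V(\varphi^*(P))$, together with the density fact that every inner $5$-section meets any solution in at least two vertices) is sound and matches the counting underlying the paper's proof, but there is a genuine gap in your second round, and it sits exactly where the real difficulty of the lemma lies. Your induction rests on the claim that if $\varphi_{i-1} \cap S' = \{x_j^{i-1}, x_\ell^{i-1}\}$ but $\varphi_i \cap S'$ is a different $2$-set, then some vertex of $\varphi_{i-1} \cup \varphi_i$ is undominated or confounded. This is false: taking $\varphi_{i-1} \cap S' = \{x_2^{i-1}, x_4^{i-1}\}$, $\varphi_i \cap S' = \{x_2^i, x_4^i\}$, and $\varphi_{i+1} \cap S' = \{x_1^{i+1}, x_4^{i+1}\}$ with $x_1^{i+2} \in S'$ gives a locally feasible configuration in which every vertex of the two middle sections is dominated and has a distinct trace (e.g.\ $N_{S'}(x_5^i) = \{x_4^i, x_1^{i+1}\}$, $N_{S'}(x_2^{i+1}) = \{x_1^{i+1}\}$, $N_{S'}(x_3^{i+1}) = \{x_4^{i+1}\}$, all distinct). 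So non-uniform solutions with exactly two vertices per inner section exist, the swap is not forced by any violation, and --- more seriously --- after propagating the pattern of $\varphi_2$ up to $\varphi_{t(P)-1}$ the interface with $\varphi_{t(P)}$, whose intersection with $S$ you may not enlarge for free, can become infeasible: the left end may force a pattern with $x_5^{t(P)-1} \notin S'$ while $x_1^{t(P)}, x_2^{t(P)} \notin S$, leaving $x_1^{t(P)}$ undominated. Saying the right boundary is ``treated symmetrically via $\rev$'' does not resolve this, since one cannot propagate a single pattern from both ends when the two boundary phases are incompatible.

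The missing idea is the phase-compatibility and budget argument that occupies most of the paper's proof: a case analysis on $S \cap \varphi_1$ and $S \cap \varphi_{t(P)}$ determines which uniform pattern to install, and in precisely the cases where no uniform two-per-section pattern is consistent with both fixed ends, one proves that the \emph{original} solution must contain an inner section with at least three solution vertices; that surplus pays for one extra vertex inserted into $\varphi_1$ or $\varphi_{t(P)}$ (which the lemma does not constrain) to realign the phases. Without this mechanism neither round of your argument closes. A secondary flaw: in round one you cannot always replace $S \cap (\varphi_{i-1} \cup \varphi_i \cup \varphi_{i+1})$ by six vertices while preserving the $S$-neighborhoods of $x_5^{i-2}$ and $x_1^{i+2}$; if those neighborhoods force both the first and the last vertex of the $15$-vertex window into $S'$, any feasible filling needs seven vertices, because consecutive solution vertices along the path must be at distance in $\{2,3\}$ with no two consecutive gaps of $3$ (two such gaps confound the two non-solution neighbors of the middle solution vertex), and $1$ and $15$ cannot be joined under these constraints with only four intermediate solution vertices.
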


\begin{proof}
    We begin by setting $S' = S \setminus (P \setminus \varphi_1 \setminus \varphi_{t(p)})$.
    To simplify our notation, we assume that $|\varphi_1| = 5$; as we shall see, our case analysis needs only to consider the four vertices of $\varphi_1$ closer to $x_1^2$.
    We branch on the intersection between $S$, $\varphi_1$ and $\varphi_\tp$.  
    \begin{enumerate}
        \item{\label{case:normalize_ml1}} If $x_5^1 \in S'$ and $\{x_1^\tp, x_2^\tp\} \cap S' \neq \emptyset$, add $x_2^i$ and $x_5^i$ to $S'$ for every $\varphi_i \in \varphi^*(P)$.
        To see that this does not confound any pair of vertices belonging to the 5-sections in $\varphi^*(P)$, we refer to Figure~\ref{fig:normalize_ml1}.
        Since one of $x_1^{t(P)}, x_2^{t(P)}$ is in $S'$, $N_{S'}(x_4^{t(P)-1})$ and $N_{S'}(x_1^\tp)$ are distinct, $N_{S'}(x_2^\tp) = N_S(x_2^\tp)$ is either unique or $x_2^\tp \in S'$, and all other vertices of $x \in \varphi_\tp$ cannot be confounded since $N_{S'}(x) = N_S(x)$.
        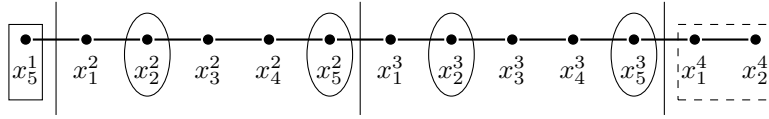
\begin{figure}[!htb]
            \centering
                \begin{tikzpicture}[scale=1]
                    %\draw[help lines] (-5,-5) grid (5,5);
                    \GraphInit[unit=3,vstyle=Normal]
                    \SetVertexNormal[Shape=circle, FillColor=black, MinSize=3pt]
                    \tikzset{VertexStyle/.append style = {inner sep = \inners, outer sep = \outers}}
                    \SetVertexLabelOut
                    %\draw (0, 1) -- (0, -1);
                    
                    \foreach \j in {2,3} {
                        \foreach \i in {1,2,3,4,5} {
                            \pgfmathsetmacro{\x}{(5*(\j-1) + \i)*0.8}
                           
                            \ifthenelse{\i = 2 \OR \i = 5} {
                                \begin{scope}[xshift=\x cm]
                                \draw (0,-0.2) ellipse (0.3cm and 0.55cm);
                                \end{scope}
                            }{}
                            \Vertex[x=\x, y=0, Lpos=270,Math, L={x_\i^\j}]{x_\i\j}
                        }
                    }
                    \draw (4.4, -1) -- (4.4,0.5);
                    \draw (8.4, -1) -- (8.4,0.5);
                    \draw (12.4, -1) -- (12.4,0.5);
                    \Vertex[x=4, y=0, Lpos=270,Math,L={x_5^1}]{x_51}
                    \draw (3.78, -0.8) rectangle (4.22, 0.2);
                    
                    \Vertex[x=12.8, y=0, Lpos=270,Math,L={x_1^4}]{x_14}
                    \Vertex[x=13.6, y=0, Lpos=270,Math,L={x_2^4}]{x_24}
                    
                    \draw[dashed] (12.58, -0.8) rectangle (13.82, 0.2);
                    
                    \Edges(x_51, x_12, x_22, x_32, x_42, x_52, x_13, x_23, x_33, x_43, x_53, x_14, x_24)
                    %\Edge(x_11)(x_45)
                \end{tikzpicture}
        \caption{Case~\ref{case:normalize_ml1} of the proof of Lemma~\ref{lem:normalize_ml}.
        Solid boxes indicate that the boxed vertex is in $S \cap S'$, while solid ellipses indicate vertices in $S'$ but not necessarily in $S$.
        The large dashed box indicates that at least one of the boxed vertices is in the solution.\label{fig:normalize_ml1}}
        \end{figure}
        
        \item{\label{case:normalize_ml2}} If $x_5^1 \in S$ but $\{x_1^\tp, x_2^\tp\} \cap S' = \emptyset$, i.e. $x_3^\tp \in S$, we have two subcases.
        \begin{enumerate}
            \item{\label{case:normalize_ml2-1}} If $N_{S}[x_4^1] \setminus \{x_5^1\} \neq \emptyset$, then add $x_3^i$ and $x_5^i$ for every $\varphi_i \in \varphi^*(P)$.
            Vertex $x_1^2$ cannot be confounded with vertex $x_4^1$ since either $x_4^1 \in S'$ or $x_3^1 \in S$, which implies that $x_1^2$ is the unique vertex with $N_{S'}(x_1^2) = \{x_5^1\}$.
            For the remaining vertices in $V(\varphi^*(P))$, we prove by example using Figure~\ref{fig:normalize_ml2a}.
            Note that $|N_{S'}(x_4^{t(P) - 1})| = 2$, so $x_1^\tp$ is the unique vertex whose only neighbor in $S'$ is $x_5^{t(P)-1}$.
            Finally, for the vertices $x \in \varphi_\tp \setminus \{x_1^\tp\}$, $N_{S'}(x) = N_S(x)$, so they cannot be confounded.
            \begin{figure}[!htb]
                \centering
                    \begin{tikzpicture}[scale=1]
                        %\draw[help lines] (-5,-5) grid (5,5);
                        \GraphInit[unit=3,vstyle=Normal]
                        \SetVertexNormal[Shape=circle, FillColor=black, MinSize=3pt]
                        \tikzset{VertexStyle/.append style = {inner sep = \inners, outer sep = \outers}}
                        \SetVertexLabelOut
                        %\draw (0, 1) -- (0, -1);
                        
                        \foreach \j in {2,3} {
                            \foreach \i in {1,2,3,4,5} {
                                \pgfmathsetmacro{\x}{(5*(\j-1) + \i)*0.8}
                               
                                \ifthenelse{\i = 3 \OR \i = 5} {
                                    \begin{scope}[xshift=\x cm]
                                        \draw (0,-0.2) ellipse (0.3cm and 0.55cm);
                                    \end{scope}
                                }{}
                                \Vertex[x=\x, y=0, Lpos=270,Math, L={x_\i^\j}]{x_\i\j}
                            }
                        }
                        \foreach \j in {1} {
                            \foreach \i in {3,4,5} {
                                \pgfmathsetmacro{\x}{(5*(\j-1) + \i)*0.8}
                               
                                \ifthenelse{\i = 5} {
                                    \begin{scope}[xshift=\x cm]
                                        \draw (-0.22, -0.8) rectangle (0.22, 0.2);
                                    \end{scope}
                                }{}
                                \Vertex[x=\x, y=0, Lpos=270,Math, L={x_\i^\j}]{x_\i\j}
                            }
                        }
                        \foreach \j in {4} {
                            \foreach \i in {1,2,3} {
                                \pgfmathsetmacro{\x}{(5*(\j-1) + \i)*0.8}
                               
                                \ifthenelse{\i = 3} {
                                    \begin{scope}[xshift=\x cm]
                                        \draw (-0.22, -0.8) rectangle (0.22, 0.2);
                                    \end{scope}
                                }{}
                                \Vertex[x=\x, y=0, Lpos=270,Math, L={x_\i^\j}]{x_\i\j}
                            }
                        }
                        \draw (4.4, -1) -- (4.4,0.5);
                        \draw (8.4, -1) -- (8.4,0.5);
                        \draw (12.4, -1) -- (12.4,0.5);
                        
                        \draw[dashed] (2.18, -0.8) rectangle (3.42, 0.2);

                        \Edges(x_31, x_41, x_51, x_12, x_22, x_32, x_42, x_52, x_13, x_23, x_33, x_43, x_53, x_14, x_24, x_34)
                        %\Edge(x_11)(x_45)
                    \end{tikzpicture}
            \caption{Case~\ref{case:normalize_ml2-1} of the proof of Lemma~\ref{lem:normalize_ml}.
            Solid boxes indicate that the boxed vertex is in $S \cap S'$, while solid ellipses indicate vertices in $S'$ but not necessarily in $S$.
            The large dashed box indicates that at least one of the boxed vertices is in the solution.\label{fig:normalize_ml2a}}
            \end{figure}
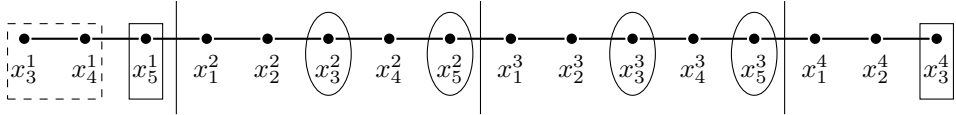
            
            \item{\label{case:normalize_ml2-2}} If $N_{S'}[x_4^1] = \{x_5^1\}$, we first claim that there is some $\varphi_\ell \in \varphi^*(P)$ with $|\varphi_\ell \cap S| \geq 3$.
            Note that $x_5^{\tp - 1} \in S$, otherwise $x_1^\tp$ is not dominated.
            Let $r$ be the largest index such that $\varphi_r \cap S \neq \{x_3^r, x_5^r\}$.
            If $r = \tp - 1$, then $|\varphi_r \cap S| \geq 3$ since at least one of $\{x_1^{\tp - 1}, x_2^{\tp - 1}, x_3^{\tp - 1}\}$ is required to dominate $x_2^{\tp - 1}$.
            If $r < \tp - 1$, we must have $x_5^{r} \in S$, otherwise $x_1^{r+1}$ is not dominated; moreover, if $x_3^r \in S$, then a third $x_j \in \varphi_r$ is in $S$, since we are assuming $\varphi_r \cap S \neq \{x_3^r, x_5^r\}$.
            However, if $x_3^r \notin S$, then $x_4^r \in S$, otherwise $x_1^{r+1}$ would be confounded with $x_4^r$ in $S$, but in this case at least one of $\{x_1^r, x_2^r, x_3^r\}$ is also required to dominated $x_2^r$, implying $|\varphi_r \cap S| \geq 3$.
            As such, we add $x_3^1$ to $S'$ and proceed as in Case~\ref{case:normalize_ml2-1}.
            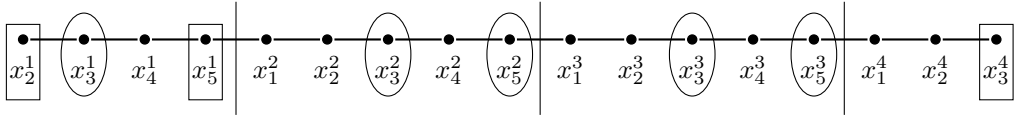
\begin{figure}[!htb]
                \hspace{-1cm}
                    \begin{tikzpicture}[scale=1]
                        %\draw[help lines] (-5,-5) grid (5,5);
                        \GraphInit[unit=3,vstyle=Normal]
                        \SetVertexNormal[Shape=circle, FillColor=black, MinSize=3pt]
                        \tikzset{VertexStyle/.append style = {inner sep = \inners, outer sep = \outers}}
                        \SetVertexLabelOut
                        %\draw (0, 1) -- (0, -1);
                        
                        \foreach \j in {1} {
                            \foreach \i in {2,3,4,5} {
                                \pgfmathsetmacro{\x}{(5*(\j-1) + \i)*0.8}
                               
                                \ifthenelse{\i = 3} {
                                    \begin{scope}[xshift=\x cm]
                                        \draw (0,-0.2) ellipse (0.3cm and 0.55cm);
                                    \end{scope}
                                }{}
                                \ifthenelse{\i = 2 \OR \i = 5} {
                                    \begin{scope}[xshift=\x cm]
                                        \draw (-0.22, -0.8) rectangle (0.22, 0.2);
                                    \end{scope}
                                }{}
                                \Vertex[x=\x, y=0, Lpos=270,Math, L={x_\i^\j}]{x_\i\j}
                            }
                        }
                        \foreach \j in {2,3} {
                            \foreach \i in {1,2,3,4,5} {
                                \pgfmathsetmacro{\x}{(5*(\j-1) + \i)*0.8}
                               
                                \ifthenelse{\i = 3 \OR \i = 5} {
                                    \begin{scope}[xshift=\x cm]
                                        \draw (0,-0.2) ellipse (0.3cm and 0.55cm);
                                    \end{scope}
                                }{}
                                \Vertex[x=\x, y=0, Lpos=270,Math, L={x_\i^\j}]{x_\i\j}
                            }
                        }
                        \foreach \j in {4} {
                            \foreach \i in {1,2,3} {
                                \pgfmathsetmacro{\x}{(5*(\j-1) + \i)*0.8}
                               
                                \ifthenelse{\i = 3} {
                                    \begin{scope}[xshift=\x cm]
                                        \draw (-0.22, -0.8) rectangle (0.22, 0.2);
                                    \end{scope}
                                }{}
                                \Vertex[x=\x, y=0, Lpos=270,Math, L={x_\i^\j}]{x_\i\j}
                            }
                        }
                        \draw (4.4, -1) -- (4.4,0.5);
                        \draw (8.4, -1) -- (8.4,0.5);
                        \draw (12.4, -1) -- (12.4,0.5);
                        
                        %\draw[dashed] (2.18, -0.8) rectangle (3.42, 0.2);

                        \Edges(x_21, x_31, x_41, x_51, x_12, x_22, x_32, x_42, x_52, x_13, x_23, x_33, x_43, x_53, x_14, x_24, x_34)
                        %\Edge(x_11)(x_45)
                    \end{tikzpicture}
            \caption{Case~\ref{case:normalize_ml2-2} of the proof of Lemma~\ref{lem:normalize_ml}.
            Solid boxes indicate that the boxed vertex is in $S \cap S'$, while solid ellipses indicate vertices in $S'$ but not necessarily in $S$.\label{fig:normalize_ml2-2}}
            \end{figure}
        \end{enumerate}
        
        \item{\label{case:normalize_ml3}} If $x_5^1 \notin S$, $x_4^1 \in S$, $N_{S}[x_3^1] \setminus \{x_4^1\} \neq \emptyset$, and $x_1^\tp \in S$, then $\rev(\varphi)$ satisfies the condition of Case~\ref{case:normalize_ml1}, i.e. we complete $S'$  by adding $x_1^i$ and $x_4^i$ to it, as shown in Figure~\ref{fig:normalize_ml3}. 
        \begin{figure}[!htb]
            \centering
                \begin{tikzpicture}[scale=1]
                    %\draw[help lines] (-5,-5) grid (5,5);
                    \GraphInit[unit=3,vstyle=Normal]
                    \SetVertexNormal[Shape=circle, FillColor=black, MinSize=3pt]
                    \tikzset{VertexStyle/.append style = {inner sep = \inners, outer sep = \outers}}
                    \SetVertexLabelOut
                    %\draw (0, 1) -- (0, -1);
                    
                    \foreach \j in {1} {
                        \foreach \i in {2,3,4,5} {
                            \pgfmathsetmacro{\x}{(5*(\j-1) + \i)*0.8}
                           
                            \ifthenelse{\i = 4} {
                                \begin{scope}[xshift=\x cm]
                                    \draw (-0.22, -0.8) rectangle (0.22, 0.2);
                                \end{scope}
                            }{}
                            \Vertex[x=\x, y=0, Lpos=270,Math, L={x_\i^\j}]{x_\i\j}
                        }
                    }
                    \foreach \j in {2,3} {
                        \foreach \i in {1,2,3,4,5} {
                            \pgfmathsetmacro{\x}{(5*(\j-1) + \i)*0.8}
                           
                            \ifthenelse{\i = 1 \OR \i = 4} {
                                \begin{scope}[xshift=\x cm]
                                    \draw (0,-0.2) ellipse (0.3cm and 0.55cm);
                                \end{scope}
                            }{}
                            \Vertex[x=\x, y=0, Lpos=270,Math, L={x_\i^\j}]{x_\i\j}
                        }
                    }
                    \foreach \j in {4} {
                        \foreach \i in {1} {
                            \pgfmathsetmacro{\x}{(5*(\j-1) + \i)*0.8}
                           
                            \ifthenelse{\i = 1} {
                                \begin{scope}[xshift=\x cm]
                                    \draw (-0.22, -0.8) rectangle (0.22, 0.2);
                                \end{scope}
                            }{}
                            \Vertex[x=\x, y=0, Lpos=270,Math, L={x_\i^\j}]{x_\i\j}
                        }
                    }
                    \draw (4.4, -1) -- (4.4,0.5);
                    \draw (8.4, -1) -- (8.4,0.5);
                    \draw (12.4, -1) -- (12.4,0.5);
                    
                    \draw[dashed] (1.38, -0.8) rectangle (2.62, 0.2);
                    %\draw[dashed] (12.58, -0.8) rectangle (13.82, 0.2);

                    \Edges(x_21, x_31, x_41, x_51, x_12, x_22, x_32, x_42, x_52, x_13, x_23, x_33, x_43, x_53, x_14)
                    %\Edge(x_11)(x_45)
                \end{tikzpicture}
        \caption{Case~\ref{case:normalize_ml3} of the proof of Lemma~\ref{lem:normalize_ml}.
        Solid boxes indicate that the boxed vertex is in $S \cap S'$, while solid ellipses indicate vertices in $S'$ but not necessarily in $S$.
        The large dashed box indicates that at least one of the boxed vertices is in the solution.\label{fig:normalize_ml3}}
        \end{figure}
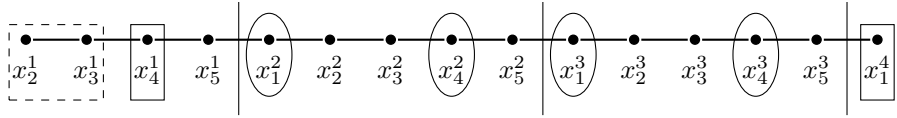
        
        \item{\label{case:normalize_ml4}} If $x_5^1 \notin S$, $x_4^1 \in S$, $N_{S}[x_3^1] \setminus \{x_4^1\} \neq \emptyset$, and $x_2^\tp \in S$, we branch in two subcases.
        \begin{enumerate}
            \item{\label{case:normalize_ml4-1}} If $N_{S}[x_3^\tp] \setminus \{x_2^\tp\} \neq \emptyset$, then proceed as in Case~\ref{case:normalize_ml3}; a very similar argument can be applied here: it suffices to note that $x_1^\tp$ is the unique vertex dominated only by $x_2^\tp$.
            \begin{figure}[!htb]
                \hspace{-1cm}
                    \begin{tikzpicture}[scale=1]
                        %\draw[help lines] (-5,-5) grid (5,5);
                        \GraphInit[unit=3,vstyle=Normal]
                        \SetVertexNormal[Shape=circle, FillColor=black, MinSize=3pt]
                        \tikzset{VertexStyle/.append style = {inner sep = \inners, outer sep = \outers}}
                        \SetVertexLabelOut
                        %\draw (0, 1) -- (0, -1);
                        
                        \foreach \j in {1} {
                            \foreach \i in {2,3,4,5} {
                                \pgfmathsetmacro{\x}{(5*(\j-1) + \i)*0.8}
                               
                                \ifthenelse{\i = 1 \OR \i = 4} {
                                    \begin{scope}[xshift=\x cm]
                                        \draw (-0.22, -0.8) rectangle (0.22, 0.2);
                                    \end{scope}
                                }{}
                                \Vertex[x=\x, y=0, Lpos=270,Math, L={x_\i^\j}]{x_\i\j}
                            }
                        }
                        \foreach \j in {2,3} {
                            \foreach \i in {1,2,3,4,5} {
                                \pgfmathsetmacro{\x}{(5*(\j-1) + \i)*0.8}
                               
                                \ifthenelse{\i = 2 \OR \i = 4} {
                                    \begin{scope}[xshift=\x cm]
                                        \draw (0,-0.2) ellipse (0.3cm and 0.55cm);
                                    \end{scope}
                                }{}
                                \Vertex[x=\x, y=0, Lpos=270,Math, L={x_\i^\j}]{x_\i\j}
                            }
                        }
                        \foreach \j in {4} {
                            \foreach \i in {1,2,3,4} {
                                \pgfmathsetmacro{\x}{(5*(\j-1) + \i)*0.8}
                                \ifthenelse{\i = 2} {
                                    \begin{scope}[xshift=\x cm]
                                        \draw (-0.22, -0.8) rectangle (0.22, 0.2);
                                    \end{scope}
                                }{}
                                \Vertex[x=\x, y=0, Lpos=270,Math, L={x_\i^\j}]{x_\i\j}
                            }
                        }
                        \draw (4.4, -1) -- (4.4,0.5);
                        \draw (8.4, -1) -- (8.4,0.5);
                        \draw (12.4, -1) -- (12.4,0.5);
                        
                        \draw[dashed] (1.38, -0.8) rectangle (2.62, 0.2);
                        \draw[dashed] (14.18, -0.8) rectangle (15.42, 0.2);

                        \Edges(x_21, x_31, x_41, x_51, x_12, x_22, x_32, x_42, x_52, x_13, x_23, x_33, x_43, x_53, x_14, x_24, x_34, x_44)
                        %\Edge(x_11)(x_45)
                    \end{tikzpicture}
            \caption{Case~\ref{case:normalize_ml4-1} of the proof of Lemma~\ref{lem:normalize_ml}.
            Solid boxes indicate that the boxed vertex is in $S \cap S'$, while solid ellipses indicate vertices in $S'$ but not necessarily in $S$.
            The large dashed box indicates that at least one of the boxed vertices is in the solution.\label{fig:normalize_ml4-1}}
            \end{figure}
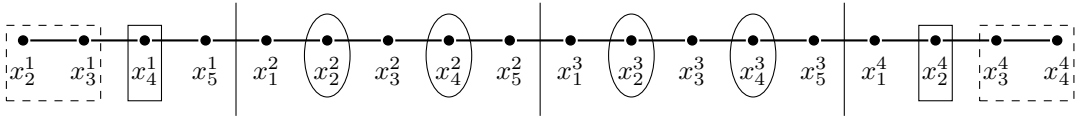
            
            \item{\label{case:normalize_ml4-2}} On the other hand, if $N_{S}[x_3^\tp] = \{x_2^\tp\}$, then we know that $x_5^{\tp - 1} \in S$ (otherwise $x_1^\tp$ is confounded with $x_3^\tp$).
            We again claim that there is some $\varphi_\ell \in \varphi^*(P)$ with at least three vertices in $S$.
            If $x_4^{\tp - 1} \in S$, then $|\varphi_{\tp-1} \cap S| \geq 3$, since at least a third vertex of $\varphi_{\tp-1}$ is required to dominate $x_2^{\tp-1}$.
            Suppose first that $x_3^{\tp-1} \in S$, let $r < \tp - 1$ be the largest index where $\varphi_r \cap S \neq \{x_3^r, x_5^r\}$, and note that: (i) $x_5^r \in S$, otherwise $x_1^{r+1}$ is not dominated, and (ii) either $x_3^r$ or $x_4^r$ must be in $S$, otherwise $x_4^r$ is confounded with $x_1^{r+1}$.
            For the former, we immediately have $|\varphi_r \cap S| \geq 3$, for the latter we conclude that at least one of $\{x_1^r, x_2^r, x_3^r\}$ is in $S$, or $x_2^r$ is not dominated, and, again, $|\varphi_r \cap S| \geq 3$.
            
            Now we may assume that $x_3^{\tp - 1} \notin S$, which implies $x_2^{\tp-1} \in S$, or $x_3^{\tp-1}$ would not be dominated.
            Let $s$ be the largest index where $\varphi_s \cap S \neq \{x_2^s, x_5^s\}$ and note that $x_5^s \in S$, or $x_1^{s+1}$ would be confounded with $x_3^{s+1}$.
            Now, if either $x_2^s \in S$, $x_4^s \in S$, or $x_1^s \in S$, we conclude that $|\varphi_s \cap S| \geq 3$; the first case is obvious, the second implies that at least one of $\{x_1^s, x_2^s, x_3^s\}$ must be in $S$ so $x_2^s$ is dominated, while the third that one of of $\{x_2^s, x_3^s, x_4^s\}$ or $x_3^s$ would not be dominated.
            As such, we conclude that $\varphi_s \cap S = \{x_3^s, x_5^s\}$, but now we may pick $r < s$ to be the largest index where $\varphi_r \cap S \neq \{x_3^r, x_5^r\}$ and, by a very similar argument to the case $\varphi_{\tp - 1} \cap S = \{x_3^{\tp-1}, x_5^{\tp-1}\}$, we conclude that $|\varphi_s \cap S| \geq 3$.
            
            As such, we add $x_2^i$ and $x_4^i$ to $S'$ for every $\varphi_i \in \varphi^*(P)$, and $x_1^\tp$ to $S'$, as in Figure~\ref{fig:normalize_ml4-2}.
            Since there is some $\varphi_\ell \in \varphi^*(P)$ with three vertices in $S$, $|S'| \leq |S|$.
            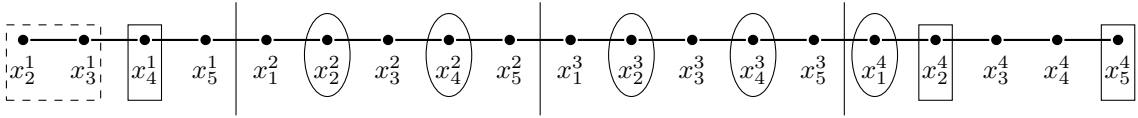
\begin{figure}[!htb]
                \hspace{-1.5cm}
                    \begin{tikzpicture}[scale=1]
                        %\draw[help lines] (-5,-5) grid (5,5);
                        \GraphInit[unit=3,vstyle=Normal]
                        \SetVertexNormal[Shape=circle, FillColor=black, MinSize=3pt]
                        \tikzset{VertexStyle/.append style = {inner sep = \inners, outer sep = \outers}}
                        \SetVertexLabelOut
                        %\draw (0, 1) -- (0, -1);
                        
                        \foreach \j in {1} {
                            \foreach \i in {2,3,4,5} {
                                \pgfmathsetmacro{\x}{(5*(\j-1) + \i)*0.8}
                               
                                \ifthenelse{\i = 1 \OR \i = 4} {
                                    \begin{scope}[xshift=\x cm]
                                        \draw (-0.22, -0.8) rectangle (0.22, 0.2);
                                    \end{scope}
                                }{}
                                \Vertex[x=\x, y=0, Lpos=270,Math, L={x_\i^\j}]{x_\i\j}
                            }
                        }
                        \foreach \j in {2,3} {
                            \foreach \i in {1,2,3,4,5} {
                                \pgfmathsetmacro{\x}{(5*(\j-1) + \i)*0.8}
                               
                                \ifthenelse{\i = 2 \OR \i = 4} {
                                    \begin{scope}[xshift=\x cm]
                                        \draw (0,-0.2) ellipse (0.3cm and 0.55cm);
                                    \end{scope}
                                }{}
                                \Vertex[x=\x, y=0, Lpos=270,Math, L={x_\i^\j}]{x_\i\j}
                            }
                        }
                        \foreach \j in {4} {
                            \foreach \i in {1,2,3,4,5} {
                                \pgfmathsetmacro{\x}{(5*(\j-1) + \i)*0.8}
                                \ifthenelse{\i = 1}{
                                    \begin{scope}[xshift=\x cm]
                                        \draw (0,-0.2) ellipse (0.3cm and 0.55cm);
                                    \end{scope}
                                }{}
                                \ifthenelse{\i = 2 \OR \i = 5} {
                                    \begin{scope}[xshift=\x cm]
                                        \draw (-0.22, -0.8) rectangle (0.22, 0.2);
                                    \end{scope}
                                }{}
                                \Vertex[x=\x, y=0, Lpos=270,Math, L={x_\i^\j}]{x_\i\j}
                            }
                        }
                        \draw (4.4, -1) -- (4.4,0.5);
                        \draw (8.4, -1) -- (8.4,0.5);
                        \draw (12.4, -1) -- (12.4,0.5);
                        
                        \draw[dashed] (1.38, -0.8) rectangle (2.62, 0.2);
                        %\draw[dashed] (12.58, -0.8) rectangle (13.82, 0.2);

                        \Edges(x_21, x_31, x_41, x_51, x_12, x_22, x_32, x_42, x_52, x_13, x_23, x_33, x_43, x_53, x_14, x_24, x_34, x_44, x_54)
                        %\Edge(x_11)(x_45)
                    \end{tikzpicture}
            \caption{Case~\ref{case:normalize_ml4-2} of the proof of Lemma~\ref{lem:normalize_ml}.
            Solid boxes indicate that the boxed vertex is in $S \cap S'$, while solid ellipses indicate vertices in $S'$ but not necessarily in $S$.
            The large dashed box indicates that at least one of the boxed vertices is in the solution.\label{fig:normalize_ml4-2}}
            \end{figure}
        \end{enumerate}
        
        \item{\label{case:normalize_ml-t}} If $x_5^1 \notin S$, $x_4^1 \in S$, and $x_3^\tp \in S$.
        To see that there is some $\varphi_\ell \in \varphi^*(P)$ that satisfies $|\varphi_\ell \cap S| \geq 3$, let $r > 1$ be the largest integer where $\varphi_r \cap S \neq \{x_3^r, x_5^r\}$, and note that $x_5^r \in S$, otherwise $x_5^{r-1}$ is not dominated.
        If $x_3^r \in S$ we are done, but we must have $x_4^r \in S$, otherwise $x_4^r$ is confounded with $x_1^{r-1}$, and that either $x_1^r$ or $x_2^r$ must also be in $S$, or $x_2^r$ would not be dominated by $S$.
        Consequently, if we add $x_1^\tp$ to $S'$, then $\rev(\varphi(P))$ satisfies Case~\ref{case:normalize_ml1}, and have that $|S'| \leq |S|$; we present an example in Figure~\ref{fig:normalize_ml-t}.
        \begin{figure}[!htb]
            \centering
                \begin{tikzpicture}[scale=1]
                    %\draw[help lines] (-5,-5) grid (5,5);
                    \GraphInit[unit=3,vstyle=Normal]
                    \SetVertexNormal[Shape=circle, FillColor=black, MinSize=3pt]
                    \tikzset{VertexStyle/.append style = {inner sep = \inners, outer sep = \outers}}
                    \SetVertexLabelOut
                    %\draw (0, 1) -- (0, -1);
                    
                    \foreach \j in {1} {
                        \foreach \i in {4,5} {
                            \pgfmathsetmacro{\x}{(5*(\j-1) + \i)*0.8}
                           
                            \ifthenelse{\i = 1 \OR \i = 4} {
                                \begin{scope}[xshift=\x cm]
                                    \draw (-0.22, -0.8) rectangle (0.22, 0.2);
                                \end{scope}
                            }{}
                            \Vertex[x=\x, y=0, Lpos=270,Math, L={x_\i^\j}]{x_\i\j}
                        }
                    }
                    \foreach \j in {2,3} {
                        \foreach \i in {1,2,3,4,5} {
                            \pgfmathsetmacro{\x}{(5*(\j-1) + \i)*0.8}
                           
                            \ifthenelse{\i = 1 \OR \i = 4} {
                                \begin{scope}[xshift=\x cm]
                                    \draw (0,-0.2) ellipse (0.3cm and 0.55cm);
                                \end{scope}
                            }{}
                            \Vertex[x=\x, y=0, Lpos=270,Math, L={x_\i^\j}]{x_\i\j}
                        }
                    }
                    \foreach \j in {4} {
                        \foreach \i in {1,2,3,4,5} {
                            \pgfmathsetmacro{\x}{(5*(\j-1) + \i)*0.8}
                            \ifthenelse{\i = 1}{
                                \begin{scope}[xshift=\x cm]
                                    \draw (0,-0.2) ellipse (0.3cm and 0.55cm);
                                \end{scope}
                            }{}
                            \ifthenelse{\i = 3 \OR \i = 5} {
                                \begin{scope}[xshift=\x cm]
                                    \draw (-0.22, -0.8) rectangle (0.22, 0.2);
                                \end{scope}
                            }{}
                            \Vertex[x=\x, y=0, Lpos=270,Math, L={x_\i^\j}]{x_\i\j}
                        }
                    }
                    \draw (4.4, -1) -- (4.4,0.5);
                    \draw (8.4, -1) -- (8.4,0.5);
                    \draw (12.4, -1) -- (12.4,0.5);
                    
                    %\draw[dashed] (1.38, -0.8) rectangle (2.62, 0.2);
                    %\draw[dashed] (12.58, -0.8) rectangle (13.82, 0.2);

                    \Edges(x_41, x_51, x_12, x_22, x_32, x_42, x_52, x_13, x_23, x_33, x_43, x_53, x_14, x_24, x_34, x_44, x_54)
                    %\Edge(x_11)(x_45)
                \end{tikzpicture}
        \caption{Case~\ref{case:normalize_ml-t} of the proof of Lemma~\ref{lem:normalize_ml}.
        Solid boxes indicate that the boxed vertex is in $S \cap S'$, while solid ellipses indicate vertices in $S'$ but not necessarily in $S$.\label{fig:normalize_ml-t}}
        \end{figure}
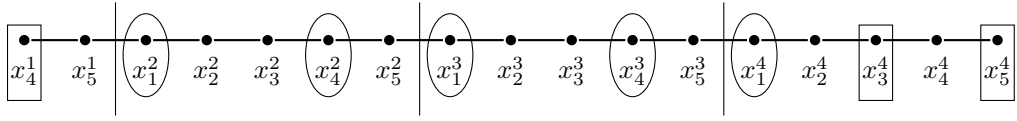
        
        \item{\label{case:normalize_ml5}} If $x_5^1 \notin S$, $N_S[x_3^1] = \{x_4^1\}$, and $x_1^\tp \in S$, then $\rev(\varphi)$ satisfies the condition of Case~\ref{case:normalize_ml1}, i.e. we complete $S'$  by adding $x_1^i$ and $x_4^i$ to it, as shown in Figure~\ref{fig:normalize_ml5}.
        
        \begin{figure}[!htb]
                \centering
                    \begin{tikzpicture}[scale=1]
                        %\draw[help lines] (-5,-5) grid (5,5);
                        \GraphInit[unit=3,vstyle=Normal]
                        \SetVertexNormal[Shape=circle, FillColor=black, MinSize=3pt]
                        \tikzset{VertexStyle/.append style = {inner sep = \inners, outer sep = \outers}}
                        \SetVertexLabelOut
                        %\draw (0, 1) -- (0, -1);
                        
                        \foreach \j in {1} {
                            \foreach \i in {1,2,3,4,5} {
                                \pgfmathsetmacro{\x}{(5*(\j-1) + \i)*0.8}
                               
                                \ifthenelse{\i = 1 \OR \i = 4} {
                                    \begin{scope}[xshift=\x cm]
                                        \draw (-0.22, -0.8) rectangle (0.22, 0.2);
                                    \end{scope}
                                }{}
                                \Vertex[x=\x, y=0, Lpos=270,Math, L={x_\i^\j}]{x_\i\j}
                            }
                        }
                        \foreach \j in {2,3} {
                            \foreach \i in {1,2,3,4,5} {
                                \pgfmathsetmacro{\x}{(5*(\j-1) + \i)*0.8}
                               
                                \ifthenelse{\i = 1 \OR \i = 4} {
                                    \begin{scope}[xshift=\x cm]
                                        \draw (0,-0.2) ellipse (0.3cm and 0.55cm);
                                    \end{scope}
                                }{}
                                \Vertex[x=\x, y=0, Lpos=270,Math, L={x_\i^\j}]{x_\i\j}
                            }
                        }
                        \foreach \j in {4} {
                            \foreach \i in {1} {
                                \pgfmathsetmacro{\x}{(5*(\j-1) + \i)*0.8}
                                \ifthenelse{\i = 0}{
                                    \begin{scope}[xshift=\x cm]
                                        \draw (0,-0.2) ellipse (0.3cm and 0.55cm);
                                    \end{scope}
                                }{}
                                \ifthenelse{\i = 1} {
                                    \begin{scope}[xshift=\x cm]
                                        \draw (-0.22, -0.8) rectangle (0.22, 0.2);
                                    \end{scope}
                                }{}
                                \Vertex[x=\x, y=0, Lpos=270,Math, L={x_\i^\j}]{x_\i\j}
                            }
                        }
                        \draw (4.4, -1) -- (4.4,0.5);
                        \draw (8.4, -1) -- (8.4,0.5);
                        \draw (12.4, -1) -- (12.4,0.5);
                        
                        %\draw[dashed] (1.38, -0.8) rectangle (2.62, 0.2);
                        %\draw[dashed] (12.58, -0.8) rectangle (13.82, 0.2);

                        \Edges(x_11, x_21, x_31, x_41, x_51, x_12, x_22, x_32, x_42, x_52, x_13, x_23, x_33, x_43, x_53, x_14)
                        %\Edge(x_11)(x_45)
                    \end{tikzpicture}
            \caption{Case~\ref{case:normalize_ml5} of the proof of Lemma~\ref{lem:normalize_ml}.
            Solid boxes indicate that the boxed vertex is in $S \cap S'$, while solid ellipses indicate vertices in $S'$ but not necessarily in $S$.
            The large dashed box indicates that at least one of the boxed vertices is in the solution.\label{fig:normalize_ml5}}
            \end{figure}
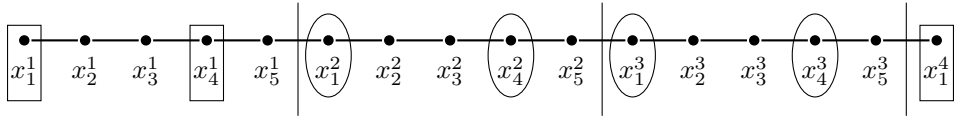
            
        \item{\label{case:normalize_ml6}} If $x_5^1 \notin S$, $N_S[x_3^1] = \{x_4^1\}$, and $x_2^\tp \in S$, then we branch in two subcases.
        
        \begin{enumerate}
            \item{\label{case:normalize_ml6-1}} If $N_S[x_3^\tp] \setminus \{x_2^\tp\} \neq \emptyset$, then $\rev(\varphi(P))$ satisfies Case~\ref{case:normalize_ml4-2}, so we add $x_5^1$, $x_2^i$, and $x_4^i$, for every $\varphi^*(P)$, to $S'$, as exemplified in Figure~\ref{fig:normalize_ml6-1}.
            \begin{figure}[!htb]
                \hspace{-1.5cm}
                    \begin{tikzpicture}[scale=1]
                        %\draw[help lines] (-5,-5) grid (5,5);
                        \GraphInit[unit=3,vstyle=Normal]
                        \SetVertexNormal[Shape=circle, FillColor=black, MinSize=3pt]
                        \tikzset{VertexStyle/.append style = {inner sep = \inners, outer sep = \outers}}
                        \SetVertexLabelOut
                        %\draw (0, 1) -- (0, -1);
                        
                        \foreach \j in {1} {
                            \foreach \i in {1,2,3,4,5} {
                                \pgfmathsetmacro{\x}{(5*(\j-1) + \i)*0.8}
                               
                                \ifthenelse{\i = 5} {
                                    \begin{scope}[xshift=\x cm]
                                        \draw (0,-0.2) ellipse (0.3cm and 0.55cm);
                                    \end{scope}
                                }{}
                                \ifthenelse{\i = 1 \OR \i = 4} {
                                    \begin{scope}[xshift=\x cm]
                                        \draw (-0.22, -0.8) rectangle (0.22, 0.2);
                                    \end{scope}
                                }{}
                                \Vertex[x=\x, y=0, Lpos=270,Math, L={x_\i^\j}]{x_\i\j}
                            }
                        }
                        \foreach \j in {2,3} {
                            \foreach \i in {1,2,3,4,5} {
                                \pgfmathsetmacro{\x}{(5*(\j-1) + \i)*0.8}
                               
                                \ifthenelse{\i = 2 \OR \i = 4} {
                                    \begin{scope}[xshift=\x cm]
                                        \draw (0,-0.2) ellipse (0.3cm and 0.55cm);
                                    \end{scope}
                                }{}
                                \Vertex[x=\x, y=0, Lpos=270,Math, L={x_\i^\j}]{x_\i\j}
                            }
                        }
                        \foreach \j in {4} {
                            \foreach \i in {1,2,3,4} {
                                \pgfmathsetmacro{\x}{(5*(\j-1) + \i)*0.8}
                                \ifthenelse{\i = 0}{
                                    \begin{scope}[xshift=\x cm]
                                        \draw (0,-0.2) ellipse (0.3cm and 0.55cm);
                                    \end{scope}
                                }{}
                                \ifthenelse{\i = 2 \OR \i = 5} {
                                    \begin{scope}[xshift=\x cm]
                                        \draw (-0.22, -0.8) rectangle (0.22, 0.2);
                                    \end{scope}
                                }{}
                                \Vertex[x=\x, y=0, Lpos=270,Math, L={x_\i^\j}]{x_\i\j}
                            }
                        }
                        \draw (4.4, -1) -- (4.4,0.5);
                        \draw (8.4, -1) -- (8.4,0.5);
                        \draw (12.4, -1) -- (12.4,0.5);
                        
                        %\draw[dashed] (1.38, -0.8) rectangle (2.62, 0.2);
                        \draw[dashed] (14.18, -0.8) rectangle (15.42, 0.2);

                        \Edges(x_11, x_21, x_31, x_41, x_51, x_12, x_22, x_32, x_42, x_52, x_13, x_23, x_33, x_43, x_53, x_14, x_24, x_34, x_44)
                        %\Edge(x_11)(x_45)
                    \end{tikzpicture}
            \caption{Case~\ref{case:normalize_ml6-1} of the proof of Lemma~\ref{lem:normalize_ml}.
            Solid boxes indicate that the boxed vertex is in $S \cap S'$, while solid ellipses indicate vertices in $S'$ but not necessarily in $S$.
            The large dashed box indicates that at least one of the boxed vertices is in the solution.\label{fig:normalize_ml6-1}}
            \end{figure}
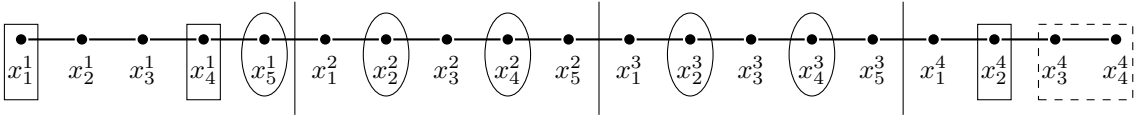
            
            \item{\label{case:normalize_ml6-2}} On the other hand, if $N_{S}[x_3^\tp] = \{x_2^\tp\}$, we again claim that there is some $\varphi_\ell \in \varphi^*(P)$ with at least three vertices in $S$.
            
            To see that this is the case, let $r$ be the smallest integer where $\varphi_r \in \varphi^*(P)$ satisfies $\varphi_r \cap S \neq \{x_1^r, x_4^r\}$, and note that $x_1^r \in S$, otherwise $x_3^{r-1}$ is confounded with $x_5^{r-1}$.
            If $r = \tp - 1$, we have that $x_5^{\tp - 1} \in S$, or $x_1^\tp$ is confounded with $x_3^\tp$.
            Neither $x_1^{\tp - 1}$ nor $x_5^{\tp - 1}$, however, dominate $x_3^3$, so at at least one of $\{x_2^{\tp - 1}, x_3^{\tp - 1}, x_4^{\tp - 1}\}$ must be in $S$, which implies $\varphi_{\tp - 1} \cap S \supset \{x_1^{\tp - 1}, x_5^{\tp - 1}\}$.
            If $r < \tp - 1$ we may assume that $x_4^r \notin S$, or we would be done.
            So either $x_2^r \in S$, which implies that either $x_3^r$ or $x_4^r$ are in $S$ and that $|\varphi_r \cap S| \geq 3$, or $x_3^r \in S$, which we now assume to be the case.
            Let $s$ be the smallest index at least as large as $r$ that has $\varphi_s \cap S \neq \{x_1^s, x_3^s\}$; note that $s < \tp$, otherwise $x_1^\tp$ and $x_3^\tp$ are confounded, and that $x_1^s \in S$, otherwise $x_5^{s - 1}$ is not dominated.
            If $x_3^s \in S$, we are done, since this would imply $|\varphi_s \cap S| \geq 3$, but in this case we need $x_2^s \in S$, otherwise $x_2^s$ and $x_5^{s-1}$ are confounded.
            However, we also require one of $\{x_4^s, x_5^s\}$ to be in $S$, otherwise $x_4^s$ is not dominated by any vertex, so $|\varphi_s \cap S| \geq 3$.
            
            As such, we add $x_1^i$ and $x_4^i$ to $S'$, for every $\varphi_i \in \varphi^*(P)$, and also add $x_1^\tp$ to $S$.
            This maintains $|S'| = |S|$ and that $S'$ is a solution follows from our previous argumentation and that $S$ is a solution; see Figure~\ref{fig:normalize_ml5} for an example.
            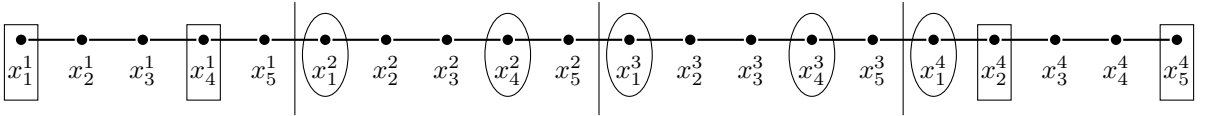
\begin{figure}[!htb]
                \hspace{-1.5cm}
                        \begin{tikzpicture}[scale=1]
                            %\draw[help lines] (-5,-5) grid (5,5);
                            \GraphInit[unit=3,vstyle=Normal]
                            \SetVertexNormal[Shape=circle, FillColor=black, MinSize=3pt]
                            \tikzset{VertexStyle/.append style = {inner sep = \inners, outer sep = \outers}}
                            \SetVertexLabelOut
                            %\draw (0, 1) -- (0, -1);
                            
                            \foreach \j in {1} {
                                \foreach \i in {1,2,3,4,5} {
                                    \pgfmathsetmacro{\x}{(5*(\j-1) + \i)*0.8}
                                   
                                    \ifthenelse{\i = 1 \OR \i = 4} {
                                        \begin{scope}[xshift=\x cm]
                                            \draw (-0.22, -0.8) rectangle (0.22, 0.2);
                                        \end{scope}
                                    }{}
                                    \Vertex[x=\x, y=0, Lpos=270,Math, L={x_\i^\j}]{x_\i\j}
                                }
                            }
                            \foreach \j in {2,3} {
                                \foreach \i in {1,2,3,4,5} {
                                    \pgfmathsetmacro{\x}{(5*(\j-1) + \i)*0.8}
                                   
                                    \ifthenelse{\i = 1 \OR \i = 4} {
                                        \begin{scope}[xshift=\x cm]
                                            \draw (0,-0.2) ellipse (0.3cm and 0.55cm);
                                        \end{scope}
                                    }{}
                                    \Vertex[x=\x, y=0, Lpos=270,Math, L={x_\i^\j}]{x_\i\j}
                                }
                            }
                            \foreach \j in {4} {
                                \foreach \i in {1,2,3,4,5} {
                                    \pgfmathsetmacro{\x}{(5*(\j-1) + \i)*0.8}
                                    \ifthenelse{\i = 1}{
                                        \begin{scope}[xshift=\x cm]
                                            \draw (0,-0.2) ellipse (0.3cm and 0.55cm);
                                        \end{scope}
                                    }{}
                                    \ifthenelse{\i = 2 \OR \i = 5} {
                                        \begin{scope}[xshift=\x cm]
                                            \draw (-0.22, -0.8) rectangle (0.22, 0.2);
                                        \end{scope}
                                    }{}
                                    \Vertex[x=\x, y=0, Lpos=270,Math, L={x_\i^\j}]{x_\i\j}
                                }
                            }
                            \draw (4.4, -1) -- (4.4,0.5);
                            \draw (8.4, -1) -- (8.4,0.5);
                            \draw (12.4, -1) -- (12.4,0.5);
                            
                            %\draw[dashed] (1.38, -0.8) rectangle (2.62, 0.2);
                            %\draw[dashed] (12.58, -0.8) rectangle (13.82, 0.2);

                            \Edges(x_11, x_21, x_31, x_41, x_51, x_12, x_22, x_32, x_42, x_52, x_13, x_23, x_33, x_43, x_53, x_14, x_24, x_34, x_44, x_54)
                            %\Edge(x_11)(x_45)
                        \end{tikzpicture}
                \caption{Case~\ref{case:normalize_ml6-2} of the proof of Lemma~\ref{lem:normalize_ml}.
                Solid boxes indicate that the boxed vertex is in $S \cap S'$, while solid ellipses indicate vertices in $S'$ but not necessarily in $S$.
                The large dashed box indicates that at least one of the boxed vertices is in the solution.\label{fig:normalize_ml6-2}}
            \end{figure}
        \end{enumerate}
        
        \item{\label{case:normalize_ml7}} If neither $x_4^1$ nor $x_5^1$ are in $S$, but $x_3^1 \in S$, and $x_1^\tp \in S$, note that $\rev(\varphi(P))$ satisfies Case~\ref{case:normalize_ml2}, so, for every $\varphi_i \in \varphi^*(P)$, add $x_1^i$ and $x_3^i$ to $S'$.
        If $\{x_3^\tp, x_4^\tp \} \cap S = \emptyset$, i.e. $\rev(\varphi(P))$ satisfies Case~\ref{case:normalize_ml2-2}, add $x_3^\tp$ to $S'$.

        \item{\label{case:normalize_ml8}} If neither $x_4^1$ nor $x_5^1$ are in $S$, but $x_3^1 \in S$, and $x_2^\tp \in S$, then $\rev(\varphi(P))$ satisfies Case~\ref{case:normalize_ml-t}, so by adding $x_5^1$, $x_2^i$, $x_5^i$ to $S'$, for every $\varphi_i \in \varphi^*(P)$, we guarantee that $S'$ is a solution and $|S'| \leq |S|$.
        An example is given in Figure~\ref{fig:normalize_ml8}.
        \begin{figure}[!htb]
            \centering
                \begin{tikzpicture}[scale=1]
                    %\draw[help lines] (-5,-5) grid (5,5);
                    \GraphInit[unit=3,vstyle=Normal]
                    \SetVertexNormal[Shape=circle, FillColor=black, MinSize=3pt]
                    \tikzset{VertexStyle/.append style = {inner sep = \inners, outer sep = \outers}}
                    \SetVertexLabelOut
                    %\draw (0, 1) -- (0, -1);
                    
                    \foreach \j in {1} {
                        \foreach \i in {1,2,3,4,5} {
                            \pgfmathsetmacro{\x}{(5*(\j-1) + \i)*0.8}
                           
                            \ifthenelse{\i = 5} {
                                \begin{scope}[xshift=\x cm]
                                    \draw (0,-0.2) ellipse (0.3cm and 0.55cm);
                                \end{scope}
                            }{}
                            \ifthenelse{\i = 1 \OR \i = 3} {
                                \begin{scope}[xshift=\x cm]
                                    \draw (-0.22, -0.8) rectangle (0.22, 0.2);
                                \end{scope}
                            }{}
                            \Vertex[x=\x, y=0, Lpos=270,Math, L={x_\i^\j}]{x_\i\j}
                        }
                    }
                    \foreach \j in {2,3} {
                        \foreach \i in {1,2,3,4,5} {
                            \pgfmathsetmacro{\x}{(5*(\j-1) + \i)*0.8}
                           
                            \ifthenelse{\i = 2 \OR \i = 5} {
                                \begin{scope}[xshift=\x cm]
                                    \draw (0,-0.2) ellipse (0.3cm and 0.55cm);
                                \end{scope}
                            }{}
                            \Vertex[x=\x, y=0, Lpos=270,Math, L={x_\i^\j}]{x_\i\j}
                        }
                    }
                    \foreach \j in {4} {
                        \foreach \i in {1,2} {
                            \pgfmathsetmacro{\x}{(5*(\j-1) + \i)*0.8}
                            \ifthenelse{\i = 2} {
                                \begin{scope}[xshift=\x cm]
                                    \draw (-0.22, -0.8) rectangle (0.22, 0.2);
                                \end{scope}
                            }{}
                            \Vertex[x=\x, y=0, Lpos=270,Math, L={x_\i^\j}]{x_\i\j}
                        }
                    }
                    \draw (4.4, -1) -- (4.4,0.5);
                    \draw (8.4, -1) -- (8.4,0.5);
                    \draw (12.4, -1) -- (12.4,0.5);
                    
                    %\draw[dashed] (1.38, -0.8) rectangle (2.62, 0.2);
                    %\draw[dashed] (14.18, -0.8) rectangle (15.42, 0.2);

                    \Edges(x_11, x_12, x_31, x_41, x_51, x_12, x_22, x_32, x_42, x_52, x_13, x_23, x_33, x_43, x_53, x_14, x_24)
                    %\Edge(x_11)(x_45)
                \end{tikzpicture}
        \caption{Case~\ref{case:normalize_ml8} of the proof of Lemma~\ref{lem:normalize_ml}.
        Solid boxes indicate that the boxed vertex is in $S \cap S'$, while solid ellipses indicate vertices in $S'$ but not necessarily in $S$.
        The large dashed box indicates that at least one of the boxed vertices is in the solution.\label{fig:normalize_ml8}}
        \end{figure}
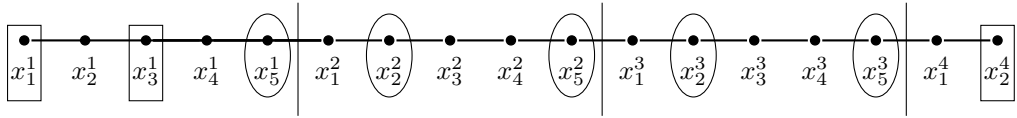
        
        \item{\label{case:normalize_ml9}} If neither $x_4^1$ nor $x_5^1$ are in $S$, but $x_3^1 \in S$, and $x_3^\tp \in S$.
        Towards showing that there is some $\varphi_\ell \in \varphi^*$ with three vertices in $S$, again let $r > 1$ be the smallest index where $\varphi_r \cap S \neq \{x_1^r, x_3^r\}$ and note, once more, that $x_1^r \in S$, otherwise $x_5^{r-1}$ is not dominated.
        Since $x_3^r \notin S$, otherwise we would be done, $x_2^r \in S$, but one more vertex of $\varphi_r$ is required to dominate $x_4^r$.
        As such, by adding $x_5^1$, $x_3^i$, and $x_5^i$, for every $\varphi_i \in \varphi^*(P)$, by Case~\ref{case:normalize_ml2-1}, we have a solution to $(G,d)$, as shown in Figure~\ref{fig:normalize_ml9}.
        \begin{figure}[!htb]
                \hspace{-1.5cm}
                \begin{tikzpicture}[scale=1]
                    %\draw[help lines] (-5,-5) grid (5,5);
                    \GraphInit[unit=3,vstyle=Normal]
                    \SetVertexNormal[Shape=circle, FillColor=black, MinSize=3pt]
                    \tikzset{VertexStyle/.append style = {inner sep = \inners, outer sep = \outers}}
                    \SetVertexLabelOut
                    %\draw (0, 1) -- (0, -1);
                    
                    \foreach \j in {1} {
                        \foreach \i in {1,2,3,4,5} {
                            \pgfmathsetmacro{\x}{(5*(\j-1) + \i)*0.8}
                           
                            \ifthenelse{\i = 5} {
                                \begin{scope}[xshift=\x cm]
                                    \draw (0,-0.2) ellipse (0.3cm and 0.55cm);
                                \end{scope}
                            }{}
                            \ifthenelse{\i = 1 \OR \i = 3} {
                                \begin{scope}[xshift=\x cm]
                                    \draw (-0.22, -0.8) rectangle (0.22, 0.2);
                                \end{scope}
                            }{}
                            \Vertex[x=\x, y=0, Lpos=270,Math, L={x_\i^\j}]{x_\i\j}
                        }
                    }
                    \foreach \j in {2,3} {
                        \foreach \i in {1,2,3,4,5} {
                            \pgfmathsetmacro{\x}{(5*(\j-1) + \i)*0.8}
                           
                            \ifthenelse{\i = 3 \OR \i = 5} {
                                \begin{scope}[xshift=\x cm]
                                    \draw (0,-0.2) ellipse (0.3cm and 0.55cm);
                                \end{scope}
                            }{}
                            \Vertex[x=\x, y=0, Lpos=270,Math, L={x_\i^\j}]{x_\i\j}
                        }
                    }
                    \foreach \j in {4} {
                        \foreach \i in {1,2,3,4,5} {
                            \pgfmathsetmacro{\x}{(5*(\j-1) + \i)*0.8}
                            \ifthenelse{\i = 3 \OR \i = 5} {
                                \begin{scope}[xshift=\x cm]
                                    \draw (-0.22, -0.8) rectangle (0.22, 0.2);
                                \end{scope}
                            }{}
                            \Vertex[x=\x, y=0, Lpos=270,Math, L={x_\i^\j}]{x_\i\j}
                        }
                    }
                    \draw (4.4, -1) -- (4.4,0.5);
                    \draw (8.4, -1) -- (8.4,0.5);
                    \draw (12.4, -1) -- (12.4,0.5);
                    
                    %\draw[dashed] (1.38, -0.8) rectangle (2.62, 0.2);
                    %\draw[dashed] (14.18, -0.8) rectangle (15.42, 0.2);

                    \Edges(x_11, x_12, x_31, x_41, x_51, x_12, x_22, x_32, x_42, x_52, x_13, x_23, x_33, x_43, x_53, x_14, x_24, x_34, x_44, x_54)
                    %\Edge(x_11)(x_45)
                \end{tikzpicture}
        \caption{Case~\ref{case:normalize_ml9} of the proof of Lemma~\ref{lem:normalize_ml}.
        Solid boxes indicate that the boxed vertex is in $S \cap S'$, while solid ellipses indicate vertices in $S'$ but not necessarily in $S$.\label{fig:normalize_ml9}}
        \end{figure}
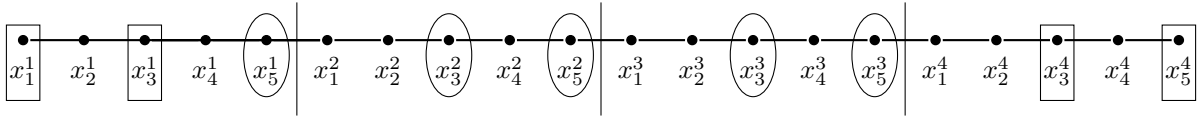
    \end{enumerate}
    The proof is now complete since at least one of $\{x_3^1, x_4^1, x_5^1\}$ and at least one of $\{x_1^\tp, x_2^\tp, x_3^\tp\}$ must be in $S$ and all possible cases have been analyzed.
\end{proof}

With Lemma~\ref{lem:normalize_ml} in hand, we are now ready to state the unique reduction rule of our kernelization algorithm.

\begin{reduction}
    \label{rule:maxleaf}
    Let $(G,d)$ be an instance of \pname{Locating-Dominating Set}.
    If $P \in \mathcal{P}_2(G)$ has at least 20 vertices and $\varphi(P)$ is a 5-sectioning of $P$, remove all vertices in $\varphi^*(P)$ from $G$, add five vertices $X^P = \{x_i^P\}_{i \in [5]}$ to the resulting graph and the edges $x_i^Px_{i+1}^P$, for every $i \in [4]$, $x_5^Px_1^\tp$, and $x_s^1x_1^P$, where $s = |\varphi_1|$; finally, set $d'= d - 2(|\varphi^*(P)| - 1)$.
\end{reduction}

\begin{proof}(of safeness of Rule~\ref{rule:maxleaf})
    Throughout this proof, let $G'$ be the graph obtained by applying Rule~\ref{rule:maxleaf}.
    Let $S$ be a solution to $(G,d)$ to which we have already applied Lemma~\ref{lem:normalize_ml}, i.e. if $\varphi_i \in \varphi^*(P)$ has $\varphi_i \cap S = \{x_a^i, x_b^i\}$, then $\varphi_j \cap S = \{x_a^j, x_b^j\}$ for every $\varphi_j \in \varphi^*(P)$.
    We claim that $S' = (S \cap V(G')) \cup \{x_a^P, x_b^P\}$ is a solution to $(G', d')$.
    To see that this is the case, note that, for every vertex $v \in V(G') \setminus (\{x_s^1, x_1^\tp\} \cup X^P)$, $N_S[v] = N_{S'}[v]$ and, if $x_s^1$ (resp. $x_1^\tp$) has a neighbor in $\varphi_2$ (resp. $\varphi_{\tp-1}$, then it has a neighbor $X^P$ (resp. $X^P$);
    as such, no vertex in $V(G') \setminus X^P$ can be confounded with a vertex of $G'$ under $S'$.
    Moreover, since no vertex in $\varphi^*(P)$ was confounded with another vertex of $G$ under $S$, it holds that the neighborhood of each $x_j^P \in X^P$ in $S'$ is also unique.
    Finally, since $|S| \leq d$, we have $|S'| = |S| - 2(|\varphi^*(P)| - 1) \leq d - 2(|\varphi^*(P)| - 1) \leq d'$.
    
    For the converse, let $S'$ be a solution to $(G', d')$ normalized by Lemma~\ref{lem:normalize_ml} and $S = (S' \setminus X^P) \cup \bigcup_{1 \leq i \leq \tp - 1}\{x_a^i, x_b^i\}$, where $a$ and $b$ satisfy $S' \cap X^P = \{x_a^P, x_b^P\}$.
    We claim that $S$ is a solution to $(G,d)$ because: (i) $N_S[v] = N_{S'}[v]$ for every $v \in V(G) \cap V(G') \setminus \{x_s^1, x_1^\tp\}$; (ii) if $x_s^1$ (resp. $x_1^\tp$) is confounded with $x_{s-2}^1$ (resp. $x_3^\tp$) or $x_2^2$ (resp. $x_4^{\tp-1}$) under $S$, then $x_s^1$ (resp. $x_1^\tp$) would be confounded with $x_{s-2}^1$ (resp. $x_3^\tp$) or $x_2^P$ (resp. $x_4^P$) under $S$; (iii) $x_a^i \varphi^i$ cannot be confounded with $x_b^i \in \varphi_i$ under $S$, since the normalization given by Lemma~\ref{lem:normalize_ml} imposes that no two consecutive vertices of $X^P$ are in $S'$, so no two consecutive vertices of $\varphi^*(P)$ are in $S$; (iv) $x_4^i$ (resp. $x_5^i$) cannot be confounded with $x_1^{i+1}$ (resp. $x_2^{i+1}$) since, by the case analysis of Lemma~\ref{lem:normalize_ml}, either at least one of them is in $S$, or at least one of them has a non-common neighbor in the solution.
    As to the size of $S$, since $|S'| \leq d' = d - 2(|\varphi^*(P) - 1)$ and $|S| = |S'| + 2(|\varphi^*(P) - 1)$, we have $|S| \leq d$, as desired.
\end{proof}

\begin{theorem}
    \label{thm:max_leaf}
    When parameterized by the maximum leaf number, \pname{Locating-Dominating Set} admits a kernel with at most $108k + \floor{\frac{k}{2}}$ vertices that can be computed in $\bigO{k(n + m)}$, where $n = |V(G)|$ and $m = |E(G)|$.
\end{theorem}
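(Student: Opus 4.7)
The plan is to combine Lemma~\ref{lem:ml_bound} (which caps the number of subdivision paths) with an exhaustive application of Rule~\ref{rule:maxleaf} (which caps the length of each such path), and then multiply the two bounds together with $|V(H)|\leq 4k$. First, I would compute $\mathcal{P}_2(G)$ from the input by a single BFS over $G$ starting from $V(H)$, which runs in $\mathcal{O}(n+m)$ time. Then, for every $P\in \mathcal{P}_2(G)$ with $|P|\geq 20$, I would compute a 5-sectioning $\varphi(P)$ in $\mathcal{O}(|P|)$ time and apply Rule~\ref{rule:maxleaf}, whose safeness is already established via Lemma~\ref{lem:normalize_ml}.

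After exhaustive reduction, every $P\in\mathcal{P}_2(G)$ contains at most $|\varphi_1|+5+|\varphi_\tp|$ vertices. Since every valid 5-sectioning satisfies $|\varphi_1|,|\varphi_\tp|\geq 5$, is chosen so that $t(P)$ is maximal, and has inner sections of exactly five vertices, a short case analysis on $|P|\bmod 5$ shows $|\varphi_1|+|\varphi_\tp|\leq 14$, so each remaining path has at most $19$ vertices, meaning Rule~\ref{rule:maxleaf} no longer applies. Combining this with Lemma~\ref{lem:ml_bound} and the host-graph bound $|V(H)|\leq 4k$, the reduced graph contains at most
\[
|V(H)| + 19\cdot |\mathcal{P}_2(G)| \;\leq\; 4k + 19\bigl(5k-1+\lfloor k/2\rfloor\bigr) \;\leq\; 108k + \lfloor k/2\rfloor
\]
vertices, which matches the claimed bound.

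For the running time, the initial BFS costs $\mathcal{O}(n+m)$, each application of Rule~\ref{rule:maxleaf} to a path $P$ costs $\mathcal{O}(|P|)$, and there are only $\mathcal{O}(k)$ paths to treat; loosely charging $\mathcal{O}(n+m)$ to each of the $\mathcal{O}(k)$ per-path reductions (to account for graph-structure updates) yields the stated $\mathcal{O}(k(n+m))$ bound. Because the heavy lifting is entirely in the already-proved Lemmas~\ref{lem:ml_bound} and~\ref{lem:normalize_ml} and in the safeness proof of Rule~\ref{rule:maxleaf}, the main obstacle at this stage is merely bookkeeping: verifying the $|\varphi_1|+|\varphi_\tp|\leq 14$ inequality carefully and confirming the arithmetic that turns $4k+19(5k-1+\lfloor k/2\rfloor)$ into the looser closed form $108k+\lfloor k/2\rfloor$.
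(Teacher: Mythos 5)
Your proposal is correct and follows essentially the same route as the paper: apply Rule~\ref{rule:maxleaf} exhaustively, invoke Lemma~\ref{lem:ml_bound} to bound $|\mathcal{P}_2(G)|$, bound each surviving path by $19$ vertices, and add $|V(H)|\leq 4k$. Your extra observation that $|\varphi_1|+|\varphi_\tp|\leq 14$ (so a single application per path already suffices) is a small refinement of, not a departure from, the paper's argument, and your arithmetic $4k+19(5k-1+\lfloor k/2\rfloor)\leq 108k+\lfloor k/2\rfloor$ checks out.
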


\begin{proof}
    For each path $P \in \mathcal{P}_2(G)$ with at least 20 vertices, apply Rule~\ref{rule:maxleaf}, can easily be done in $\bigO{n+m}$ time.
    By Lemma~\ref{lem:ml_bound}, the rule is applied at most $\bigO{k}$ times, yielding the claimed complexity.
    As to the size of the kernel, again by Lemma~\ref{lem:ml_bound}, there are at most $4k + \floor{\frac{k}{2}} - 1$ paths in $\mathcal{P}_2(G)$ and, since Rule~\ref{rule:maxleaf} is not applicable, each path in $\mathcal{P}_2(G)$ has at most 19 vertices, so we conclude that the reduced instance has at most $4k + 19(5k + \floor{\frac{k}{2}} - 1) = 108k + \floor{\frac{k}{2}} - 1$ vertices.
\end{proof}
\section{Concluding Remarks}
In this work, we settled some algorithmic lower bounds on the natural parameter, showing that the naive $n^\bigO{d}$ time algorithm is optimal under the Exponential Time Hypothesis.
We also conducted a preliminary investigation into kernelization aspects of the \pname{Locating-Dominating Set} problem.
Our main results in this direction are the lower bounds presented in Section~\ref{sec:nokernel}, which together imply that most structural parameterizations of the graph parameter hierarchy~\cite{gpp} do not admit polynomial kernels under the complexity theoretic assumption that $\NP \nsubseteq \coNP/\poly$.
We observe that the bandwidth parameterization offers no hope of a polynomial kernel, since a trivial AND-cross-composition from \pname{Locating-Dominating Set} to itself that simply takes the disjoint union of the input instances suffices.
Of the few remaining structural parameterizations, we show that some polynomial cases do exist by exhibiting a linear kernel under the max leaf number parameterization.
As a primary open problem, we list the situation for the feedback edge set parameter: neither our compositions nor our kernel appear to be easily extended to this parameterization.
We are also interested in lower bounds for the running times of the \FPT\ algorithms, especially since the cliquewidth result described in~\cite{auger2010minimal} relies on Courcelle's Theorem, which yields algorithms whose dependency on the cliquewidth is an exponential tower.
Another viable research direction is the investigation of the \pname{Identifying Code} problem, which is usually studied alongside \pname{Locating-Dominating Set}; in this problem, we also want that vertices \textit{inside} the dominating set have a unique (closed) neighborhood in the solution.
As far as we could check, none of our approaches is applicable to \pname{Identifying Code}, which may yield interesting complexity differences between these sibling problems.
%----------------------------------------------------------------------------------
\bibliographystyle{abbrv}
\bibliography{main}
\end{document}